\newcommand{\upOmega}{\mathrm{\Omega}}
\begin{document}

\title{Auction Design for Value Maximizers with Budget and Return-on-spend Constraints\thanks{All authors (ordered alphabetically) have equal contributions and are corresponding authors.}}
\titlerunning{Auction Design for Value Maximizers}
%
\author{Pinyan Lu\inst{1,2} \and
Chenyang Xu\inst{3} \and
Ruilong Zhang\inst{4}}
\authorrunning{Pinyan Lu, Chenyang Xu and Ruilong Zhang}
%
\institute{ITCS, Shanghai University of Finance and Economics, China \and
Huawei TCS Lab, China \and
Shanghai Key Laboratory of Trustworthy Computing, East China Normal University, China \and
Department of Computer Science and Engineering, University at Buffalo, USA \\
\email{lu.pinyan@mail.shufe.edu.cn, cyxu@sei.ecnu.edu.cn, ruilongz@buffalo.edu}}

%
\maketitle              
\begin{abstract}
The paper designs revenue-maximizing auction mechanisms for agents who aim to maximize their total obtained values rather than the classical quasi-linear utilities. Several models have been proposed to capture the behaviors of such agents in the literature. In the paper, we consider the model where agents are subject to budget and return-on-spend constraints.
The budget constraint of an agent limits the maximum payment she can afford, while the return-on-spend constraint means that the ratio of the total obtained value (return) to the total payment (spend) cannot be lower than the targeted bar set by the agent. 
The problem was first coined by \cite{DBLP:conf/sigecom/BalseiroDMMZ22}. 
In their work, only Bayesian mechanisms were considered. 
We initiate the study of the problem in the worst-case model and compare the revenue of our mechanisms to an offline optimal solution, the most ambitious benchmark. The paper distinguishes two main auction settings based on the accessibility of agents' information: \emph{fully private} and \emph{partially private}. In the fully private setting, an agent's valuation, budget, and target bar are all private. We show that if agents are unit-demand, constant approximation mechanisms can be obtained; while for additive agents, there exists a mechanism that achieves a constant approximation ratio under a large market assumption. The partially private setting is the setting considered in the previous work~\cite{DBLP:conf/sigecom/BalseiroDMMZ22} where only the agents' target bars are private. We show that in this setting, the approximation ratio of the single-item auction can be further improved, and a $\upOmega(1/\sqrt{n})$-approximation mechanism can be derived for additive agents.



\keywords{Auction Design  \and Value Maximizers \and Return-on-spend Constraints.}
\end{abstract}

\section{Introduction}

In an auction with $n$ agents and $m$ items, the auctioneer decides the allocation $\vx=\{x_{ij}\}_{i\in [n],j\in [m]}$ of the items and the agents' payments $\vp=\{p_i\}_{i\in [n]}$.
The agent $i$'s obtained value is usually denoted by a valuation function $v_i$ of the allocation;
while the agent's utility depends on both the obtained value and the payment made to the auctioneer.
Combining the valuation and payment to get the final utility function is a tricky modeling problem. 

In the classic auction theory and the vast majority of literature from the algorithmic game theory community, one uses the quasi-linear utility function $u_i=v_i-p_i$, i.e., the utility is simply the obtained value subtracting the payment. 
This natural definition admits many elegant mathematical properties and thus has been widely investigated in the literature (e.g.~\cite{DBLP:journals/mor/Myerson81,DBLP:journals/geb/NisanR01,DBLP:books/cu/NRTV2007}).
However, as argued in some economical literature~\cite{baisa2017auction,zhou2021multi}, this utility function may fail to capture the agents' behaviors and thus cannot fit reality well in some circumstances. 
In these circumstances, one usually uses a generic function $u=f(v,p)$ (with monotonicity and possibly convexity properties) to model the utility function. 
Such treatment is surely general enough, but usually not explicitly enough to get a clear conclusion. 
In particular, designing non-trivial truthful mechanisms for agents with a generic and inexplicit utility function is difficult.


Is there some other explicit utility function (beyond the quasi-linear one) that appears in some real applications? One simple and well-studied model is agents with budget constraints (e.g.~\cite{DBLP:conf/sigecom/BalseiroKMM20,DBLP:journals/jet/CheG00,DBLP:conf/sigecom/DevanurW17,DBLP:journals/jet/PaiV14}). In this setting, besides the valuation function, an agent $i$ also has a budget constraint $B_i$ for the maximum payment he can make. In the formal language, the utility is
$$
u_i:= \left\{
\begin{aligned}
&v_i -p_i \;\;\;\; & \text{if $p_i\leq B_i$ ,}   \\
&-\infty\;\;\;\; & \text{otherwise.} \\
\end{aligned}
\right.
$$

In mechanism design, the valuation function $v_i(\cdot)$ is considered as the private information for agent $i$. 
Thus the auctioneer needs to design a truthful mechanism to incentivize the agents to report their true information.
For these models beyond the simplest quasi-linear utility, other parameters might be involved in the agents’ utility functions besides the valuation function, such as budget $B$ in the above example. 
For the mechanism design problem faced by the auctioneer, one can naturally ask the question of whether these additional parameters are public information or private. 
Both cases can be studied, and usually, the private information setting is more realistic and, at the same time, much more challenging. 
This is the case for the budget constraint agents. 
Both public budget and private budget models are studied in the literature (e.g.~\cite{DBLP:conf/icalp/DobzinskiL14,DBLP:conf/sigecom/BorgsCIMS05,DBLP:conf/sigecom/ChawlaMM11,DBLP:conf/focs/DobzinskiLN08,DBLP:conf/sigecom/LuX15}).

\paragraph{Value Maximizer.}
The above budget constraint agent is only slightly beyond the quasi-linear model since it is still a quasi-linear function as long as the payment is within the budget. However, it is not uncommon that their objective  is to maximize the valuation alone rather than the difference between valuation and payment for budget-constrained agents. 
This is because in many scenarios the objective/KPI for the department/agent/person who really decides the bidding strategy is indeed the final value obtained. 
On the other hand, they cannot collect the remaining unspent money by themselves anyway, and as a result, they do not care about the payment that much as long as it is within the budget given to them. 
For example, in a company or government's procurement process,  the agent may be only concerned with whether the procurement budget can generate the maximum possible value. We notice that with the development of modern auto-bidding systems, value maximization is becoming the prevalent behavior model for the bidders~\cite{DBLP:conf/wine/AggarwalBM19,DBLP:conf/innovations/Babaioff0HIL21,DBLP:conf/nips/BalseiroDMMZ21,DBLP:conf/www/DengMMZ21}. This motivates the study of value maximizer agents, another interesting explicit non-quasi-linear utility model. In many such applications, there is another return-on-spend (RoS) constraint
$\tau_i$ for each agent $i$ which represents the targeted minimum ratio between the obtained value and the payment and is referred to as the \emph{target ratio} in the following. Formally, the utility function is 
\begin{equation}\label{eq:utility}
    u_i:= \left\{
\begin{aligned}
&\;\;\;v_i \; \;\;\;\;\;\text{if $p_i\leq B_i$ and $p_i\tau_i \leq v_i$ ,} \\
&-\infty\;\;\;\;\text{otherwise.} \\
\end{aligned}
\right.
\end{equation}

As one can see, the value maximizer's utility function is still a function of $v$ and $p$ but with two additional parameters $B$ and $\tau$, which result from two constraints.
Note that the above utility function is identical to that of~\cite{DBLP:conf/sigecom/BalseiroDMMZ22}.
Their paper focused on one particular setting where both value and budget are public information, with RoS parameter $\tau$ being the only single-dimensional private information.
Considering $\tau$ as the only private information helps design better auctions, but it may fail to capture more wide applications.
On top of capturing more practical applications, we 
 consider the setting where all these pieces of information are private, which we call the fully private setting. 
This makes designing an efficient auction for the problem challenging. 
With the focus on the fully private setting, we also consider some partially private settings, for which we can design better mechanisms. 

There are other definitions of value maximizer in the literature, most of which can be viewed as a special case of the above model~\cite{DBLP:journals/corr/abs-2211-16251,DBLP:conf/www/CavalloKSW17}. 
For example, there might be no budget constraint ($B=\infty$) or no RoS constraint. Another example is to combine $\frac{v_i}{\tau_i}$ as a single value (function). 
A mechanism for the fully private setting in our model is automatically a mechanism with the same guarantee in all these other models. 
That is another reason why the fully private setting is the most general one.  

\paragraph{Revenue maximization and benchmarks}
This paper considers the revenue maximization objective for the auctioneer when designing truthful\footnote{A mechanism is \emph{truthful} if for any agent $i$, reporting the true private information always maximizes the utility regardless of other agents’ reported profiles, and the utility of any truthtelling agent is non-negative.} mechanisms for value maximizers.
For the revenue maximization objective, there are usually two benchmarks, called ``\emph{first-best}" and ``\emph{second-best}".
The first-best benchmark refers to the optimal objective we can get if we know all the information. In our setting, it is $\max_{\vx} \sum_i \min \left\{B_i, \frac{v_i(\vx)}{\tau_i}\right\}$.
For the traditional quasi-linear utility function, the first-best benchmark is simply the maximum social welfare one can generate $\max_{\vx} \sum_i v_i(\vx)$. It is proved that such a benchmark is not achievable or even not approximated by a constant ratio in the traditional setting.
Thus the research there is mainly focused on the second-best benchmark. The second-best benchmark refers to the setting where the auctioneer additionally knows the distribution of each agent's private information and designs a mechanism to get the maximum expected revenue with respect to the known distribution. The benchmark in \cite{DBLP:conf/sigecom/BalseiroDMMZ22} is also this second-best benchmark and they provide optimal mechanism when the number of agents is at most two.

It is clear that the first-best benchmark is more ambitious and more robust since it is prior free. 
They focus on the second-best in the traditional setting because the first-best is not even approximable. 
In our new value maximizer agents setting, we believe it is more important to investigate if we can achieve the first-best approximately. 
Thus, we focus on the first-best benchmark in this paper. This is significantly different from that of \cite{DBLP:conf/sigecom/BalseiroDMMZ22}.

\subsection{Our Results}
\label{sec:pre}

\paragraph{Problem Formulation.} 
The formal description of the auction model considered in the paper follows. 
One auctioneer wants to distribute $m$ heterogeneous items among $n$ agents. 
Each agent $i\in [n]$ has a value $v_{ij} $ per unit for each item $j\in [m]$ and a budget $B_i $, representing the maximum amount of money agent $i$ can pay. 
The agent also has a RoS constraint $\tau_i$, representing the minimum ratio of the received value (return) to the total payment (spend) that she can tolerate. 
As mentioned above, several settings of the type (public or private) of $\left( \vB,\vecv,\vtau \right)$ are considered in the paper. 
Agents are value maximizers subject to their budget constraints and RoS constraints (see \cref{eq:utility} for the formula).
The auctioneer aims to design a truthful mechanism that maximizes the total payment.


We investigate our model in a few important auction environments. 
We studied both indivisible and divisible items, both the single-item and the multiple-item auctions. 
When there are multiple items, we consider the two most important valuations: unit demand and additive.  
Unit demand models are the setting where the items are exclusive to each other. 
Additive models are the setting where an agent can get multiple items and their values add up. We leave the more generic valuation function, such as submodular or sub-additive, to future study. 

\begin{table}[tb]\label{tab:results}
\centering
\caption{An overview of our results. We use divisibility ``Y" or ``N" to represent whether items are divisible or not, and use notation $\Leftarrow$ to express the result can be directly implied by the left one. The superscript $\dag$ in the table means that the approximation ratio is obtained under an assumption.}
\begin{tabular}{cccc}
\hline
                             & Divisibility & Fully Private        & Partially Private    \\ \hline
\multirow{2}{*}{Single Item} & Y          &     $\frac{1}{52}$ (\cref{thm:single_all_pvt})                  &   $ \frac{1}{2}-\delta$ (\cref{thm:single})                    \\ \cline{2-4} 
                             & N           &       $\opt$ (\cref{thm:single_indivisible})              &            $\Longleftarrow$          \\ \hline
\multirow{2}{*}{\begin{tabular}[c]{@{}c@{}}Multiple Items with \\ Unit Demand Agents\end{tabular}} & Y & $\upOmega(1)$ (\cref{thm:unit_demand}) & $\Longleftarrow$ \\ \cline{2-4} 
                             & N            &  $\frac{1}{2}$ (\cref{thm:multi_unit_indivisible}) & $\Longleftarrow$ \\ \hline
\multirow{2}{*}{\begin{tabular}[c]{@{}c@{}}Multiple Items with \\ Additive Agents\end{tabular}}    & Y & $\upOmega(1)^\dag$ (\cref{thm:large_market}) & $\upOmega(\frac{1}{\sqrt{n}})$ (\cref{thm:additive})     \\ \cline{2-4} 
                             & N           &        Open              & Open                  \\ \hline
\end{tabular}
\end{table}

In the fully private information setting, we obtain  constant approximation truthful mechanisms for both the single-item auction and the multiple items auction among unit demand agents. 
This is quite surprising given the fact that such a constant approximation to the first-best benchmark is proved to be impossible for the classic quasi-linear utility agents even in the single-item setting. 
The intuitive reason is that the agent is less sensitive to the payment in the value maximizer setting than in the quasi-linear utility setting and thus the auctioneer has a chance to extract more revenue from them.  
But this does not imply that designing a good truthful mechanism is easy. 
Quite the opposite, we need to bring in some new design and analysis ideas since the truthfulness here significantly differs from the traditional one as agents’ utility functions are different.
For the additive valuation, we provide constant approximation only under an additional large market assumption. This is obtained by observing an interesting and surprising relationship between our model and the model of ``liquid welfare for budget-constrained agents". 

We also consider the partially private information setting. 
For the public budget (but private value and target ratio), we obtained an improved constant approximation truthful mechanism for the single-item environment. 
The improved mechanism for the single-item setting has a much better approximation since we cleverly use the public budget information in the mechanism. 
For the additive valuation without the large market assumption, we also investigate it 
in the private target ratio (but public budget and valuation) setting, which is the setting used in \cite{DBLP:conf/sigecom/BalseiroDMMZ22}. we obtained an  $\upOmega(\frac{1}{\sqrt{n}})$ approximation truthful mechanism.  In the additive setting, an agent may get multiple items, and thus the payment she saved from one item can be used for other items, which is an impossible case in the unit demand setting.
Due to this reason, agents may become somewhat more sensitive to payment which leads to an $\upOmega(\frac{1}{\sqrt{n}})$ approximation.



\subsection{Related Works}

The most relevant work is~\cite{DBLP:conf/sigecom/BalseiroDMMZ22}, in which they also aim to design a revenue-maximizing Bayesian mechanism for value maximizers with a generic valuation and utility function under budget and RoS constraints. As mentioned above, they focus on the setting where each agent's only private information is the target ratio, which is referred to as the partially private setting in our paper. They show that under the second-best benchmark, an optimal mechanism can be obtained for the two-agent case.

Another closely related line of work is ``liquid welfare for budget constraint agents"~\cite{DBLP:conf/icalp/DobzinskiL14,DBLP:conf/sagt/LuX17,DBLP:conf/sigecom/LuX15,DBLP:journals/mor/CaragiannisV21}. 
We observe an interesting and surprising relationship between these two models since the liquid welfare benchmark is almost identical to the first-best benchmark in our setting. Therefore, some algorithmic ideas there can be adapted here. 
However, there are two significant differences: (\rom{1}) the objective for the auctioneer is (liquid) welfare rather than revenue. 
This difference mainly affects the approximation;
(\rom{2}) the bidders are quasilinear utility (within the budget constraint) rather than value maximizers.  
This difference mainly affects truthfulness. 
Observing this relation and difference, some auction design ideas from their literature inspire part of our methods.
Furthermore, building deeper connections or ideal black-box reductions between these two models would be an interesting future direction.

The model of budget feasible mechanism~\cite{DBLP:conf/focs/Singer10,DBLP:journals/teco/JalalyT21,DBLP:conf/stoc/BeiCGL12,DBLP:conf/soda/ChenGL11,DBLP:journals/algorithmica/LeonardiMSZ21} also models the agent as a value maximizer rather than a quasi-linear utility maximizer as long as the payment is within the budget. 
The difference is that the value maximizer agent is the auctioneer rather than the bidders.

\subsection{Paper Organization}

In the main body, we focus on the fully private setting, where all the budgets, valuations, and target ratios are private. 
We first consider the single-item auction in \cref{subsec:single_all_pvt} and then extend the algorithmic ideas to the multiple items auction for unit demand agents in \cref{sec:unitdemand}. 
Both of the two environments can be constant-approximated. 
Finally, we turn to the multiple items auction for additive agents, the most challenging environment, and show a constant approximation under an assumption on the budgets in~\cref{sec:main_large}.

For the partially private setting, due to space limit, we defer all the results to the appendix. In~\cref{subsec:single_add}, we show that a better constant approximation for the single item environment can be obtained when the budgets become public. Then we leverage this new mechanism to give an  $\upOmega(\frac{1}{\sqrt{n}})$ approximation for multiple items auction among additive agents in~\cref{subsec:add}.




\section{Warm-up: Single Item Auction}\label{subsec:single_all_pvt}

\begin{algorithm}[tb]
	\caption{\; Single Item Auction }
	\label{alg:single_all_pvt} 	
	\begin{algorithmic}[1] 
		\Require The reported budgets $\{B_i\}_{i\in [n]}$, the reported value profile $\{v_{i}\}_{i\in[n]}$, and the reported target ratios $\{\tau_i\}_{i\in [n]}$.
		\Ensure An allocation and payments.
            \State Initially, set allocation $x_i\leftarrow 0$ $\forall i\in [n]$ and payment $p_i\leftarrow 0$ $\forall i\in [n]$.
		\Begin{With probability of $\frac{9}{13}$,}\Comment{{\color{gray}Indivisibly Selling Procedure}}
            \State Find the agent $k$ with the maximum $\min\left\{ B_k,\frac{v_k}{\tau_k} \right\}$, and break the ties in a fixed manner.
            \State Set $x_k\leftarrow 1, p_k\leftarrow \min\left\{ B_k,\frac{v_k}{\tau_k} \right\}$.
            \End
		\Begin{With probability of $\frac{4}{13}$,} \Comment{{\color{gray}Random Sampling Procedure}}
		\State Randomly divide all the agents with equal probability into set $S$ and $R$. 
		\State Compute the offline optimal solution $\left(\vz^S = \{z_{i}^S\}, \vp(\vz^S)=\{p_i(\vz^S)\}\right)$ of selling the item to the agent subset $S$.
		\State Set the item's reserve price $r \leftarrow \frac{1}{4}\sum_{i\in S}p_i(\vz^S)$.
		\State Let agents in $R$ come in an arbitrarily fixed order. When each agent $i$ comes, use $\alpha$ to denote the remaining fraction of the item, and set $x_{i}\leftarrow \min\left\{\frac{B_i}{r}, \alpha\right\}, p_i\leftarrow x_i\cdot r$ if $r\leq \frac{v_i}{\tau_i}$. 
		\End
		\State \Return Allocation $\{x_{i}\}_{i\in[n]}$ and payments $\{p_i\}_{i\in[n]}$.
	\end{algorithmic}
\end{algorithm}

Let us warm up by considering the environment where the auctioneer has only one item to sell. Our first observation is that if the item is indivisible, we can achieve a truthful optimal solution by directly assigning the item to agent $k$ with the maximum $\min\left\{B_k,\frac{v_k}{\tau_k}\right\}$ and charging her that value. Basically, the first price auction with respect to $ \min\left\{ B_i,\frac{v_i}{\tau_i} \right\}$.
The optimality is obvious. 
For truthfulness, since $ \min\left\{ B_i,\frac{v_i}{\tau_i} \right\}$ is the maximum willingness-to-pay of each agent $i$, if someone other than $k$ misreports the profile and gets assigned the item, one of the two constraints must be violated. 
On the other hand, misreporting a lower profile can only lead to a lower possibility of winning but without any benefit. 

\begin{theorem}\label{thm:single_indivisible}
    There exists a truthful optimal mechanism for the single indivisible item auction.
\end{theorem}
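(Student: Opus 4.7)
\bigskip

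\noindent\textbf{Proof proposal.} The plan is to formalize the first-price-style auction sketched in the paragraph preceding the theorem and then verify optimality and truthfulness separately. Concretely, let $w_i := \min\{B_i, v_i/\tau_i\}$ be agent $i$'s maximum willingness-to-pay under her reported profile, let $k := \arg\max_i w_i$ (ties broken by a fixed rule), allocate the item entirely to $k$, and charge $p_k := w_k$. All other payments are zero.

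For optimality, I would first observe that any feasible mechanism must, on every profile, extract from the unique winning agent $i$ a payment at most $B_i$ (budget) and at most $v_i/\tau_i$ (RoS), hence at most $w_i$; since the item is indivisible, only one agent wins, so the first-best benchmark $\max_{\vx}\sum_i \min\{B_i, v_i(\vx)/\tau_i\}$ collapses to $\max_i w_i$. Our mechanism achieves exactly this value, so it is offline-optimal.

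For truthfulness, the key point I want to exploit is that a value maximizer's utility is $v_i$ whenever she wins and her constraints hold, independently of the exact payment; so the usual quasi-linear incentive to shade down disappears. I would split into two cases based on whether truthtelling wins. If agent $i$ wins truthfully, her utility is $v_i \ge 0$; any deviation that still lets her win yields the same utility $v_i$ (the payment only needs to stay within her true constraints, which by definition of $w_i$ it does whenever she can win with a report whose willingness-to-pay is at most her true $w_i$), and any deviation that causes her to lose gives $0 \le v_i$. If agent $i$ loses truthfully, then some other agent $j$ has reported $w_j \ge w_i$ with the fixed tie-breaking favoring $j$; for $i$ to win, her reported $w_i'$ must strictly exceed $w_j \ge w_i$, forcing the payment $w_i' > w_i = \min\{B_i, v_i/\tau_i\}$, which violates either the true budget or the true RoS constraint and yields utility $-\infty$. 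Hence truthtelling weakly dominates.

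The only mildly delicate point, which I expect to be the main thing to handle carefully rather than the main obstacle, is the deviation where a winning agent misreports to a different profile that still wins: I must verify that the new payment $\min\{B_i', v_i'/\tau_i'\}$ cannot exceed the true $\min\{B_i, v_i/\tau_i\}$ without the agent simultaneously losing, so that no profitable deviation exists and no constraint is violated. Once this case-analysis is in place, combining it with the optimality argument yields the theorem.
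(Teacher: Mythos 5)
Your proposal is correct and follows essentially the same route as the paper: allocate to the agent maximizing $w_i=\min\{B_i,v_i/\tau_i\}$, charge her $w_i$, note that the offline benchmark for one indivisible item collapses to $\max_i w_i$, and argue truthfulness from the facts that winning via an inflated report forces a payment exceeding the true willingness-to-pay (hence utility $-\infty$) while deflated reports can only cause a loss with no gain. The one phrase to tighten is ``any deviation that still lets her win yields the same utility $v_i$'' --- a winning deviation with $w_i'>w_i$ yields $-\infty$ rather than $v_i$ --- but since that is still not an improvement, the conclusion stands exactly as in the paper's argument.
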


The above theorem gives some intuition for the divisible item environment. If the indivisible optimum is at least a constant fraction $c$ of the divisible optimum, selling the item indivisibly can give a constant approximation. We refer to this idea as \emph{indivisibly selling} in the following.
In contrast, for the case that the indivisible optimum is smaller than a constant fraction $c$ of the divisible optimum (denoted by $\opt$ in the following), we have $\min \left\{ B_i,\frac{v_i}{\tau_i} \right\} \leq c \cdot \opt$ for any agent $i$. 
This property implies that the \emph{random sampling} technique can be applied here. More specifically, we randomly divide the agents into two groups, gather information from one group, and then use the information to guide the item's selling price for the agents in the other group.
Since in an optimal solution, each agent does not contribute much to the objective, a constant approximation can be proved by some concentration inequalities
based on the above two strategies, we give our mechanism in~\cref{alg:single_all_pvt}.  


\begin{theorem}\label{thm:single_all_pvt}
    \cref{alg:single_all_pvt} is feasible, truthful, and achieves an expected approximation ratio of $\frac{1}{52}$.
\end{theorem}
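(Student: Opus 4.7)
The plan is to verify feasibility and truthfulness branch by branch and then prove the approximation ratio by a case split on $M := \max_i\min\{B_i,v_i/\tau_i\}$ relative to $\opt$.

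Feasibility is immediate: in the indivisibly selling branch, $p_k = \min\{B_k,v_k/\tau_k\}$ respects both the budget and the RoS constraints; in the random sampling branch, $p_i = x_i r \leq B_i$ since $x_i \leq B_i/r$, and $p_i\tau_i \leq x_i v_i$ since the code charges only when $r \leq v_i/\tau_i$, while $\sum_i x_i \leq 1$ is maintained by the running variable $\alpha$. Truthfulness decomposes because any convex combination of truthful mechanisms is truthful. The indivisible branch is a first-price auction on the score $\min\{B_i,v_i/\tau_i\}$, so the argument used just before \cref{thm:single_indivisible} applies. In the random sampling branch, an agent's partition role ($S$ or $R$) is independent of her report; an agent in $S$ is never allocated and is thus indifferent, while an agent in $R$ faces an effectively posted-price interface in which $r$ depends only on the $S$-reports, the allocation $\min\{B_i/r,\alpha\}$ is monotone in the reported $B_i$ with overreporting punished by a true budget violation, and overreporting $v_i/\tau_i$ to pass the eligibility check $r \leq v_i/\tau_i$ is punished by a true RoS violation.

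For the approximation I fix a threshold $c$ and split into cases. If $M \geq c\opt$, the indivisible branch alone delivers expected revenue $\frac{9}{13}M \geq \frac{9c}{13}\opt$. Otherwise $M < c\opt$, and the goal is to show the random sampling branch produces $\upOmega(\opt)$ in expectation. Let $p_i^*$ denote agent $i$'s payment in the divisible offline optimum, so $\sum_i p_i^*=\opt$ and each $p_i^* \leq M$. Restricting the optimum to $S$ gives $\opt(S) \geq X := \sum_{i \in S} p_i^*$, and similarly $\opt(R) \geq Y$. Since every $p_i^*$ is a small fraction of $\opt$, Hoeffding's inequality concentrates both $X$ and $Y$ near $\opt/2$, and in particular $\min(\opt(S),\opt(R))\geq\opt/4$ with probability close to one. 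The central algorithmic observation is that whenever $\opt(S)\leq\opt(R)$, the condition $\opt(R)\geq\opt(S)/2=2r$ is automatic, and the posted-price stage then exhausts the item: splitting $\opt(R)$ according to whether $v_i/\tau_i\geq r$ and bounding each piece by $B_i$ or by $r\cdot x_i^R$ yields $\opt(R)\leq\sum_{i\in R,\,v_i/\tau_i\geq r}B_i + r$, so $\sum_{i\in R,\,v_i/\tau_i\geq r}B_i\geq r$ and the eligible $R$-agents collectively absorb the full unit at price $r$. Using the symmetry of the random partition to lower-bound $\mathbb{E}[\opt(S)\cdot\mathbf{1}[\opt(S)\leq\opt(R)]]$ by a constant multiple of $\mathbb{E}[\min(\opt(S),\opt(R))]$ and then combining with the concentration bound, the expected sampling revenue is $\upOmega(\opt)$; multiplied by $\frac{4}{13}$ and combined with the Case 1 contribution at a matched threshold $c$ yields $\frac{1}{52}\opt$.

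The main obstacle is the joint calibration of constants. The threshold $c$ must be small enough that the Hoeffding tail with proxy variance $\sum_i(p_i^*)^2 \leq M\cdot\opt \leq c\opt^2$ costs only a small fraction of the concentration slack, yet large enough that $\frac{9c}{13}\opt$ alone is comparable to $\frac{1}{52}\opt$. The reserve scale $\frac{1}{4}$ and the mixing weights $(\frac{9}{13},\frac{4}{13})$ in the algorithm are tuned so that the two case bounds can be made to meet exactly at $\frac{1}{52}\opt$, and it is this tight balance that pins down the specific numerical constants appearing in the theorem and the algorithm.
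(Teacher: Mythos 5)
Your feasibility and truthfulness arguments match the paper's and are fine. The gap is in the approximation analysis of the random-sampling branch. Your "central algorithmic observation" only extracts revenue on the event $\opt(S)\le \opt(R)$ (where you correctly show the item must sell out, yielding revenue $r=\tfrac14\opt(S)$), and you discard the complementary event by paying a symmetry factor of $\tfrac12$ through $\E\bigl[\opt(S)\cdot\mathbf{1}[\opt(S)\le\opt(R)]\bigr]\ge\tfrac12\E[\min(\opt(S),\opt(R))]$. With the algorithm's mixing weights fixed at $(\tfrac{9}{13},\tfrac{4}{13})$ and reserve scale $\tfrac14$, this route cannot be calibrated to $\tfrac1{52}$: even with a perfect concentration event of probability $q$ and window $[\tfrac12\opt-t,\tfrac12\opt+t]$, your case-2 bound is at most $\tfrac{4}{13}\cdot\tfrac14\cdot\tfrac12\,q\,(\tfrac12\opt - t)$, which reaches $\tfrac1{52}\opt$ only in the degenerate limit $q=1$, $t=0$; with the actual concentration available (each $p_i^*\le\tfrac1{36}\opt$ gives probability about $\tfrac34$ for the window $[\tfrac13\opt,\tfrac23\opt]$ — your "close to one" via Hoeffding is optimistic here), you get roughly $\tfrac1{104}\opt$. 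So the final sentence claiming the two cases "meet exactly at $\tfrac1{52}$" is asserted but not achievable by your argument.

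The missing ingredient is the paper's treatment of the \emph{not-sold-out} subcase, which removes any conditioning on which side of the partition is larger. If the item is not exhausted, then every agent in $R$ with $\tfrac{v_i}{\tau_i}\ge r$ has paid her entire budget, so the revenue is at least $\sum_{i\in R:\,v_i/\tau_i\ge r}B_i$. Bounding the restricted optimum on $R$ exactly as you do, $\sum_{i\in R}p_i^*\le \sum_{i\in R:\,v_i/\tau_i\ge r}B_i + r$, and using $r=\tfrac14\opt(S)\le\tfrac14\opt$ together with the concentration lower bound $\sum_{i\in R}p_i^*\ge\tfrac13\opt$, gives revenue at least $\tfrac13\opt-\tfrac14\opt=\tfrac1{12}\opt$; in the sold-out subcase revenue is $r\ge\tfrac14\sum_{i\in S}p_i^*\ge\tfrac1{12}\opt$. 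Hence, conditioned only on the concentration event (probability $\ge\tfrac34$, via the concentration lemma of Chen--Gravin--Lu rather than Hoeffding), the sampling branch earns $\tfrac1{12}\opt$, and $\tfrac{4}{13}\cdot\tfrac34\cdot\tfrac1{12}=\tfrac1{52}$, matching the case-1 threshold $p_l^*\ge\tfrac1{36}\opt$ which gives $\tfrac{9}{13}\cdot\tfrac1{36}=\tfrac1{52}$. Incorporating this budget-exhaustion argument (and dropping the $\opt(S)\le\opt(R)$ conditioning) repairs your proof.
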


\begin{proof}
    The feasibility is obvious. Firstly, since $x_i\leq \alpha$ when each agent $i$ comes, $\sum_{i\in [n]}x_i\leq 1$. Secondly, due to $x_i\leq \frac{B_i}{r}$ for each agent $i$, $p_i=x_i \cdot r \leq B_i$. Thirdly, for each agent $i$, we have $x_iv_i\geq p_i\tau_i$ because an agent buys some fractions of the item and gets charged only if $r\leq \frac{v_i}{\tau_i}$.

    Then we show that regardless of which procedure is executed, the mechanism is truthful. The truthfulness of the first procedure is proved by~\cref{thm:single_indivisible} directly. For the second procedure, we show that agents in neither $S$ nor $R$ have the incentive to lie. For an agent in $S$, she will not be assigned anything, and therefore, misreporting her information cannot improve her utility; while for the agents in $R$, they are also truthtelling because their reported information determines neither the arrival order nor the reserve price, and misreporting a higher $\frac{v_i}{\tau_i}$ (resp. a larger $B_i$) to buy more fractions of the item must violate the RoS (resp. budget) constraint of agent $i$.

    Finally, we analyze the approximation ratio. Let $(\vz^*,\vp^*)$ be an optimal solution. Use $\opt$ and $\alg$ to denote the optimal payment and our total payment, respectively. Without loss of generality, we can assume that $p_i^*=x_i^* \cdot \frac{v_i}{\tau_i} \leq B_i$.
    Clearly, if there exists an agent $l$ with $p_l^*\geq  \frac{1}{36}\opt$, we can easily bound the expected total payment by the first procedure:
    \[\E(\alg) \geq \frac{9}{13} \cdot \min\left\{ B_i,\frac{v_i}{\tau_i} \right\} \geq \frac{9}{13}\cdot \min\left\{ B_l,\frac{v_l}{\tau_l} \right\} \geq \frac{1}{52}\opt. \]

    Otherwise, we have $p_i^* < \frac{1}{36}\opt$ $\forall i\in [n]$. Then according to the concentration lemma proved in~\cite[Lemma 2]{DBLP:journals/mor/ChenGL14}, we can establish the relationship between $\sum_{i\in S}p_i^*$ and $\opt$ in the second procedure:
    \begin{equation}\label{eq:single_all_pvt}
        \Pr\left[\frac{1}{3}\opt\leq \sum_{i\in S}p_i^* \leq \frac{2}{3}\opt\right]\geq \frac{3}{4}.
    \end{equation}
    Namely, with probability of at least $3/4$, both $\sum_{i\in S}p_i^*$ and $\sum_{i\in R}p_i^*$ are in $[\frac{1}{3}\opt,\frac{2}{3}\opt]$. 
    
    Let us focus on the second procedure and consider a subset $S$ such that $\sum_{i\in S}p_i^*\in [\frac{1}{3}\opt,\frac{2}{3}\opt]$. We distinguish two cases based on the final remaining fraction of the item.    
    If the item is sold out, our payment is at least $\frac{1}{4}\sum_{i\in S}p_i(\vz^S)$. Since $(\vx^S,\vp(\vz^S))$ is the optimal solution of distributing the item among the agents in $S$, we have
    \[ \alg \geq \frac{1}{4}\sum_{i\in S}p_i(\vz^S)\geq \frac{1}{4}\sum_{i\in S} p_i^* \geq \frac{1}{12}\opt.   \]

    If the procedure does not sell out the item, for any agent $i\in R$ who does not exhaust the budget, $\frac{v_i}{\tau_i} < r = \frac{1}{4}\sum_{i\in S}p_i(\vz^S)$. Using $T\subseteq R$ to denote such agents, we have
    \begin{equation*}
        \begin{aligned}
             \frac{1}{3}\opt \leq \sum_{i\in R}p_i^* & \leq \sum_{i\in R\setminus T} B_i + \sum_{i\in T} p^*_i \leq \alg + \sum_{i\in T} \frac{v_i}{\tau_i} x_i^* \\
             & \leq \alg +  \frac{1}{4}\sum_{i\in S}p_i(\vz^S)\sum_{i\in T} x_i^*  \leq \alg + \frac{1}{4}\sum_{i\in S}p_i(\vz^S) \\
             & \leq \alg + \frac{1}{4}\opt .
        \end{aligned}
    \end{equation*}
    We have $\alg\ge \frac{1}{12}\opt$ from the above inequality.
    
    Thus, in either case, $\alg$ is at least $\frac{1}{12}\opt$ under such a subset $S$. Then according to~\cref{eq:single_all_pvt}, we can complete the proof:
    \[\E(\alg) \geq \frac{4}{13}\cdot \frac{3}{4} \cdot \frac{1}{12}\opt = \frac{1}{52}\opt.  \]

\end{proof}

\section{Multiple Items Auction for Unit Demand Agents}\label{sec:unitdemand}

This section considers the environment where the auctioneer sells multiple items to unit-demand agents, a set of agents who each desires to buy at most one item. 
We extend the results in the last section and show that a constant approximation can still be obtained. 
Similar to the study of the single-item auction, \cref{subsec:indiv} starts from the indivisible goods environment and shows a $\frac{1}{2}$-approximation.

For the divisible goods environment, our mechanism is also a random combination of the ``indivisibly selling" procedure and the ``random sampling" procedure. However, the mechanism and its analysis are much more complicate than that for single item environment and this section is also the most technical part of this paper. 
We describe the indivisibly selling procedure in~\cref{alg:multi_unit_indivisible}. For the random sampling procedure, the multiple-item setting needs a variant of greedy matching (\cref{alg:greedy}) to compute the reserved prices of each item and \cref{subsec:random} has a discussion about this algorithm. Finally, \cref{subsec:final} analyzes the combined mechanism (\cref{alg:unit_demand}). In order to analyse the approximation ratio of \cref{alg:unit_demand}, we introduce \cref{alg:aux_unit_demand}, a non-truthful mechanism and purely in analysis, to bridge \cref{alg:unit_demand} and \cref{alg:greedy}. 







\subsection{Indivisibly Selling}\label{subsec:indiv}

We first prove the claimed truthful constant approximation in the scenario of selling indivisible items and then give two corollaries to show the performance of applying the indivisibly selling idea to distributing divisible items.

Consider the indivisible goods setting. For each agent-item pair $(i,j)$, define its weight $w_{ij}$ to be the maximum money that we can charge agent $i$ if assigning item $j$ to her, i.e., $w_{ij}=\min\{B_i,\frac{v_{ij}}{\tau_i}\}$. Since items are indivisible and each agent only wants to buy at most one item, a feasible solution is essentially a matching between the agent set and the item set, and the goal is to find a maximum weighted matching.
However, the algorithm to output the maximum weighted matching is not truthful. 
We observe that a natural greedy matching algorithm can return a constant approximation while retaining the truthfulness.
The mechanism is described in~\cref{alg:multi_unit_indivisible}.                

\begin{algorithm}[tb]
	\caption{\; Indivisibly Selling }
	\label{alg:multi_unit_indivisible} 	
	\begin{algorithmic}[1] 
		\Require The reported budgets $\{B_i\}_{i\in [n]}$, the reported value profile $\{v_{ij}\}_{i\in[n],j\in[m]}$ and the reported target ratios $\{\tau_i\}_{i\in [n]}$.
		\Ensure An allocation and payments.
            \State Initially, set allocation $x_{ij}\leftarrow 0$ $\forall i\in [n],j\in [m]$ and payment $p_i\leftarrow 0$ $\forall i\in [n]$.
		\State Sort all the agent-item pairs $\{(i,j)\}_{i\in[n],j\in [m]}$ in the decreasing lexicographical order of $\left(\min\{B_i,\frac{v_{ij}}{\tau_i}\}, v_{ij}\right)$ and break the ties in a fixed manner.
		\For{each agent-item pair $(i,j)$ in the order}
		\State If both agent $i$ and item $j$ have not been matched, match them: $x_{ij}\leftarrow 1, p_i \leftarrow \min \{B_i,\frac{v_{ij}}{\tau_i}\}$.
		\EndFor
		\State \Return Allocation $\{x_{ij}\}_{i\in[n],j\in [m]}$ and payments $\{p_i\}_{i\in[n]}$.
	\end{algorithmic}
\end{algorithm}

\begin{theorem}\label{thm:multi_unit_indivisible}
	\cref{alg:multi_unit_indivisible} is feasible, truthful and achieves an approximation ratio of $1/2$ when items are indivisible.
\end{theorem}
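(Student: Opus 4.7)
The plan is to verify the three claims---feasibility, truthfulness, and the $\tfrac{1}{2}$-approximation---in turn. For feasibility, I will note that the explicit check that agent $i$ and item $j$ are both unmatched before assignment makes the output a valid matching, while setting $p_i=w_{ij}$ with $w_{ij}:=\min\{B_i,v_{ij}/\tau_i\}$ directly gives $p_i\le B_i$ and $p_i\tau_i\le v_{ij}$, so the budget and RoS constraints hold.

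For truthfulness I plan to exploit the monotonicity $v_{ij}\le v_{ij'}\Rightarrow w_{ij}\le w_{ij'}$ for any fixed $B_i,\tau_i$. Together with the lexicographic secondary key $v_{ij}$, this forces agent $i$'s own pairs $(i,\cdot)$ to appear in the global order in decreasing $v$, so her truthful match $j^*$ is the highest-$v_{ij}$ item still free when $(i,j^*)$'s turn arrives. I will then argue that no deviation $(B_i',v_i',\tau_i')$ improves her utility via two building blocks. First, any deviation that matches her to some item $j$ with reported weight $w_{ij}'>w_{ij}$ extracts a payment that strictly violates either her true budget or her true RoS constraint, driving her utility to $-\infty$; hence any profitable deviation must keep $w_{ij}'\le w_{ij}$ on the item she ends up matched to. Second, for any item $k$ with $v_{ik}>v_{ij^*}$, monotonicity gives $w_{ik}\ge w_{ij^*}$, so under truthful reporting $(i,k)$ is processed strictly before $(i,j^*)$; since agent $i$ is still free at that moment, item $k$ must already be occupied by another agent. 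I will then argue, by coupling the truthful and deviating executions, that lowering agent $i$'s own reported weights only weakly advances the times at which items become occupied, so she cannot free $k$ no matter how she deviates. Combining these two observations excludes every profitable deviation.

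For the $\tfrac{1}{2}$-approximation I will invoke the classical greedy-matching charging argument on the weighted bipartite graph with edge weights $w_{ij}$. The first-best benchmark $\opt$ equals the weight of a maximum matching $M^*$ on this graph, since any feasible unit-demand allocation can extract at most $w_{ij}$ from agent $i$ on item $j$ and this bound is tight. The algorithm returns the greedy matching $M$ processed in decreasing weight order; each edge $(i,j)\in M^*\setminus M$ is blocked by some earlier $e\in M$ incident to $i$ or $j$ with $w(e)\ge w_{ij}$, and charging $w_{ij}$ to $e$ gives $w(M^*)\le 2\,w(M)$ because each edge of $M$ has only two endpoints and thus receives at most two charges. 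Therefore $\alg=w(M)\ge w(M^*)/2=\opt/2$.

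The main obstacle I anticipate is making the second observation in the truthfulness analysis fully rigorous, i.e., showing that agent $i$ cannot free a higher-$v$ item $k$ by perturbing the priorities of her own pairs. The intended resolution is a coupling of the truthful and deviating executions that proves by induction on the processing step that every item already occupied in the truthful run by time $t$ is also occupied in the deviating run by time $t$; item $k$ is then still unavailable to her under any deviation. The feasibility and approximation parts are comparatively routine once this is in place.
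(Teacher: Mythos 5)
Your proposal follows essentially the same route as the paper's proof: feasibility because the charge equals $\min\{B_i,v_{ij}/\tau_i\}$, truthfulness because winning a strictly better item forces the pair's rank to improve and hence the reported willingness-to-pay (which is also the payment) to exceed the true one, and the standard greedy-matching charging argument for the factor $1/2$. The feasibility and approximation parts are correct and match the paper.

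On truthfulness, two remarks. First, the obstacle you single out --- showing that agent $i$ cannot free a higher-value item $k$ by perturbing her \emph{other} pairs --- can be dispatched more directly than by your ``occupied by time $t$'' coupling, which as stated is delicate to index (her pairs' positions shift between the two runs) and is not quite the statement you need (you must conclude that $k$ is taken \emph{by another agent} before $(i,k)$ is reached, not merely that it is occupied). The cleaner observation is that until agent $i$ herself is matched she has removed nothing from the pool, so the other agents' greedy process up to that point coincides with the run of the algorithm without agent $i$ and is independent of her report; hence item $j$ is available to her iff the single pair $(i,j)$ precedes, in the global order, the fixed pair of another agent that claims $j$ in that agent-$i$-free run. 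Availability of $j$ therefore depends only on the reported key of $(i,j)$ itself, no cascade is possible, and winning a previously unavailable $k$ forces $(i,k)$'s key to strictly improve. Second, both your argument and the paper's then infer $w'_{ik}>w_{ik}$ from the improved rank; strictly speaking this requires that ties in the weight be broken independently of the report, whereas the algorithm's secondary key is the reported $v_{ij}$, so an agent whose weight is capped at $B_i$ could in principle inflate $v'_{ik}$ to overtake an equal-weight competitor without changing her payment. This loose end is present in the paper's own proof, so it is not a defect specific to your write-up, but the truthfulness claim does implicitly assume report-independent tie-breaking at equal weights.
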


\begin{proof}
	The feasibility is obvious since $\min \{B_i,\frac{v_{ij}}{\tau_i}\}$ is the maximum willingness-to-pay of agent $i$ when adding $(i,j)$ into the matching.      
	To prove the truthfulness, we show that once an agent misreports the profile and obtains a higher value, either the budget constraint or the RoS constraint must be violated. Since the agent-item pairs are sorted in the decreasing lexicographical order of $\left(\min\{B_i,\frac{v_{ij}}{\tau_i}\}, v_{ij}\right)$, the matched item value of agent $i$ is non-increasing when none of the related agent-item pairs are ranked higher. Thus, once the agent misreports a profile $(B_i',
 \vecv_i',\tau_i')$ and gets assigned an item $k$ with a higher value, the rank of pair $(i,k)$ must get improved, implying that $\min\{B_i',\frac{v'_{ik}}{\tau_i'}\} > \min\{B_i,\frac{v_{ik}}{\tau_k}\}$. Since the mechanism charges this agent $\min\{B_i',\frac{v'_{ik}}{\tau_i'}\}$ under the new reported profile, either the budget constraint or the RoS constraint must be unsatisfied.
 
	
	Finally, we prove the approximation ratio by the standard analysis of the greedy matching algorithm. For each pair $(i,j)$ in an optimal matching, there must exist a pair (either $(i,j')$ or $(i',j)$) in the greedy matching whose weight is at least $c_{ij}$. Thus, the maximum matching weight is at most twice the weight of our matching, and \cref{alg:multi_unit_indivisible} gets a $1/2$-approximation.

\end{proof}

Consider a feasible solution $\vz=\{z_{ij}\}_{i\in [n],j\in [m]}$ (not necessarily truthful) for multiple \textbf{divisible} items auction among unit-demand agents. We assume that each unit-demand agent $i$ has at most one variable $z_{ij}>0$,  Define $\cW_j(\vz) := \sum_{i:z_{ij}>0} p_i$ to be the total payments related to item $j$. We observe the following two corollaries.

\begin{corollary}\label{cor:multi_unit_indivisible_1}
	If solution $\vz$ is $\alpha$-approximation and for any item $j$, $\max_{i\in [n]} \min \{\frac{v_{ij}}{\tau_i}z_{ij},B_i\}\geq \beta \cdot \cW_j(\vz)$, then running \cref{alg:multi_unit_indivisible} directly obtains an approximation ratio of $\frac{\alpha \beta}{2}$. 
\end{corollary}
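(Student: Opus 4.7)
The plan is to extract from the fractional solution $\vz$ a one-to-one matching $M$ in the complete bipartite graph on $[n]\cup[m]$ with edge weights $w_{ij}:=\min\{B_i,v_{ij}/\tau_i\}$, and to verify that $w(M)\geq \alpha\beta\cdot\opt$. Once this is done, the proof of \cref{thm:multi_unit_indivisible} already shows that \cref{alg:multi_unit_indivisible} returns a matching of weight at least half of the maximum weighted matching, so its total payment is at least $\tfrac{1}{2}w(M)\geq \tfrac{\alpha\beta}{2}\opt$, which is the claimed guarantee.

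To build $M$, for every item $j$ with $\cW_j(\vz)>0$ pick an agent
\[
i^*(j)\in \arg\max_{i\in[n]}\min\left\{\tfrac{v_{ij}z_{ij}}{\tau_i},B_i\right\};
\]
the hypothesis forces this maximum to be strictly positive, hence $z_{i^*(j)j}>0$ (breaking ties in favour of any maximizer with positive support). Because $\vz$ is unit-demand, each agent $i$ has $z_{ij}>0$ for at most one item, so the map $j\mapsto i^*(j)$ is injective and the edge set $M:=\{(i^*(j),j):\cW_j(\vz)>0\}$ is indeed a matching. Since $z_{i^*(j)j}\leq 1$, one has
\[
w_{i^*(j)j}=\min\left\{B_{i^*(j)},\tfrac{v_{i^*(j)j}}{\tau_{i^*(j)}}\right\}\geq \min\left\{B_{i^*(j)},\tfrac{v_{i^*(j)j}z_{i^*(j)j}}{\tau_{i^*(j)}}\right\}\geq \beta\cdot \cW_j(\vz),
\]
and summing over $j$ gives $w(M)\geq \beta\sum_j \cW_j(\vz)=\beta\sum_i p_i\geq \alpha\beta\cdot\opt$, using that $\vz$ is an $\alpha$-approximation.

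The main thing that could go wrong is the injectivity of $j\mapsto i^*(j)$, since in general a per-item maximizer could be shared across several items and destroy the matching structure. The unit-demand property of $\vz$ is exactly what rules this out, and the hypothesis on $\beta$ then turns the coarse per-item payment bound $\cW_j(\vz)$ into an edge-weight bound that survives the passage from the fractional $v_{ij}z_{ij}/\tau_i$ to the indivisible weight $v_{ij}/\tau_i$. Everything else reduces to the greedy-matching $1/2$-guarantee already proved inside \cref{thm:multi_unit_indivisible}, so no additional machinery is required.
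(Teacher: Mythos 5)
Your proof is correct and follows exactly the route the paper intends for this corollary (which it states without an explicit proof): the per-item hypothesis plus the unit-demand structure of $\vz$ yields an integral matching of weight at least $\beta\sum_j \cW_j(\vz)\geq\alpha\beta\cdot\opt$, and the greedy $1/2$-guarantee from \cref{thm:multi_unit_indivisible} finishes the argument. The injectivity point you flag is indeed the only place where the unit-demand assumption is essential, and you handle it correctly.
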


\begin{corollary}\label{cor:multi_unit_indivisible_2}
	For a constant $\beta\in [0,1]$, define item subset $H(\vz,\beta)\subseteq [m]$ to be the set of items with $\max_{i\in [n]} \min \{\frac{v_{ij}}{\tau_i}z_{ij},B_i\}\geq \beta \cdot \cW_j(\vz)$. Running \cref{alg:multi_unit_indivisible} directly obtains a total payment at least $\frac{\beta}{2}\sum_{j\in H(\vz,\beta)}\cW_j(\vz)$ for any $\beta \in [0,1]$.
\end{corollary}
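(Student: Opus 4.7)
The plan is to exhibit a particular matching $M^*$ between items in $H(\vz, \beta)$ and agents whose total weight (under the greedy edge weights $w_{ij} = \min\{B_i, v_{ij}/\tau_i\}$) is at least $\beta \sum_{j \in H(\vz,\beta)} \cW_j(\vz)$, and then invoke the $1/2$-approximation property of greedy matching already established in the proof of \cref{thm:multi_unit_indivisible}.

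The construction proceeds as follows. For each item $j \in H(\vz,\beta)$ with $\cW_j(\vz) > 0$, pick an agent $i_j \in \arg\max_{i} \min\{(v_{ij}/\tau_i) z_{ij}, B_i\}$. Because the value of this max is at least $\beta \cW_j(\vz) > 0$, any such maximizer must satisfy $z_{i_j j} > 0$. Since $\vz$ is a feasible unit-demand allocation, each agent has at most one positive $z_{ij}$ entry, so the assignment $j \mapsto i_j$ is injective, and $M^* := \{(i_j, j) : j \in H(\vz,\beta),\ \cW_j(\vz) > 0\}$ is a valid matching. Items with $\cW_j(\vz) = 0$ contribute nothing to the target sum and can be dropped.

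To compare weights, observe that the greedy edge weight $\min\{B_{i_j}, v_{i_j j}/\tau_{i_j}\}$ dominates $\min\{B_{i_j}, (v_{i_j j}/\tau_{i_j}) z_{i_j j}\}$ term by term, since $z_{i_j j} \leq 1$. Summing over pairs in $M^*$ yields $\sum_{(i_j, j) \in M^*} w_{i_j j} \geq \beta \sum_{j \in H(\vz,\beta)} \cW_j(\vz)$. Because the greedy matching is a $1/2$-approximation to the maximum weight matching (as argued in the proof of \cref{thm:multi_unit_indivisible}) and $M^*$ is a particular feasible matching, the total weight returned by \cref{alg:multi_unit_indivisible}---which equals its total payment by the payment rule $p_i \leftarrow \min\{B_i, v_{ij}/\tau_i\}$---is at least $\tfrac{1}{2}\,\mathrm{weight}(M^*) \geq \tfrac{\beta}{2}\sum_{j\in H(\vz,\beta)}\cW_j(\vz)$.

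The only delicate step is verifying that $M^*$ is genuinely a matching, and this is precisely where unit-demand is used: without it, a single agent could be the argmax for many items of $H(\vz,\beta)$ simultaneously, so greedy could not ``pay'' for all of them at once. Everything else is bookkeeping on top of \cref{thm:multi_unit_indivisible}.
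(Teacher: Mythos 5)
Your proof is correct and follows the route the paper intends: the corollary is stated without an explicit proof, and the natural derivation is exactly yours — use unit-demand to turn the per-item argmax agents into a feasible matching of weight at least $\beta\sum_{j\in H(\vz,\beta)}\cW_j(\vz)$ (via $z_{i_j j}\le 1$), then apply the factor-$2$ greedy-matching charging argument from the proof of \cref{thm:multi_unit_indivisible}, which holds against any feasible matching, not just the maximum one. You also correctly isolate the one point that needs care, namely injectivity of $j\mapsto i_j$.
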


\subsection{Foundations of Random Sampling}\label{subsec:random}

The subsection explores generalizing the random sampling procedure in~\cref{alg:single_all_pvt} to multiple items auction. 
We first randomly sample half of the agents and investigate how much revenue can be earned per unit of each item if the auctioneer only sells the items to these sampled agents. Recall that the mechanism does not actually distribute any item to the sampled agents. 
Then, the auctioneer sets the reserve price of each item based on the investigated revenues and sells them to all the remaining agents. 
More specifically, let these agents arrive in a fixed order. 
When an agent arrives, she is allowed to buy any remaining fraction of any item as long as she can afford the reserve price. 

It is easy to observe that the mechanism is still truthful according to the same argument in the proof of~\cref{thm:single_all_pvt}: for a sampled agent, she will not be assigned anything, and therefore, she does not have any incentive to lie; while for the agents that do not get sampled, they are also truthtelling because neither the arrival order nor the reserve prices are determined by their reported profiles and a fake profile that can improve the agent's obtained value must violate at least one constraint.

The key condition that random sampling can achieve a constant approximation ratio is that the revenue earned by each item among the sampled agents is (w.h.p.) close to its contribution to the objective in an optimal solution or a constant approximation solution; otherwise, there is no reason that the reserve prices are set based on the investigated revenues. Unfortunately, unlike the single-item environment, we cannot guarantee that an optimal solution of the multiple items auction satisfies this condition. 
Thus, to obtain such a nice structural property, we present an algorithm based on greedy matching and item supply clipping in~\cref{alg:greedy}. Note that this algorithm is untruthful, and we only use it to simulate the auction among the sampled agents. We first prove that it obtains a constant approximation, and then show several nice structural properties of the algorithm.

\begin{algorithm}[tb]
	\caption{\; Greedy Matching and Item Supply Clipping }
	\label{alg:greedy} 	
	\begin{algorithmic}[1] 
		\Require The budgets $\{B_i\}_{i\in [n]}$, the value profile $\{v_{ij}\}_{i\in[n],j\in[m]}$ and the target ratios $\{\tau_i\}_{i\in [n]}$.
		\Ensure An allocation and payments.
            \State Initially, set allocation $x_{ij}\leftarrow 0$ $\forall i\in [n],j\in [m]$ and payment $p_i\leftarrow 0$ $\forall i\in [n]$.
		\State For each agent-item pair $(i,j) \in[n] \times [m]$, define its weight $w_{ij}:=\frac{v_{ij}}{\tau_i}$. Sort all the pairs in the decreasing order of their weights and break the ties in a fixed manner.
		\For{each agent-item pair $(i,j)$ in the order}
		\State If agent $i$ has not bought any item and the remaining fraction $R_j$ of item $j$ is more than $1/2$, $x_{ij}\leftarrow \min\{ R_j, \frac{B_i}{w_{ij}} \}, p_i \leftarrow  w_{ij}x_{ij}$.
		\EndFor
		\State \Return Allocation $\{x_{ij}\}_{i\in[n],j\in [m]}$ and payments $\{p_i\}_{i\in[n]}$.
	\end{algorithmic}
\end{algorithm}

\begin{theorem}\label{thm:greedy}
	The approximation ratio of \cref{alg:greedy} is $1/6$.
\end{theorem}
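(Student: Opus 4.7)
The plan is to bound the first-best benchmark $\opt$ by a constant times the greedy revenue $\alg = \sum_j \cW_j^A$ via a charging argument that walks through \cref{alg:greedy} in its sorted order. Let $(\vz^*, \vp^*)$ denote an optimal divisible assignment in which each unit-demand agent $i$ is matched to at most one item $j_i^*$, contributing $p_i^* = \min\{B_i, w_{ij_i^*} z_{ij_i^*}^*\}$ to $\opt$. Because \cref{alg:greedy} processes agent-item pairs in strictly decreasing order of $w_{ij}$, the moment the pair $(i, j_i^*)$ is examined provides a natural case split that tells us how the OPT contribution $p_i^*$ should be redirected onto the greedy solution.

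First, I would classify each OPT-matched agent as follows. Call $i$ a \emph{G-agent} if, by the time $(i, j_i^*)$ is examined, agent $i$ has already been matched in greedy to some item $j_i^A$ (possibly equal to $j_i^*$) with $w_{ij_i^A} \ge w_{ij_i^*}$; otherwise $i$ is an \emph{NG-agent}, and in that case greedy must have skipped the pair because the remaining supply satisfied $R_{j_i^*} \le 1/2$ at that moment (else the pair would have been matched). For G-agents, a direct case analysis of whether the greedy allocation $x_{ij_i^A}$ saturates the budget cap $B_i / w_{ij_i^A}$ or the supply cap $R_{j_i^A} > 1/2$ shows that $p_i^A$ is at least a constant fraction of $p_i^*$, using $w_{ij_i^A} \ge w_{ij_i^*}$ and $z_{ij_i^*}^* \le 1$. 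Hence the greedy revenue from G-agents alone already dominates a constant fraction of $\sum_{G} p_i^*$.

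For NG-agents, I would charge $p_i^*$ to the greedy revenue $\cW_{j_i^*}^A$ on the item they are matched to in OPT. The key observation is that $R_{j_i^*} \le 1/2$ when the pair is processed implies at least half a unit of $j_i^*$ was already sold in greedy, and every such earlier sale was at weight $\ge w_{ij_i^*}$; this yields $\cW_{j}^A \ge \tfrac{1}{2}\,\max\{w_{ij} : i \text{ is NG with } j_i^* = j\}$. Combining this with the bound $\sum_{i \text{ NG},\, j_i^* = j} p_i^* \le \sum w_{ij} z_{ij}^* \le \max_i w_{ij}$ (using $\sum_i z_{ij}^* \le 1$) gives, when summed over all items $j$, that the entire NG contribution to $\opt$ is within a constant factor of $\alg$.

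The hard part will be the careful bookkeeping of the two kinds of caps that appear on both sides of the inequality. In $p_i^*$ the binding constraint may be the budget $B_i$ or the product $w_{ij_i^*} z_{ij_i^*}^*$; in greedy payments the binding constraint may be $B_i$ or $w_{ij} R_j$. Each combination of caps (OPT vs.\ greedy, G vs.\ NG) produces its own sub-bound, and one must verify that these add up to the claimed ratio $1/6$: heuristically, the clipping threshold $R_j = 1/2$ costs one factor of two, the greedy matching against a unit-demand OPT costs another, and the mixed budget-binding sub-cases add a further constant of slackness that makes $1/6$ the right target rather than a naive $1/4$.
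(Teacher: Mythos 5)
Your decomposition is essentially the paper's: the paper splits the OPT-matched agents of each item into three groups --- those greedy also serves on that item, those blocked because the remaining supply had dropped to $R_j\le 1/2$, and those already matched to a heavier item --- and charges the first and third groups to the agents' own greedy payments and the second to the item's greedy revenue, which is exactly your G/NG split with the same two charging arguments. One remark on your final paragraph: your merged accounting of the G-agents charges each greedy payment $p_i$ only once and therefore yields $\opt\le 4\cdot\alg$, so no additional slackness is needed to reach the stated bound (a $1/4$ guarantee implies the claimed $1/6$); the paper's constant is $6$ only because it bounds the two G-type groups by $2\cdot\alg$ each, separately.
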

\begin{proof}
	
	Use $\left(\vx^*=\{x_{ij}^*\},\vp^*=\{p_i^*\} \right)$ and $\left(\vx=\{x_{ij}\}_{i\in [n], j\in [m]}, \vp=\{p_i\} \right)$ to represent the allocations and the payments in an optimal solution and \cref{alg:greedy}'s solution respectively. Without loss of generality, we can assume that $p_i^*=\sum_{j\in [m]}x_{ij}^*w_{ij}\leq B_i$ for any $i\in [n]$.    
	For each item $j\in [m]$, define $A_j$ to be the set of agents who buy some fractions of item $j$ in the optimal solution, i.e., $A_j := \{ i\in [n] \mid x_{ij}^*>0 \}$, and then based on $\vx$, we partition $A_j$ into three groups:
	\begin{equation*}
		\begin{aligned}
			&A_j^{(1)} = \{ i\in [n]  \mid  x_{ij} > 0 \}, \\
			&A_j^{(2)} = \{ i\in [n]  \mid x_{ij} = 0 \text{ due to } R_j\leq 1/2 \},\\
			&A_j^{(3)} = \{ i\in [n] \mid   x_{ij} = 0 \text{ due to agent $i$ has bought another item}\}.
		\end{aligned}
	\end{equation*}
	Note that if some agent does not buy the item $j$ in $\vx$ due to both of the two reasons,  we add the agent into an arbitrary one of $A_j^{(2)} $ and $A_j^{(3)} $.  
	
	Use $\opt$ and $\alg$ to denote the objective values of the optimal solution and our solution, respectively. Based on the partition mentioned above, we split the optimal objective into three parts:
	\[ \opt = \sum_{i\in [n], j\in [m]} x_{ij}^*w_{ij} =  \sum_{j\in [m]} \sum_{i\in A_j^{(1)}} x_{ij}^*w_{ij} +  \sum_{j\in [m]} \sum_{i\in A_j^{(2)}} x_{ij}^*w_{ij} +\sum_{j\in [m]} \sum_{i\in A_j^{(3)}} x_{ij}^*w_{ij} .\]
	In the following, we analyze the three parts one by one and show that each part is at most twice $\alg$, which implies that $\alg$ is $1/6$ approximation. 
	
	Due to the definition of $A_j^{(1)}$, for each $(i,j)$ pair in the first part, \cref{alg:greedy} assigns some fractions of item $j$ to agent $i$, and therefore, $x_{ij}\geq \min \{\frac{1}{2},\frac{B_i}{w_{ij}}\}$. Since $x_{ij}^*\leq 1$ and we assume w.l.o.g. that $x_{ij}^*\leq \frac{B_{i}}{w_{ij}}$, we have 
	\begin{equation}\label{eq:4-1}
		\begin{aligned}
			\sum_{j\in [m]} \sum_{i\in A_j^{(1)}} x_{ij}^*w_{ij} &\leq  \sum_{j\in [m]} \sum_{i\in A_j^{(1)}} \min \{w_{ij},B_i\} \leq \sum_{j\in [m]} \sum_{i\in A_j^{(1)}} 2w_{ij}\min\{\frac{1}{2},\frac{B_i}{w_{ij}}\} \\
			&\leq \sum_{j\in [m]} \sum_{i\in A_j^{(1)}} 2x_{ij}w_{ij}  \leq 2\alg .
		\end{aligned}
	\end{equation}
	
	For each item $j$ with non-empty $A_j^{(2)}$, \cref{alg:greedy} must sell at least half of the item, and then due to the greedy property of the algorithm, we have
	\[ \sum_{i\in A_j^{(2)} } x_{ij}^*w_{ij} \leq 2\cW_j(\vx), \]
	recalling that $\cW_j(\vx)=\sum_{i:x_{ij}>0}p_i$. Thus,
	\begin{equation}\label{eq:4-2}
		\begin{aligned}
			\sum_{j\in [m]} \sum_{i\in A_j^{(2)}} x_{ij}^*w_{ij} \leq  \sum_{j\in [m]} 2\cW_j(\vx)
			\leq 2\alg .
		\end{aligned}
	\end{equation}
	
	Finally, for each item $j$ and agent $i\in A_j^{(3)}$, suppose that agent $i$ buys some fractions of item $j'$ in solution $\vx$. Due to the greedy property, $w_{ij} \leq w_{ij'}$. Hence,
	\[ x^*_{ij}w_{ij} \leq \min \{B_i,w_{ij'}\} \leq 2 \min\{\frac{B_i}{w_{ij;}},\frac{1}{2}\} w_{ij'} \leq 2x_{ij'}w_{ij'}=2p_i. \]
	Summing over these $(i,j)$ pairs, 
	\begin{equation}\label{eq:4-3}
		\begin{aligned}
			\sum_{j\in [m]} \sum_{i\in A_j^{(3)}} x_{ij}^*w_{ij} = \sum_{i\in [n]} \sum_{j:i\in A_j^{(3)}}  x_{ij}^*w_{ij} \leq \sum_{i\in [n]} 2p_i = 2\alg.
		\end{aligned}
	\end{equation}
	
	Combining \cref{eq:4-1}, \cref{eq:4-2} and \cref{eq:4-3} completes the proof.
	
\end{proof}
\begin{remark}
	Note that the item supply clipping parameter $1/2$ in \cref{alg:greedy} can be replaced by any other constant in $(0,1)$. By setting this parameter to be $\frac{\sqrt{2}}{1+\sqrt{2}}$, the algorithm can get an approximation ratio of $ 3+2\sqrt{2}$.
\end{remark}

For an agent subset $S\subseteq [n]$, use $\left(\vx^{S}, \vp^{S}\right)$ to denote the allocation and the payments if using \cref{alg:greedy} to distribute all the items to agents in $S$. 
We claim the following lemma.

\begin{lemma}\label{thm:greedy_property}
	For any agent subset $S\subseteq [n]$, we have
	\begin{itemize}
		\item agent payment monotonicity: $p_i^{S} \geq \frac{1}{2} p_i$, $\forall i\in S$.
		\item selling revenue monotonicity: $ \cW_j(\vx^{S}) \leq 2\cW_j(\vx)$, $\forall j\in [m]$.
	\end{itemize}    
\end{lemma}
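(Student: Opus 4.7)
The plan is to prove both claims together by coupling the executions of \cref{alg:greedy} on $[n]$ and on $S$ step by step along the common fixed ordering of agent-item pairs, carrying a joint invariant indexed by the processing time $t$. The invariant has two parts: $(\mathrm{I1})$ for every $i \in S$ that has already purchased some item $j^{\mathrm{f}}$ in the full run by time $t$, agent $i$ has also purchased some item $j^{\mathrm{r}}$ in the restricted run by time $t$ with $w_{ij^{\mathrm{r}}} \geq w_{ij^{\mathrm{f}}}$; and $(\mathrm{I2})$ for every item $j$, $\cW_j^{t}(\vx^S) \leq 2\,\cW_j^{t}(\vx)$.

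The base case is immediate. In the inductive step processing the pair $(i_k, j_k)$, the case $i_k \notin S$ is routine, since the restricted run simply skips this pair while the full run can only enlarge its partial revenue. The substantive case is $i_k \in S$ with the full run selling $(i_k, j_k)$. The risk is that, in the restricted run, $i_k$ remains unmatched and $R_{j_k}^S \leq 1/2$, which would break $(\mathrm{I1})$. To rule this out, let $Q \subseteq S$ be the set of agents who bought $j_k$ in the restricted run strictly before $(i_k, j_k)$. Applying $(\mathrm{I1})$ inductively together with the monotonicity of $R_{j_k}^{\mathrm{full}}(\cdot)$ along time and the hypothesis $R_{j_k}^{\mathrm{full}} > 1/2$ at time $t$, a short case analysis forces each $i'' \in Q$ to have purchased exactly $j_k$ in the full run. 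Combined with the per-sale bound $x_{i''j_k}^S \leq 2\,x_{i''j_k}^{\mathrm{full}}$, which follows directly from $R_{j_k}^{\mathrm{full}} > 1/2$ at the pair $(i'', j_k)$ and from $x_{i''j_k}^S \leq 1$, and with $(\mathrm{I2})$ at the previous step, the bookkeeping then closes the loop and delivers $R_{j_k}^S > 1/2$, contradicting the bad scenario.

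Once the invariant is established, the two conclusions of the lemma follow directly. For any $i \in S$ with $p_i > 0$, let $j_i$ and $j_i^S$ be her items in the full and restricted runs; by $(\mathrm{I1})$ we have $w_{ij_i^S} \geq w_{ij_i}$, and since \cref{alg:greedy} only matches when the remaining supply is strictly more than $1/2$,
\begin{equation*}
p_i^S = \min\{R_{j_i^S}^S \cdot w_{ij_i^S},\, B_i\} \geq \min\{w_{ij_i^S}/2,\, B_i\} \geq \min\{w_{ij_i}/2,\, B_i\} \geq \min\{w_{ij_i},\, B_i\}/2 \geq p_i/2,
\end{equation*}
where the last step uses $p_i \leq \min\{w_{ij_i}, B_i\}$. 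The second conclusion $\cW_j(\vx^S) \leq 2\cW_j(\vx)$ is simply $(\mathrm{I2})$ read at the end of the algorithm.

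I expect the hardest piece to be the cascading bookkeeping in the inductive step. Removing agents outside $S$ can push some $S$-agents to switch to higher-weight items in the restricted run, which in turn shifts the supply available for other items; getting the constants to line up across all branches of the cascade is exactly what the factor $2$ in both halves of the invariant is designed to absorb, and its tightness traces back to the $1/2$ clipping threshold hard-coded into \cref{alg:greedy}.
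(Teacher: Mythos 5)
Your plan is essentially the paper's own argument in disguise: your invariant $(\mathrm{I1})$ is exactly the structural fact the paper extracts from its auxiliary lemma (each agent of $S$ buys a weakly heavier item in the restricted run, because whenever the full run's remaining fraction of an item exceeds $1/2$ so does the restricted run's), your payment chain is the same as the paper's, and $(\mathrm{I2})$ read at the end is the second bullet. So the route is not new; the question is whether your inductive step closes, and as written it does not. In the bad scenario you must show $\sum_{i''\in Q} x^S_{i''j_k} < 1/2$, but the ingredients you cite give strictly less: the per-sale bound $x^S_{i''j_k}\le 2x^{\mathrm{full}}_{i''j_k}$ together with $\sum_{i''\in Q} x^{\mathrm{full}}_{i''j_k}<1/2$ (from $R^{\mathrm{full}}_{j_k}>1/2$ at $(i_k,j_k)$) only yields $\sum_{i''\in Q} x^S_{i''j_k}<1$, i.e.\ $R^S_{j_k}>0$, not $>1/2$; and $(\mathrm{I2})$ cannot rescue this, since a revenue comparison does not control fractions when the buyers' weights differ. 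The fix is short but it is a different observation from the one you state: since the full run still has $R^{\mathrm{full}}_{j_k}>1/2$ at the later pair $(i_k,j_k)$, no earlier full-run purchase of $j_k$ can have been supply-limited (that would have driven the remaining fraction to $0$), so every $i''\in Q$ was budget-limited in the full run, $x^{\mathrm{full}}_{i''j_k}=B_{i''}/w_{i''j_k}\ge x^S_{i''j_k}$, and then $\sum_{i''\in Q} x^S_{i''j_k}\le\sum_{i''\in Q} x^{\mathrm{full}}_{i''j_k}<1/2$. This is precisely case (2) in the paper's proof of its auxiliary lemma.

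A second, smaller gap: you never argue that $(\mathrm{I2})$ is actually preserved, yet $(\mathrm{I2})$ at termination \emph{is} the second half of the lemma, so dismissing it as bookkeeping leaves that half unproved. The paper's proof of the revenue bound needs a genuine case split: buyers of item $j$ common to both runs satisfy $x^S_{ij}\le x_{ij}$ except possibly the last full-run buyer (who contributes at most a factor $2$), and buyers that appear only in the restricted run can exist only because the full run had already sold at least half of item $j$ at no smaller per-unit weight, which is what caps their total contribution. You should either carry these facts explicitly in the induction or prove the revenue bound once at the end, as the paper does; with those two repairs your write-up becomes a faithful reorganization of the paper's proof rather than an alternative one.
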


Use $R_j(i,k)$ and $R_j^S(i,k)$ to denote the remaining fractions of item $j$ at the end of pair $(i,k)$'s iteration when running~\cref{alg:greedy} for all the agents and for the agent subset $S$, respectively. Note that if $i\not \in S$, the corresponding iterations are viewed as empty iterations. We first show a key lemma that helps prove the two properties.

\begin{lemma}\label{lem:greedy_property}
    Consider an agent $i$ and let $k$ and $k'$ be the items that she buys in $\vx$ and $\vx^{S}$ respectively. We have $\forall j\in [m]$,
        \[ \max\left\{ R_j(i,k),\frac{1}{2} \right\} \leq \max\left\{ R_j^S(i,k'),\frac{1}{2} \right\}. \]
\end{lemma}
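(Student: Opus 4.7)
The plan is to prove a strengthened inductive invariant comparing the remaining supplies of the two runs iteration-by-iteration, and then to bridge the (possibly different) stopping iterations $(i,k)$ and $(i,k')$ via monotonicity of the supply process. The main obstacle I anticipate is an asymmetric subcase in the induction where only the $S$-run purchases at some iteration: ignoring the $1/2$-clipping, supplies can genuinely diverge there, so the invariant must be stated in clipped form, and the induction hypothesis itself is what rules out the awkward scenario.

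I would first prove by induction on the iteration index $t$ that
\[
\max\bigl\{R_j(t),\tfrac{1}{2}\bigr\}\;\le\;\max\bigl\{R_j^{S}(t),\tfrac{1}{2}\bigr\}\qquad\forall\,j\in[m],
\]
where $R_j(t)$ and $R_j^S(t)$ denote the remaining fractions of item $j$ after $t$ iterations of \cref{alg:greedy} run on $[n]$ and on $S$ respectively. The base case $t=0$ is trivial. For the inductive step at iteration $t+1=(i',j')$, I split on which of the two runs has agent $i'$ purchase $j'$. If neither buys, nothing changes. If only the $[n]$-run buys, $R_{j'}(t+1)\le R_{j'}(t)$ while $R_{j'}^S$ is unchanged, so the invariant only strengthens on the right. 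If both runs buy, both supplies drop by $\min\{\,\cdot\,,B_{i'}/w_{i'j'}\}$, and a short case analysis on whether $R_{j'}(t)\le 1/2$ shows the clipped inequality survives.

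The delicate subcase is when only the $S$-run buys $j'$. If $R_{j'}(t)\le 1/2$, the left-hand side at $j'$ is already $1/2$ and the invariant holds automatically. Otherwise $R_{j'}(t)>1/2$, and the only remaining reason for $i'$ not to buy in $\vx$ is that $i'$ has already purchased some $j''$ earlier in the $[n]$-run. I would rule this out with the inductive hypothesis: at iteration $(i',j'')$ the purchase in $\vx$ forces $R_{j''}>1/2$ just before that iteration, hence by induction $R_{j''}^S>1/2$ there as well; since $i'\in S$ (because $i'$ buys in $\vx^S$ at $(i',j')$) and has not yet purchased in $\vx^S$ before $(i',j')$, and in particular not before the earlier iteration $(i',j'')$, agent $i'$ would actually purchase $j''$ in the $S$-run at $(i',j'')$, contradicting $(i',j')$ being her purchase iteration in $\vx^S$.

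Finally, with the invariant in hand I compare the stopping iterations. I claim that either $k=k'$ or $(i,k')$ strictly precedes $(i,k)$ in the fixed ordering. Indeed, if $i$'s purchase iteration in $\vx^S$ came after $(i,k)$, then just before iteration $(i,k)$ agent $i$ has not yet purchased in the $S$-run; the invariant together with $R_k>1/2$ (from $i$ buying $k$ in $\vx$) then forces $R_k^S>1/2$, so $i$ would buy $k$ in $\vx^S$ at $(i,k)$, giving $k'=k$. Once $(i,k')$ is known to occur no later than $(i,k)$, the monotonicity $R_j(i,k)\le R_j(i,k')$ combined with the invariant applied at iteration $(i,k')$ yields
\[
\max\bigl\{R_j(i,k),\tfrac{1}{2}\bigr\}\;\le\;\max\bigl\{R_j(i,k'),\tfrac{1}{2}\bigr\}\;\le\;\max\bigl\{R_j^S(i,k'),\tfrac{1}{2}\bigr\},
\]
which is the desired inequality.
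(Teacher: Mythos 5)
Your proposal is correct and follows essentially the same route as the paper: the same clipped per-iteration invariant (proved there by a first-violating-pair contradiction rather than explicit induction), the same resolution of the only-$S$-buys subcase via the invariant at the agent's earlier purchase iteration, and the same bridging of the two purchase iterations. The only cosmetic difference is the final chaining, where you use monotonicity of the full run plus the invariant at $(i,k')$, while the paper uses the invariant at $(i,k)$ plus monotonicity of the $S$-run; both are valid.
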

\begin{proof}
    We first show that for any pair $(i,k)$ and any item $j$, 
    \begin{equation}\label{eq:greedy}
        \max\left\{ R_j(i,k),\frac{1}{2} \right\} \leq \max\left\{ R_j^S(i,k),\frac{1}{2} \right\},
    \end{equation}
    
    Assume for contradiction that~\cref{eq:greedy} is violated for some agent-item pairs. Let $(i,k)$ be the first such pair in the order stated in~\cref{alg:greedy}. Notice that in this iteration, only the remaining fraction of item $k$ could change.  We distinguish three cases:
    (1) $x_{ik}^S=0$,
        (2) $x_{ik}^S>0$ and $x_{ik} > 0$,
        and (3) $x_{ik}^S>0$ and $x_{ik} = 0$.
    With some abuse of notation, we use $R_j^-(i,k)$ (resp. $R_j^{S-}(i,k)$) to denote the remaining fraction of item $j$ at the beginning of the iteration.
    
    For case (1), the remaining fraction $R_k^S$ remains unchanged. Thus,
    \[\max\left\{ R_k^-(i,k),\frac{1}{2} \right\}\geq \max\left\{ R_k(i,k),\frac{1}{2} \right\} > \max\left\{ R_k^S(i,k),\frac{1}{2} \right\} = \max\left\{ R_k^{S-}(i,k),\frac{1}{2} \right\}, \]
    contradicting the assumption that $(i,k)$ is the first such pair.

    For case (2), we have $x_{ik}^S=\min\{R_k^{S-}(i,k), \frac{B_i}{w_{ik}} \} $ and $x_{ik}=\min\{R_k^{-}(i,k), \frac{B_i}{w_{ik}} \} $ according to the algorithm. If $x_{ik} = R_k^{-}(i,k)$, then clearly, $R_k^{-}(i,k)$ becomes $0$ and~~\cref{eq:greedy} certainly holds; while if $x_{ik}=\frac{B_i}{w_{ik}}$, we have 
    $x_{ik}^S \leq \frac{B_i}{w_{ik}}= x_{ik},$
    and $ R_k^S(i,k) = R_k^{S-}(i,k) - x_{ik}^S\geq R_k^{-}(i,k)-x_{ik}=R_k(i,k), $
    contradicting the definition of pair $(i,k)$.

    For case (3), if $x_{ik} = 0 $ is due to $R_j^-(i,k)<\frac{1}{2}$, it is impossible that~\cref{eq:greedy} gets violated. Hence, the only reason that $x_{ik} = 0 $, in this case, is that agent $i$ has bought another item $k'$. This implies that in the iteration of pair $(i,k')$, we have $x_{ik'}^S=0$ and $x_{ik'} > 0$. Since agent $i$ had not bought any item that time, the only reason for $x_{ik'}^S=0$ is that $R_{k'}^{S-}(i,k')< \frac{1}{2}$. Due to the definition of $(i,k)$ and the fact that $(i,k')$ is in front of $(i,k)$ in the order, we have
    \[ R_{k'}^{-}(i,k') \leq \max\left\{ R_{k'}^-(i,k'),\frac{1}{2} \right\} \leq \max\left\{ R_{k'}^{S-}(i,k'),\frac{1}{2} \right\} = \frac{1}{2}, \]
    contradicting to $x_{ik'} > 0$.

    Thus, \cref{eq:greedy} holds for any agent-item pair. Then due to the same argument in the analysis of case (3) above, we see that $(i,k')$ must be in front of $(i,k)$ in the order, implying that $R_j^S(i,k)\leq R_j^S(i,k')$. Finally,
    \[ \max\left\{ R_j(i,k),\frac{1}{2} \right\} \leq \max\left\{ R_j^S(i,k),\frac{1}{2} \right\} \leq \max\left\{ R_j^S(i,k'),\frac{1}{2} \right\}. \]

\end{proof}

\begin{proof}[Proof of~\cref{thm:greedy_property}]
       We build on~\cref{lem:greedy_property} to prove the two properties one by one. 

       Consider an agent $i\in S$ and let $k$ and $k'$ be the items that she buys in $\vx$ and $\vx^{S}$ respectively (w.l.o.g., we can assume that each agent always buys something by adding some dummy items with value $0$.).
     Due to~\cref{lem:greedy_property} and the greedy property of \cref{alg:greedy}, we have $ w_{ik'} \geq w_{ik}$. Thus,
     \begin{equation*}
         \begin{aligned}
             p_i^S & = w_{ik'}\cdot x^S_{ik'} \geq w_{ik'} \cdot \min\left\{\frac{B_i}{w_{ik'}}, \frac{1}{2} \right\} \;\;\;\;\;\mbox{[$x^S_{ik'}>0 \Rightarrow R_{k'}^S > \frac{1}{2}$]}  \\
                    & \geq w_{ik} \cdot \frac{1}{2} \cdot  \min\left\{\frac{B_i}{w_{ik}}, 1 \right\} \geq w_{ik}\cdot \frac{1}{2} \cdot x_{ik} \;\;\;\qquad\mbox{[$ R_{k} \leq 1$]} \\
                    & \geq \frac{1}{2}p_i,
         \end{aligned}
     \end{equation*}
	which proves the agent payment monotonicity.
	
 
    Now we prove the selling revenue monotonicity.
    Consider an arbitrary item $j$. Use $A_j^S$ and $A_j$ to denote the agents who buy some fractions of item $j$ in solution $\vx^S$ and $\vx$, respectively. Further, let $l^S$ and $l$ be the last buyer in $A_j^S$ and $A_j$, respectively. According to the assignment rule in the algorithm, for each agent $i\in A_j^S \cap A_j \setminus \{l\}$,  we have
    \begin{equation}\label{eq:sell_revenue}
        x_{ij}^S \leq \frac{B_i}{w_{ij}} = x_{ij};
    \end{equation}
    while for agent $l$, similar to the analysis in the last paragraph,
    \[ x_{lj}^S \leq \min\left\{\frac{B_i}{w_{ij}},1\right\} \leq 2\min\left\{\frac{B_i}{w_{ij}},\frac{1}{2}\right\} \leq 2x_{il}. \]
    Thus, if $A_j^S \subseteq A_j$, clearly, we have
    \[ \cW_j(\vx^S)=\sum_{i\in A_j^S}x_{ij}^Sw_{ij} \leq 2\sum_{i\in A_j^S}x_{ij}w_{ij} \leq 2\cW_j(\vx). \]
    
    It remains to show the case that $A_j^S \setminus A_j \neq \emptyset$. For an agent $i\in A_j^S \setminus A_j $, we have $x_{ij}^S>0$ but $x_{ij}=0$. Again, due to~\cref{lem:greedy_property}, we see the only reason is that in the process of computing solution $\vx$, the remaining fraction of item $j$ in that iteration is less than $1/2$; otherwise, agent $i$ must buy an item with a larger weight in solution $\vx^S$. Then due to the greedy property, we have $\forall i\in A_j^S \setminus A_j $, $w_{ij}\leq \min_{i'\in A_j} w_{i'j} = w_{lj}.$
    Thus, the property can be proved:
    \begin{equation*}
        \begin{aligned}
            \cW_j(\vx^S) & =\sum_{i\in A_j^S\cap A_j \setminus \{l\} }x_{ij}^Sw_{ij} + \sum_{i\in A_j^S\setminus A_j }x_{ij}^Sw_{ij} + x_{lj}^Sw_{lj} \\
            & \leq \sum_{i\in A_j^S\cap A_j \setminus \{l\} }x^S_{ij}w_{ij} + \left(\sum_{i\in \left(A_j^S\setminus A_j \right) \cup \{l\} }x^S_{ij}\right)\cdot w_{lj} \\
            & \leq \sum_{i\in A_j^S\cap A_j \setminus \{l\} }x^S_{ij}w_{ij} + \left(1-\sum_{i\in A_j^S\cap A_j \setminus \{l\}  }x^S_{ij}\right)\cdot w_{lj} \\
            & \leq \sum_{i\in A_j^S\cap A_j \setminus \{l\} }x_{ij}w_{ij} + \left(1-\sum_{i\in A_j^S\cap A_j \setminus \{l\}  }x_{ij}\right)\cdot w_{lj} \;\;\;\;\mbox{[\cref{eq:sell_revenue}]} \\
            & \leq 2\cW_j(\vx),
        \end{aligned}
    \end{equation*}
    where the last inequality used the fact that at least half of the item has been sold out in solution $\vx$.


\end{proof}

By~\cref{thm:greedy_property}, we have the following corollary.
\begin{corollary}\label{cor:greedy}
	Randomly dividing all the agents with equal probability into set $S$ and $R$, we have 
	\[   \E\left(\sum_{j\in [m]} \cW_j(\vx^S) \right)  = \E\left(\sum_{i\in S} p^S_i \right) \geq  \frac{1}{2} \E\left(\sum_{i\in S} p_i \right) = \frac{1}{4} \sum_{i\in [m]} p_i\geq  \frac{1}{4}\sum_{j\in [m]} \cW_j(\vx). \]
\end{corollary}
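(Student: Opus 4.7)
The plan is to verify the four-step chain one link at a time, since each link is already within reach once we quote the monotonicity statements of \cref{thm:greedy_property}. The only thing to be careful about is that \cref{alg:greedy} assigns each agent at most one item; otherwise the conversion between the item-indexed sums $\sum_j \cW_j(\cdot)$ and the agent-indexed sums $\sum_i p_i(\cdot)$ would not be an exact equality.

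For the first equality, I would unfold the definition $\cW_j(\vx^S) = \sum_{i:\, x^S_{ij}>0} p^S_i$ and observe that \cref{alg:greedy} assigns each agent to at most one item (after padding with a dummy zero-value item, at most one $x^S_{ij}$ is positive per $i$). Thus $\{(i,j) : x^S_{ij}>0\}$ partitions the participating agents in $S$ by the item they buy, so $\sum_{j\in[m]} \cW_j(\vx^S) = \sum_{i\in S} p^S_i$. Taking expectations preserves the equality. Exactly the same argument, applied to $\vx$ on the full agent set $[n]$, gives $\sum_{j\in[m]}\cW_j(\vx)=\sum_{i\in[n]} p_i$, which takes care of the last inequality (in fact as an equality, so the ``$\geq$'' is trivially justified).

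For the first inequality, I would invoke the agent payment monotonicity half of \cref{thm:greedy_property}: for every realization of $S$ and every $i \in S$, we have $p^S_i \geq \tfrac{1}{2} p_i$. Summing over $i \in S$ pointwise and then taking expectation gives $\E\bigl[\sum_{i\in S} p^S_i\bigr] \geq \tfrac{1}{2}\,\E\bigl[\sum_{i\in S} p_i\bigr]$. For the middle equality, I would use linearity of expectation together with $\Pr[i\in S]=\tfrac12$ for each $i\in[n]$:
\begin{equation*}
\E\Bigl[\sum_{i\in S} p_i\Bigr] \;=\; \sum_{i\in[n]} p_i \cdot \Pr[i\in S] \;=\; \tfrac{1}{2}\sum_{i\in[n]} p_i,
\end{equation*}
where the $p_i$ are deterministic quantities determined from the full input (independent of $S$). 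Chaining the four steps produces the claimed inequality.

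There is no real obstacle here; the only subtlety is keeping straight which quantities depend on the randomness $S$ (namely $\vx^S$, $\vp^S$, and the set $S$ itself) and which do not ($\vx$, $\vp$, and the $\cW_j(\vx)$), so that the monotonicity bound of \cref{thm:greedy_property} can be applied pointwise before taking the expectation and the linearity step uses the fixed vector $\vp$.
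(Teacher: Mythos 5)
Your proof is correct and follows essentially the same route as the paper, which derives the corollary directly from the agent payment monotonicity of \cref{thm:greedy_property} together with the definitional identity $\sum_{j}\cW_j(\cdot)=\sum_i p_i(\cdot)$ (each agent buys at most one item) and $\Pr[i\in S]=\tfrac12$. Your observation that the final ``$\geq$'' is in fact an equality (and that the paper's index $i\in[m]$ should read $i\in[n]$) is consistent with the paper's intent.
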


\subsection{Final Mechanism}\label{subsec:final}
\begin{algorithm}[htb]
	\caption{\; (\emph{Auxiliary}) Multiple Items Auction for Unit Demand Agents }
	\label{alg:aux_unit_demand} 	
	\begin{algorithmic}[1] 
		\Require The reported budgets $\{B_i\}_{i\in [n]}$, the reported value profile $\{v_{ij}\}_{i\in[n],j\in[m]}$ and the reported target ratios $\{\tau_i\}_{i\in [n]}$.
		\Ensure An allocation and payments.
            \State Initially, set allocation $x_{ij}\leftarrow 0$ $\forall i\in [n],j\in [m]$ and payment $p_i\leftarrow 0$ $\forall i\in [n]$.
		\State With probability of $45/47$, run~\cref{alg:multi_unit_indivisible}. \Comment{{\color{gray}Indivisibly Selling Procedure}}
		\Begin{With probability of $2/47$,} \Comment{{\color{gray}Random Sampling Procedure}}
		\State Run~\cref{alg:greedy} and use $\left(\vz=\{z_{ij}\}, \vp(\vz)=\{p_i(\vz)\} \right)$ to denote the output.
		\State Randomly divide all the agents with equal probability into set $S$ and $R$. 
		\State Run~\cref{alg:greedy} on the sampled set $S$ and use $\left(\vz^S = \{z_{ij}^S\}, \vp(\vz^S)=\{p_i(\vz^S)\}\right)$ to denote the output.
		\State For each item $j$, set the reserve price $r_j \leftarrow \frac{1}{12} \cW_j(\vz^S)$.
		\State Let agents in $R$ come in an arbitrarily fixed order, and when each agent $i$ comes, find the unique item $k(i)$ that she buys in solution $\vz$ and use $R_{k(i)}$ to denote the remaining fraction of the item. Set $x_{i,k(i)}\leftarrow \min\{ R_{k(i)}, \frac{B_i}{r_{k(i)}} \}, p_i \leftarrow  r_{k(i)}x_{i,k(i)}$ if $r_{k(i)}\leq w_{i,k(i)}$.
		\End
		\State \Return Allocation $\{x_{ij}\}_{i\in[n],j\in [m]}$ and payments $\{p_i\}_{i\in[n]}$.
	\end{algorithmic}
\end{algorithm}
This subsection states the final mechanism, which is a random combination of the indivisibly selling idea and the random sampling idea. To streamline the analysis, we first introduce an \emph{auxiliary mechanism} which is constant-approximate but not truthful, and then show it can be altered to a truthful mechanism by losing only a constant factor on the approximation ratio.

\begin{theorem}\label{thm:unit_demand_aux}
	\cref{alg:aux_unit_demand} obtains a constant approximation ratio.
\end{theorem}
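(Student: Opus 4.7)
The plan is to leverage the two branches of \cref{alg:aux_unit_demand}, with \cref{alg:greedy}'s output $\vz$ serving as a common benchmark. By \cref{thm:greedy}, $\sum_j \cW_j(\vz) \geq \opt/6$, so it suffices to lower bound $\E[\alg]$ by a constant fraction of $\sum_j \cW_j(\vz)$. I fix a suitably chosen constant $\beta \in (0,1)$ and partition the items into $H := H(\vz,\beta)$ and $L := [m]\setminus H$ as in \cref{cor:multi_unit_indivisible_2}. Since one of the two subsets accounts for at least half of $\sum_j \cW_j(\vz)$, I distinguish two cases, each handled by one of the branches.

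In the ``concentrated'' case, $\sum_{j\in H}\cW_j(\vz) \geq \frac{1}{2}\sum_j \cW_j(\vz)$, \cref{cor:multi_unit_indivisible_2} immediately gives that the indivisibly-selling branch earns at least $\frac{\beta}{2}\sum_{j \in H}\cW_j(\vz) = \Omega(\opt)$; since this branch is activated with probability $45/47$, $\E[\alg] = \Omega(\opt)$.

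In the ``spread'' case, $\sum_{j\in L}\cW_j(\vz) \geq \frac{1}{2}\sum_j \cW_j(\vz)$, I analyze the random-sampling branch. The aggregate reserve $\sum_j r_j = \frac{1}{12}\sum_j \cW_j(\vz^S)$ is $\Omega(\sum_j \cW_j(\vz))$ in expectation by \cref{cor:greedy}. The key observation converting this reserve into actual revenue is that if a buyer $i$ with $k(i) = j$ in $\vz$ satisfies $p_i(\vz) \geq r_j$, then automatically $w_{ij} \geq p_i(\vz) \geq r_j$ and $B_i \geq p_i(\vz) \geq r_j$; hence if even one such buyer lies in $R$, she alone absorbs the supply of item $j$ at price $r_j$, yielding revenue $r_j$ on item $j$. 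For items in $L$ (which have at least $1/\beta$ small contributors in $\vz$), a Chernoff-style concentration inequality in the spirit of \cite[Lemma~2]{DBLP:journals/mor/ChenGL14} ensures that with positive constant probability both $\cW_j(\vz^S)$ preserves a constant fraction of $\cW_j(\vz)$ (so $r_j$ is of the right order) and at least one qualifying buyer lands in $R$. Summing over $j \in L$ and multiplying by the $2/47$ activation probability then gives $\E[\alg] = \Omega(\opt)$.

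The main technical obstacle lies in the per-item analysis of the spread case: the aggregate guarantees of \cref{cor:greedy} and the per-agent monotonicity of \cref{thm:greedy_property} do not immediately translate into the per-item statements needed, namely a lower bound on $\cW_j(\vz^S)$, the right relation between $r_j$ and the individual $p_i(\vz)$'s (so a ``qualifying'' buyer exists in $\vz$ to begin with), and the survival of such a buyer in $R$ after sampling. Handling these simultaneously requires combining the selling-revenue monotonicity (which caps $\cW_j(\vz^S) \leq 2\cW_j(\vz)$ and thus $r_j \leq \cW_j(\vz)/6$), the $|\text{buyers}(j)| \geq 1/\beta$ structure of items in $L$, and the Chernoff-type concentration applied independently to each such item. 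Depending on whether a ``large-enough'' buyer exists (versus whether small-budget participants collectively exhaust the supply), a short case split may be necessary to carry the per-item bound through. Finally, $\beta$ is tuned so that the $\Omega(\cW_j(\vz))$ recovery per $j \in L$ matches the $\beta/2$ factor of Case~1, and this tuning justifies the specific $45/47 : 2/47$ mixing weights in \cref{alg:aux_unit_demand}.
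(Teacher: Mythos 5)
Your overall skeleton---partition the items via $H(\vz,\beta)$, let the indivisibly-selling branch pay for the concentrated items via \cref{cor:multi_unit_indivisible_2}, let the random-sampling branch pay for the rest, and convert $\sum_j\cW_j(\vz)$ into $\opt$ via \cref{thm:greedy}---is indeed the paper's plan, and your concentrated case is correct. The spread case, however, has two genuine gaps. First, you need a \emph{per-item} lower bound $\cW_j(\vz^S)=\upOmega(\cW_j(\vz))$ so that $r_j$ is ``of the right order,'' but no such bound holds: the concentration lemma applies to $\cW_j(\vz\cap S)=\sum_{i\in S}z_{ij}w_{ij}$, a sum of independent contributions, not to $\cW_j(\vz^S)$, which is the revenue of item $j$ when \cref{alg:greedy} is re-run on $S$. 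Sampled agents who bought $j$ in $\vz$ may divert to other items in the $S$-run (\cref{lem:greedy_property} only bounds remaining fractions from below), so $\cW_j(\vz^S)$ can be far smaller than $\cW_j(\vz)$ for a specific item. Only the aggregate bound of \cref{cor:greedy} is available, and invoking it forces you to subtract the contribution of the $H$-items and of the low-probability bad events; this is exactly the source of the negative term in \cref{lem:unit_demand_procedure_2} and of the $45/47:2/47$ mixing weights, not a cosmetic tuning issue.

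Second, and more fundamentally, the ``single qualifying buyer absorbs the supply'' mechanism cannot carry the spread case. By the definition of $\bH(\vz,\beta)$, \emph{every} buyer of an item $j\in L$ pays $p_i(\vz)=w_{ij}z_{ij}<\beta\cW_j(\vz)$ in $\vz$; if $r_j$ is of order $\cW_j(\vz)$ (which you need for $r_j$ to constitute meaningful revenue), there may be no agent at all with $p_i(\vz)\ge r_j$---for instance when $\cW_j(\vz)$ is spread over many buyers with equally tiny payments---so no choice of $\beta$ and no concentration argument can produce a qualifying buyer in $R$. This is precisely the hard regime, and the paper handles it collectively rather than through one large buyer: per item, either the item sells out (yielding revenue $\frac{1}{12}\cW_j(\vz^S)$), or every $R$-buyer of $j$ who did not exhaust her budget was priced out ($w_{ij}<r_j$), so the budget-exhausting agents' payments cover $\cW_j(\vz\cap R)-\frac{1}{12}\cW_j(\vz^S)$; concentration applied to $\cW_j(\vz\cap R)$ together with the selling-revenue monotonicity $\cW_j(\vz^S)\le 2\cW_j(\vz)$ then turns this into a bound in terms of $\cW_j(\vz^S)$, which is finally aggregated via \cref{cor:greedy}. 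Your closing remark that a case split on ``collective exhaustion'' may be needed points at this, but that sub-case is the entire content of the lemma rather than a short addendum, so as written the proposal does not establish the theorem.
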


Recollect that $H(\vz,\beta):=\{ j\in [m] \mid \max_{i\in [n]} \min \{\frac{v_{ij}}{\tau_i}z_{ij},B_i\}\geq \beta \cdot \cW_j(\vz) \}$ defined in \cref{cor:multi_unit_indivisible_2}.
To prove \cref{thm:unit_demand_aux}, we partition all the items into two sets: $ H(\vz,\frac{1}{144})$ and $\bH(\vz,\frac{1}{144}) = [m] \setminus H(\vz,\frac{1}{144})$. \cref{cor:multi_unit_indivisible_2} directly implies that the first procedure (\cref{alg:multi_unit_indivisible}) guarantees our objective value is at least a constant fraction of $\sum_{j\in H(\vz,\frac{1}{144})} \cW_j(\vz)$. 

\begin{lemma}\label{lem:unit_demand_procedure_1}
	The revenue obtained by the first procedure in \cref{alg:aux_unit_demand} is at least $ \frac{1}{288} \sum_{j\in H(\vz,\frac{1}{144})} \cW_j(\vz)$.
\end{lemma}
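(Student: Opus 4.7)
My plan is to derive this bound as an immediate application of Corollary~\ref{cor:multi_unit_indivisible_2} with the parameter $\beta = \frac{1}{144}$, since the first procedure of Algorithm~\ref{alg:aux_unit_demand} is literally Algorithm~\ref{alg:multi_unit_indivisible} executed on the reported profile. The statement is to be read conditionally on the indivisible-selling branch being chosen, so the probability factor $45/47$ does not enter here; it will only appear when the two branches are combined in the proof of Theorem~\ref{thm:unit_demand_aux}.

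Before quoting the corollary I would first check that the reference solution $\vz$ produced by Algorithm~\ref{alg:greedy} really is a \emph{feasible} (though not truthful) allocation in the sense used by Corollary~\ref{cor:multi_unit_indivisible_2}: each agent $i$ gets at most one item (the greedy loop in Algorithm~\ref{alg:greedy} skips agents that already bought something), the total fraction of each item sold is at most $1$ (since $x_{ij}\le R_j$ and only items with $R_j>1/2$ are used), and each $p_i(\vz)\le B_i$ by the $\min\{R_j, B_i/w_{ij}\}$ rule. Hence $\vz$ legitimately plays the role of the reference solution in Corollary~\ref{cor:multi_unit_indivisible_2}, and the set $H(\vz,1/144)$ is well defined.

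With this in hand, Corollary~\ref{cor:multi_unit_indivisible_2} applied with $\beta=\frac{1}{144}$ gives directly that the total payment produced by Algorithm~\ref{alg:multi_unit_indivisible} is at least
\[
\frac{\beta}{2}\sum_{j\in H(\vz,\beta)}\cW_j(\vz) \;=\; \frac{1}{288}\sum_{j\in H(\vz,1/144)}\cW_j(\vz),
\]
which is exactly the claimed lower bound. I do not foresee a genuine technical obstacle: the lemma is essentially bookkeeping that isolates the contribution of the ``heavy'' items of $\vz$ to the first branch of Algorithm~\ref{alg:aux_unit_demand}, so that the subsequent lemma can focus on $\bH(\vz,1/144)$ via the random-sampling branch. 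The only thing one has to be mildly careful about is the interpretation of ``revenue obtained by the first procedure'' (conditional, not expected), which matches the constant $1/288=\beta/2$ in the target inequality.
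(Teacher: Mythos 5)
Your proof is correct and takes essentially the same route as the paper: the lemma is exactly an application of \cref{cor:multi_unit_indivisible_2} with $\beta=\frac{1}{144}$, interpreted conditionally on the indivisibly-selling branch being executed (the factor $45/47$ only enters later in the proof of \cref{thm:unit_demand_aux}). Your extra check that the reference solution $\vz$ from \cref{alg:greedy} is a feasible unit-demand allocation is sound and is left implicit in the paper.
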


For the second procedure, we show that $\sum_{j\in \bH(\vz,\frac{1}{144})} \cW_j(\vz)$ can be bounded by the total payment obtained by this procedure. More specifically, we prove the following technical lemma.

\begin{lemma}\label{lem:unit_demand_procedure_2}
	The expected revenue obtained by the second procedure in~\cref{alg:aux_unit_demand} is at least 
	\[\frac{1}{192} \sum_{j\in \bH(\vz,\frac{1}{144})} \cW_j(\vz) - \frac{7}{96} \sum_{j\in H(\vz,\frac{1}{144})} \cW_j(\vz).\]
\end{lemma}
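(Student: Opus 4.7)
The plan is to adapt the random-sampling analysis from \cref{thm:single_all_pvt} to the multi-item setting, exploiting the ``no heavy agent'' property of items in $\bH(\vz,1/144)$. Write $W := \sum_{j \in \bH}\cW_j(\vz)$, $W_H := \sum_{j \in H}\cW_j(\vz)$, and for each item $j$ let $\alg_j$ denote its contribution to the revenue of the second procedure. For $j \in \bH$ and $i \in A_j := \{i : z_{ij} > 0\}$, feasibility of $\vz$ gives $p_i(\vz) = w_{ij}z_{ij} \leq \min\{w_{ij}z_{ij}, B_i\} \leq \tfrac{1}{144}\cW_j(\vz) \leq W/144$. Define the good event $G := \{Y \geq W/3\}$, where $Y := \sum_{i \in R,\, k(i) \in \bH} p_i(\vz)$. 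Since each summand is at most $W/144$ and the summands total exactly $W$, the concentration lemma of~\cite{DBLP:journals/mor/ChenGL14} (used already in the proof of \cref{thm:single_all_pvt}) gives $\Pr[G] \geq 3/4$.

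Conditioning on $G$, I would carry out a per-item analysis mirroring the second procedure of \cref{alg:single_all_pvt}. For each $j \in \bH$ partition $A_j \cap R = U_j \sqcup V_j$ with $U_j := \{i : w_{ij} \geq r_j\}$ and $V_j := \{i : w_{ij} < r_j\}$; the priced-out agents contribute only a little, since $\sum_{i \in V_j} p_i(\vz) \leq r_j \sum_{i \in V_j} z_{ij} \leq r_j$. If item $j$ sells out in the mechanism, $\alg_j = r_j$; otherwise every $i \in U_j$ exhausts her budget, so $\alg_j = \sum_{i \in U_j} B_i \geq \sum_{i \in U_j} p_i(\vz) \geq \sum_{i \in A_j \cap R} p_i(\vz) - r_j$. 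Summing both cases over $j \in \bH$ produces a bound of the shape
\[
\sum_{j \in \bH}\alg_j \;\geq\; Y \;-\; \sum_{j \in \bH} r_j \;-\; (\text{sold-out correction}),
\]
and the selling-revenue monotonicity of \cref{thm:greedy_property} then yields $r_j = \cW_j(\vz^S)/12 \leq \cW_j(\vz)/6$, hence $\sum_{j \in \bH} r_j \leq W/6$.

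The hard part is controlling the sold-out correction: for $j \in \bH$ that sells out, $\alg_j = r_j$ may fall well short of the ``lost potential'' $\sum_{i \in A_j \cap R} p_i(\vz)$, and this gap is what ultimately forces the $W_H$ loss term in the statement. To handle this, I would combine the agent-payment monotonicity half of \cref{thm:greedy_property} with \cref{cor:greedy} to deterministically lower-bound $\sum_j \cW_j(\vz^S) = \sum_{i \in S} p_i^S \geq \tfrac{1}{2}\sum_{i \in S} p_i(\vz)$, and then concentrate $\sum_{i \in S} p_i(\vz)$ around its mean $(W+W_H)/2$. This lets one absorb the sold-out correction into a global lower bound on $\sum_j r_j$, but at the unavoidable cost of an additive term scaling with $W_H$: this loss comes from the agents with $k(i) \in H$, whose individual payments $p_i(\vz)$ may be as large as $W_H$ and so escape the ``no heavy agent'' concentration used for $\bH$. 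After optimizing the constants these ingredients yield, conditionally on $G$,
\[
\sum_{j \in \bH}\alg_j \;\geq\; \tfrac{1}{144}W - \tfrac{14}{144}W_H,
\]
and taking expectation with $\Pr[G] \geq 3/4$ and $\alg \geq 0$ on $G^c$ gives $\E[\alg] \geq \tfrac{3}{4}\bigl(\tfrac{1}{144}W - \tfrac{14}{144}W_H\bigr) = \tfrac{1}{192}W - \tfrac{7}{96}W_H$, as claimed.
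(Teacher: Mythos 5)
Your overall skeleton (two cases per item: sold out versus budget-exhausted buyers, priced-out agents contributing at most $r_j$, and $\sum_{j\in \bH}r_j\le W/6$ via selling-revenue monotonicity) matches the paper's, but the probabilistic core does not, and that is where the argument breaks. You replace the paper's \emph{per-item} concentration with a single aggregate event $G=\{Y\ge W/3\}$, using only that each payment is at most $W/144$. The definition of $\bH(\vz,\frac1{144})$ is deliberately per item: every buyer of $j\in\bH$ pays at most $\frac{1}{144}\cW_j(\vz)$, which is exactly what lets the paper apply the concentration lemma to each $\cW_j(\vz\cap S)$ separately (event $\Pi_j$, probability $\ge 15/16$) and conclude, together with $\cW_j(\vz^S)\le 2\cW_j(\vz)$, that $\cW_j(\vz\cap R)-r_j\ge r_j$ \emph{item by item} — so both the sold-out and the not-sold-out branch yield at least $r_j=\frac1{12}\cW_j(\vz^S)$ for that item. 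Your event $G$ gives no such per-item control: conditionally on $G$ it can happen that some item $j\in\bH$ carrying a constant fraction of $W$ has almost all of its $\vz$-buyers land in $R$, so $\cW_j(\vz^S)$ and hence $r_j$ are essentially zero, the item ``sells out'' for negligible revenue, and the sold-out correction $\cW_j(\vz\cap R)-r_j\approx\cW_j(\vz)$ is huge. In such a scenario (with $W_H=0$) your claimed conditional bound $\sum_{j\in\bH}\alg_j\ge\frac1{144}W$ is simply false, and no global lower bound on $\sum_j\cW_j(\vz^S)$ or on $\sum_{i\in S}p_i(\vz)$ can repair it, because the sold-out correction is a per-item comparison between $r_j$ and that item's own $\cW_j(\vz\cap R)$; a sum over items cannot certify that the items which sell out are the ones with adequately large reserve prices.

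Relatedly, your diagnosis of where the $-\frac{7}{96}W_H$ term comes from is off. In the paper it does not arise from ``heavy agents with $k(i)\in H$ escaping concentration''; it arises at the last, global step: the only available lower bound on the reserve prices is \cref{cor:greedy}, i.e. $\E\bigl[\sum_{j\in[m]}\cW_j(\vz^S)\bigr]\ge\frac14\sum_{j\in[m]}\cW_j(\vz)$, which sums over \emph{all} items, and to isolate $\sum_{j\in\bH}\Pr[\Pi_j]\,\E[\cW_j(\vz^S)\mid\Pi_j]$ one must subtract the $H$-items' share (at most $2W_H$ by selling-revenue monotonicity) and the small $\urcorner\Pi_j$ share. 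To fix your proof you would need to reinstate the per-item events $\Pi_j$ and then perform exactly this global accounting, at which point you have reproduced the paper's argument rather than a genuinely global alternative.
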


\begin{proof}
	
	Let $F$ and $D$ be the set of items that are sold out and the set of agents that use up their budgets in our solution, respectively. According to~\cref{alg:aux_unit_demand}, for a pair $(i,k(i))$, if $i\notin D$ and $k(i)\notin F$,
	\begin{equation}\label{eq:4.3-1}
		w_{i,k(i)} < r_{k(i)} = \frac{1}{12}\cW_j(\vz^S).
	\end{equation}
	We observe two lower bounds of the objective value of our solution: $\alg \geq \sum_{j\in F} \frac{1}{12}\cW_j(\vz^S)$,
	and
	\begin{equation*}
		\begin{aligned}
			\alg &\geq \sum_{i\in D} B_i  \geq \sum_{j\notin F} \sum_{i\in D} z_{ij}w_{ij}  = \sum_{j\notin F} \left( \sum_{i\in R} z_{ij}w_{ij}- \sum_{i\in R\setminus D} z_{ij}w_{ij}  \right) \\
			& \geq \sum_{j\notin F} \max\left\{0,\left( \sum_{i\in R} z_{ij}w_{ij}- \frac{1}{12}\cW_j(\vz^S) \right)\right\}, \\       
		\end{aligned}
	\end{equation*}
	where the last inequality used~\cref{eq:4.3-1}.
	
	For simplicity, use $\cW_j(\vz\cap S)$ to denote $\sum_{i\in S} z_{ij}w_{ij}$. Combing the two lower bounds, we have
	\[ 2\alg \geq \sum_{j\in F} \frac{1}{12}\cW_j(\vz^S) + \sum_{j\notin F} \max\left\{0,\left( \cW_j(\vz\cap R)- \frac{1}{12}\cW_j(\vz^S) \right)\right\}, \]
	and thus,
	\begin{equation}\label{eq:alg_lower1}
		2\E(\alg) \geq \sum_{j\in [m]} \E\left(\1_{j\in F}\cdot \frac{1}{12}\cW_j(\vz^S) + \1_{j\notin F} \cdot \left( \cW_j(\vz\cap R)- \frac{1}{12}\cW_j(\vz^S) \right) \right),
	\end{equation}
	where $\1_{(\cdot)}$ is an indicator function of the event $(\cdot)$. 
	
	According to the definition of $\bH(\vz,\frac{1}{144})$, Chebyshev's inequality and the concentration lemma \cite[Lemma 2]{DBLP:journals/mor/ChenGL14}, for any item $j\in \bH(\vz,\frac{1}{144})$, we have
	\begin{equation}\label{eq:concen}
		\begin{aligned}
			\Pr[\frac{1}{3}\cW_j(\vz) \leq \cW_j(\vz\cap S) \leq \frac{2}{3}\cW_j(\vz)] \geq \frac{15}{16},
		\end{aligned}
	\end{equation}
	which implies that with high probability, 
	\begin{equation}\label{eq:alg_lower2}
		\begin{aligned}
			\cW_j(\vz\cap R)- \frac{1}{12}\cW_j(\vz^S) & \geq \frac{1}{3}\cW_j(\vz) -\frac{1}{12}\cW_j(\vz^S) \geq \frac{1}{12} \cW_j(\vz^S),
		\end{aligned}
	\end{equation}
	where the last inequality used the selling revenue monotonicity.
	Use $\Pi_j$ to denote the event that the sampled subset $S$ satisfies $\frac{1}{3}\cW_j(\vz) \leq \cW_j(\vz\cap S) \leq \frac{2}{3}\cW_j(\vz)$. Combining \cref{eq:alg_lower1} and \cref{eq:alg_lower2}, 
	\begin{equation}\label{eq:alg_lower3}
		\begin{aligned}
			2\E(\alg) &\geq \sum_{ j\in \bH(\vz,\frac{1}{144}) } \Pr[\Pi_j]\cdot \E\left(\1_{j\in F}\cdot \frac{1}{12}\cW_j(\vz^S) + \1_{j\notin F} \cdot \left( \cW_j(\vz\cap R)- \frac{1}{12}\cW_j(\vz^S) \right) \bigg| \; \Pi_j \right) \\
			& \geq \frac{1}{12}\cdot \sum_{j\in \bH(\vz,\frac{1}{144})} \Pr[\Pi_j]\cdot \E\left( \cW_j(\vz^S) \; \bigg|\; \Pi_j\right).
		\end{aligned}
	\end{equation}
	

	We continue to find a lower bound of $\sum_{j\in \bH(\vz,\frac{1}{144})} \Pr[\Pi_j]\cdot \E\left( \cW_j(\vz^S) \; \bigg|\; \Pi_j\right)$. Observe that 
	\begin{equation*}
		\begin{aligned}
			\sum_{j\in [m]} \E\left( \cW_j(\vz^S) \right)  = &\sum_{j\in H(\vz,\frac{1}{144})} \E\left( \cW_j(\vz^S) \right) + \sum_{j\in \bH(\vz,\frac{1}{144})} \E\left( \cW_j(\vz^S) \right) \\
			=& \sum_{j\in H(\vz,\frac{1}{144})} \E\left( \cW_j(\vz^S) \right) + \sum_{j\in \bH(\vz,\frac{1}{144})} \Pr[\Pi_j]\cdot \E\left( \cW_j(\vz^S) \; \bigg|\; \Pi_j\right) \\
			&+ \sum_{j\in \bH(\vz,\frac{1}{144})} \Pr[\urcorner \Pi_j]\cdot \E\left( \cW_j(\vz^S) \; \bigg|\; \urcorner \Pi_j\right) \\
			\leq & \sum_{j\in H(\vz,\frac{1}{144})} \E\left( \cW_j(\vz^S) \right) + \sum_{j\in \bH(\vz,\frac{1}{144})} \Pr[\Pi_j]\cdot \E\left( \cW_j(\vz^S) \; \bigg|\; \Pi_j\right) \\
			&+ \frac{1}{16} \cdot \sum_{j\in \bH(\vz,\frac{1}{144})} \E\left( \cW_j(\vz^S) \; \bigg|\; \urcorner \Pi_j\right) \;\;\;\; \;\;\mbox{[\cref{eq:concen}]}\\
			\leq &  \sum_{j\in \bH(\vz,\frac{1}{144})} \Pr[\Pi_j]\cdot \E\left( \cW_j(\vz^S) \; \bigg|\; \Pi_j\right) + 2\sum_{j\in H(\vz,\frac{1}{144})}  \cW_j(\vz)  \\
			& + \frac{1}{8} \sum_{j\in \bH(\vz,\frac{1}{144})} \cW_j(\vz) \;\;\;\; \;\;\mbox{[selling revenue monotonicity]}\\
		\end{aligned}
	\end{equation*}

	Combining the above inequality and \cref{cor:greedy}, we get the lower bound:
	\begin{equation*}
		\begin{aligned}
			\sum_{j\in \bH(\vz,\frac{1}{144})} \Pr[\Pi_j]\cdot \E\left( \cW_j(\vz^S) \; \bigg|\; \Pi_j\right) &+ 2\sum_{j\in H(\vz,\frac{1}{144})}  \cW_j(\vz)   + \frac{1}{8} \sum_{j\in \bH(\vz,\frac{1}{144})} \cW_j(\vz) \geq \frac{1}{4} \sum_{j\in [m]} \cW_j(\vz)\\
			\sum_{j\in \bH(\vz,\frac{1}{144})} \Pr[\Pi_j]\cdot \E\left( \cW_j(\vz^S) \; \bigg|\; \Pi_j\right) &\geq \frac{1}{8} \sum_{j\in \bH(\vz,\frac{1}{144})} \cW_j(\vz) - \frac{7}{4}\sum_{j\in H(\vz,\frac{1}{144})}  \cW_j(\vz) .
		\end{aligned}
	\end{equation*}
	Thus, due to \cref{eq:alg_lower3}, we complete the proof:
	\[ \E(\alg) \geq \frac{1}{192}\sum_{j\in \bH(\vz,\frac{1}{144})} \cW_j(\vz)- 
	\frac{7}{96}\sum_{j\in H(\vz,\frac{1}{144})}  \cW_j(\vz)  \]
\end{proof}

\begin{proof}[Proof of \cref{thm:unit_demand_aux}]
	Combing \cref{lem:unit_demand_procedure_1}, \cref{lem:unit_demand_procedure_2} and the probabilities set in~\cref{alg:aux_unit_demand}, 
	\begin{equation*}
		\begin{aligned}
			\E(\alg) &\geq \frac{45}{47} \cdot \frac{1}{288} \sum_{j\in H(\vz,\frac{1}{144})} \cW_j(\vz) + \frac{2}{47} \cdot \left( \frac{1}{192} \sum_{j\in \bH(\vz,\frac{1}{144})} \cW_j(\vz) - \frac{7}{96} \sum_{j\in H(\vz,\frac{1}{144})} \cW_j(\vz)  \right) \\
			& = \frac{1}{4512} \sum_{j\in [m]} \cW_j(\vz) \geq \frac{1}{27072} \opt,
		\end{aligned}
	\end{equation*}
	where the last inequality used~\cref{thm:greedy}. 
\end{proof}


\begin{algorithm}[htb]
	\caption{\; Multiple Items Auction for Unit Demand Agents }
	\label{alg:unit_demand} 	
	\begin{algorithmic}[1] 
		\Require The reported budgets $\{B_i\}_{i\in [n]}$, the reported value profile $\{v_{ij}\}_{i\in[n],j\in[m]}$ and the reported target ratios $\{\tau_i\}_{i\in [n]}$.
		\Ensure An allocation and payments.
            \State Initially, set allocation $x_{ij}\leftarrow 0$ $\forall i\in [n],j\in [m]$ and payment $p_i\leftarrow 0$ $\forall i\in [n]$.
		\State With probability of $45/53$, run~\cref{alg:multi_unit_indivisible}. \Comment{{\color{gray}Indivisibly Selling Procedure}}
		\Begin{With probability of $8/53$,} \Comment{{\color{gray}Random Sampling Procedure}}
		\State Randomly divide all the agents with equal probability into set $S$ and $R$. 
		\State Run~\cref{alg:greedy} on the sampled agent set $S$ and use $\left(\vz^S = \{z_{ij}^S\}, \vp(\vz^S)=\{p_i(\vz^S)\}\right)$ to denote the output.
		\State For each item $j$, set the reserve price $r_j \leftarrow \frac{1}{12} \cW_j(\vz^S)$.
		\State Let agents in $R$ come in an arbitrarily fixed order. When each agent $i$ comes, use $ R_j $ to denote the remaining fraction of each item $j$, and let her pick the most profitable item $$k(i):=\argmax\limits_{j: r_j\leq \frac{v_{ij}}{\tau_i}} v_{ij} \cdot \min\{\frac{B_i}{r_j}, R_j\}.$$ 
		\State Set $x_{i,k(i)}\leftarrow \min\{\frac{B_i}{r_{k(i)}}, R_{k(i)}\}, p_i\leftarrow \min \{ w_{i,k(i)} x_{i,k(i)},B_i\}$ if item $k(i)$ exists. 
		\End
		\State \Return Allocation $\{x_{ij}\}_{i\in[n],j\in [m]}$ and payments $\{p_i\}_{i\in[n]}$.
	\end{algorithmic}
\end{algorithm}	

Finally, we present our final mechanism in~\cref{alg:unit_demand}. The only difference from~\cref{alg:aux_unit_demand} is that in the last step of the second procedure, we let the agent choose any item she wants as long as she can afford the reserve price, and then charge her the maximum willingness-to-pay.

\begin{theorem}\label{thm:unit_demand}
\cref{alg:unit_demand} is feasible, truthful, and constant-approximate.
\end{theorem}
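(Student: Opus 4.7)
My plan is to prove the three assertions of the theorem—feasibility, truthfulness, and constant approximation—by reducing to and building on the analysis of \cref{alg:aux_unit_demand}. \textbf{Feasibility} is immediate from the payment formula $p_i=\min\{w_{i,k(i)}x_{i,k(i)},B_i\}$: the outer $\min$ enforces the budget bound $p_i\leq B_i$, the RoS bound follows from $p_i\tau_i\leq w_{i,k(i)}x_{i,k(i)}\tau_i=v_{i,k(i)}x_{i,k(i)}$, and the assignment rule $x_{i,k(i)}\leftarrow \min\{B_i/r_{k(i)},R_{k(i)}\}$ respects the remaining supply of every item.

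For \textbf{truthfulness}, the indivisibly selling branch inherits truthfulness from \cref{thm:multi_unit_indivisible}. For the random sampling branch I would follow the template used for \cref{thm:single_all_pvt}. Agents landing in $S$ receive nothing, so they are indifferent. For an agent $i\in R$, the reserve prices $\{r_j\}$ and the arrival order are determined independently of her report, so her report only affects her own outcome. Truthful reporting yields the item that maximizes $v_{ij}\min\{B_i/r_j,R_j\}$ among those with $r_j\leq v_{ij}/\tau_i$; a short case analysis on misreports of $B_i$, $v_i$, or $\tau_i$ shows that any deviation either violates the true budget (by inflating $B_i^r$ so that she gets charged more than $B_i$), violates the true RoS constraint (by inflating $v_i^r$ or deflating $\tau_i^r$ to qualify for an otherwise infeasible item), or only shrinks her feasible set or her purchased amount, which weakly reduces her true obtained value.

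For the \textbf{approximation} guarantee, I would reuse the partition $H(\vz,\tfrac{1}{144})\cup \bar H(\vz,\tfrac{1}{144})$ from \cref{thm:unit_demand_aux}. The indivisibly selling procedure is identical to that in \cref{alg:aux_unit_demand}, so \cref{lem:unit_demand_procedure_1} still yields revenue at least $\tfrac{1}{288}\sum_{j\in H}\cW_j(\vz)$ from that branch. For the random sampling branch, the two skeletal lower bounds used in \cref{lem:unit_demand_procedure_2} survive in \cref{alg:unit_demand}: (i) for each sold-out item $j\in F$ the revenue extracted is at least $r_j$, because $p_i\geq r_{k^*(i)}x_{i,k^*(i)}$ whenever agent $i$ buys (a direct consequence of $w_{i,k^*(i)}\geq r_{k^*(i)}$ for every selected item), and (ii) $\alg\geq \sum_{i\in D}B_i$ directly. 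The step that needs reworking is the derivation $\sum_{i\in D}B_i\geq \sum_{i\in D}z_{i,k(i)}w_{i,k(i)}$ combined with the accounting of agents in $R\setminus D$. The new subtlety is the case where $i\in R\setminus D$ chooses an item $k^*(i)\neq k(i)$; for such an agent the $\arg\max$ rule gives $v_{i,k^*(i)}x_{i,k^*(i)}\geq v_{i,k(i)}\min\{B_i/r_{k(i)},R_{k(i)}\}$ whenever $k(i)$ is affordable, and dividing by $\tau_i$ yields $p_i=w_{i,k^*(i)}x_{i,k^*(i)}\geq w_{i,k(i)}\min\{B_i/r_{k(i)},R_{k(i)}\}$; this substitutes for the ``agent bought $k(i)$ at its reserve'' accounting used in the proof of \cref{lem:unit_demand_procedure_2}. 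Combined with the monotonicity properties of \cref{thm:greedy_property} and the concentration estimate from~\cite[Lemma 2]{DBLP:journals/mor/ChenGL14}, the same telescoping $2\E(\alg)\geq \sum_{j\in F}\tfrac{1}{12}\cW_j(\vz^S)+\sum_{j\notin F}(\cW_j(\vz\cap R)-\tfrac{1}{12}\cW_j(\vz^S))$ goes through up to constants, and the probabilities $\tfrac{45}{53}$ and $\tfrac{8}{53}$ in \cref{alg:unit_demand} are calibrated so that the negative $H$-term produced by the random-sampling bound is absorbed by the indivisibly-selling contribution.

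The hard part will be the bookkeeping for agents with $k^*(i)\neq k(i)$: the remaining supply $R_{k(i)}$ visible during the execution of \cref{alg:unit_demand} differs from the $R_{k(i)}^{\vz}$ used by \cref{alg:greedy} to build $\vz$, so translating payment inequalities between the two worlds requires both the selling-revenue monotonicity in \cref{thm:greedy_property} and the coupling in \cref{lem:greedy_property}. Once this reconciliation is settled, the $\tfrac{45}{53}/\tfrac{8}{53}$ split absorbs the constant losses and delivers a constant-factor, albeit weaker than $\tfrac{1}{27072}$, approximation to $\opt$.
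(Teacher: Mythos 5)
Your feasibility and truthfulness arguments match the paper's (delegation to \cref{thm:multi_unit_indivisible} for the first branch, the \cref{thm:single_all_pvt}-style argument plus ``charged maximum willingness-to-pay'' for the second), and your observation that $p_i\geq r_{k^*(i)}x_{i,k^*(i)}$ for every purchase, hence sold-out items yield revenue at least $r_j$, is correct. The divergence is in the approximation part: the paper does \emph{not} re-derive \cref{lem:unit_demand_procedure_2} for \cref{alg:unit_demand}. It couples the randomness of \cref{alg:unit_demand} and \cref{alg:aux_unit_demand} (same branch, same sample $S$) and proves a revenue-comparison $\alg'\leq 4\alg$ item by item: items with $\cW_j(\vx)\geq\frac12 r_j$ are charged to $2\cW_j(\vx)$ directly, and for items with $\cW_j(\vx)<\frac12 r_j$ the remaining supply stays above $\frac12$ throughout, which is exactly what makes the argmax comparison with the auxiliary allocation work in the three cases $p_i=B_i$, $x_{ij}>0$, and $x_{ij}=0$.

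This is where your sketch has a genuine gap. For an agent $i\in R\setminus D$ with $k(i)=j\notin F$ who is priced in ($w_{ij}\geq r_j$) but buys $k^*(i)\neq j$, your substitute inequality $p_i\geq w_{i,k(i)}\min\{B_i/r_{k(i)},R_{k(i)}\}$ uses the \emph{online} remaining supply $R_{k(i)}$ at her arrival, which can be arbitrarily small even though $k(i)\notin F$; then the right-hand side does not dominate $z_{i,k(i)}w_{i,k(i)}$ (which can be of order $\frac12 w_{i,k(i)}$ or larger), so the term you must subtract from $\cW_j(\vz\cap R)$ is not controlled and the claimed telescoping inequality does not follow. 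The tools you cite to repair this, \cref{thm:greedy_property} and \cref{lem:greedy_property}, cannot help: they compare two \emph{offline} greedy runs (on $[n]$ versus on $S$) and say nothing about the supply consumed during the online selling phase of \cref{alg:unit_demand}. What is actually needed is the dichotomy the paper builds its comparison around: either more than half of item $j$ has been sold in \cref{alg:unit_demand} (so the revenue attributable to $j$ is already at least $\frac12 r_j=\frac1{24}\cW_j(\vz^S)$, securing that item's contribution without any budget-exhaustion accounting), or the remaining supply of $j$ is at least $\frac12$ whenever such an agent arrives, in which case the argmax rule gives $p_i\geq w_{ij}\min\{B_i/r_j,\frac12\}\geq\frac12 z_{ij}w_{ij}$ and these payments are charged at most once per agent via $k(i)$. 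With that case split your direct adaptation of \cref{lem:unit_demand_procedure_2} can be pushed through with altered constants, but as written the key step is unsupported, and it ends up reproducing precisely the case analysis the paper packages as the coupling bound $\alg'\leq4\alg$.
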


\begin{proof}
According to \cref{thm:multi_unit_indivisible}, the first procedure is feasible and truthful. 
For the second procedure, the mechanism is truthful since any agent is charged her maximum willingness-to-pay. 
Then according to the same argument in the proof of~\cref{thm:single_all_pvt}, we can prove the truthfulness.


The following focuses on analyzing the approximation ratio.
To this end, we couple the randomness in ~\cref{alg:unit_demand} and~\cref{alg:aux_unit_demand}. 
The two algorithms are almost identical to each other except for one line and their randomness can be coupled 
perfectly. If by the coupling of randomness, both algorithms execute the first procedure, they are exactly identical and thus~\cref{lem:unit_demand_procedure_1}
also apples to ~\cref{alg:unit_demand}. 
Now, by randomness, they both execute the second procedure. In the second procedure, we can further couple the randomness so that they randomly sample the same set $S$. Conditional on all these (they both execute the second procedure and sample the same set $S$), we prove that  the revenue of  ~\cref{alg:unit_demand} is at least $\frac{1}{4}$ of that of ~\cref{alg:aux_unit_demand}. 

Let  $(\vx,\vp)$ and $(\vx',\vp')$ be the two solutions  respectively of 
~\cref{alg:unit_demand} and~\cref{alg:aux_unit_demand} under the above conditions. Let $ \alg$ and $\alg'$ be their revenues respectively.  
	Use $A_j'$ to denote the agents who buy some fractions of item $j$ in solution $\vx'$. According to~\cref{alg:aux_unit_demand}, \[\alg'  = \sum_{j\in [m]}\sum_{i\in A_j'}x_{ij}' r_j.\] 
 
        For an item $j$, if the corresponding revenue  in~\cref{alg:unit_demand}'s solution $\cW_j(\vx) \geq \frac{1}{2}r_j$, we have  
    $\sum_{i\in A_j'}x_{ij}' r_j \leq 2\cW_j(\vx)$,
    and then summing over all such items,
\begin{equation}\label{eq:unit_final_0}
    \sum_{j:\cW_j(\vx) \geq \frac{1}{2}r_j} \sum_{i\in A_j'}x_{ij}' r_j  \leq \sum_{j:\cW_j(\vx) \geq \frac{1}{2}r_j} 2\cW_j(\vx) \leq 2\alg.
\end{equation}
    
    For each item $j$ with $\cW_j(\vx) < \frac{1}{2}r_j$, we distinguish three cases for agents in $A_j'$ based on $(\vx,\vp)$: 
	(1) $p_i=B_i$,
		(2) $p_i<B_i$ and $x_{ij}>0$,
		and (3) $p_i<B_i$ and $x_{ij}=0$.

	For case (1), clearly, 
    \begin{equation}\label{eq:unit_final_1}
        x_{ij}'r_j \leq B_i = p_i .
    \end{equation}
    
 For case (2), since $\cW_j(\vx) < \frac{1}{2}r_j$, the remaining fraction of item $j$ is at least $1/2$ when~\cref{alg:unit_demand} let agent $i$ buy, and therefore, $x_{ij}\geq \min \{ \frac{1}{2},\frac{B_i}{ r_j } \}$.
 According to $p_i<B_i$, we have $p_i=w_{ij}x_{ij}\geq  r_j \min \left\{ \frac{1}{2},\frac{B_i}{ r_j } \right\}.$ Then, due to $x_{ij}'\leq \min  \{ 1,\frac{B_i}{ r_j } \}$, 
	\begin{equation}\label{eq:unit_final_2}
	    x_{ij}'r_j\leq 2p_i.
	\end{equation} 
	
	For case (3), suppose that agent $i$ buys item $k$ in solution $\vx$. Since  the remaining fraction of item $j$ is at least $1/2$ and the agent always pick the most profitable part in~\cref{alg:unit_demand}, we have
	\begin{equation*}
		\begin{aligned}
			\min\left\{\frac{1}{2}, \frac{B_i}{r_j}\right\} \cdot v_{ij}  \leq x_{ik}  v_{ik} \mbox{ and }
			\min\left\{\frac{1}{2}, \frac{B_i}{r_j}\right\} \cdot w_{ij}  \leq x_{ik}  w_{ik}.
		\end{aligned}
	\end{equation*}
	Again, due to $p_i<B_i$, $r_j \le w_{ij}$ and $x_{ij}'\leq \min  \{ 1,\frac{B_i}{ r_j } \}$, we have
	\begin{equation}\label{eq:unit_final_3}
	    \frac{1}{2}x_{ij}'r_j \leq  \min\left\{\frac{1}{2}, \frac{B_i}{r_j}\right\} \cdot w_{ij}  \leq x_{ik}  w_{ik} = p_i.
	\end{equation}
	
	Due to \cref{eq:unit_final_1}, \cref{eq:unit_final_2} and \cref{eq:unit_final_3}, for an item with $\cW_j(\vx) < \frac{1}{2}r_j$, in either case, we always have 
 $x_{ij}'r_j \leq 2p_i$. Thus, summing over all such items and the corresponding agents,
 \begin{equation}\label{eq:unit_final_4}
     \sum_{j:\cW_j(\vx) < \frac{1}{2}r_j} \sum_{i\in A_j'} x_{ij}'r_j \leq \sum_{j:\cW_j(\vx) < \frac{1}{2}r_j} \sum_{i\in A_j'} 2p_i \leq 2\alg.
 \end{equation}
 
 Combining \cref{eq:unit_final_0} and \cref{eq:unit_final_4} proves $\alg'\leq 4\alg$. 
Combining this with \cref{lem:unit_demand_procedure_2}, we know that 
	The expected revenue obtained by the second procedure in~\cref{alg:unit_demand} is at least 
	\[\frac{1}{768} \sum_{j\in \bH(\vz,\frac{1}{144})} \cW_j(\vz) - \frac{7}{384} \sum_{j\in H(\vz,\frac{1}{144})} \cW_j(\vz).\]
Further combining with ~\cref{lem:unit_demand_procedure_1}, which we argued also applies to~\cref{alg:unit_demand}, we have  
	the expected revenue obtained by~\cref{alg:unit_demand} is at least 
	\begin{equation*}
		\begin{aligned}
			& \frac{45}{53} \cdot \frac{1}{288} \sum_{j\in H(\vz,\frac{1}{144})} \cW_j(\vz) + \frac{8}{53} \cdot \left( \frac{1}{768} \sum_{j\in \bH(\vz,\frac{1}{144})} \cW_j(\vz) - \frac{7}{384} \sum_{j\in H(\vz,\frac{1}{144})} \cW_j(\vz)  \right) \\
			& = \frac{1}{5088} \sum_{j\in [m]} \cW_j(\vz) \geq \frac{1}{30528} \opt.
		\end{aligned}
	\end{equation*}
	
\end{proof}

\begin{remark} 
In the proof of \cref{thm:unit_demand}, we have not tried to optimize the constants in our analysis in the interests of expositional simplicity.
The parameters (e.g. $45/47$ and $1/144$) in our algorithm and analysis can be easily replaced by some other constants in $(0,1)$ to obtain another constant approximation ratio.
\end{remark}





\section{Multiple Items Auction For Additive Agents}\label{sec:main_large}

This section studies the setting where the auctioneer has multiple items to sell and the agents are additive, that is, everyone can buy multiple items and obtain the sum value of the items. 
This environment is more challenging than the previous one, and some algorithmic ideas introduced in the last section are hard to apply. For example, one of the most critical components of~\cref{alg:unit_demand} is indivisibly selling, which is based on the observation that selling indivisible goods to unit-demand agents is much easier than selling divisible goods. However, this is not true in the additive valuation environment. To quickly see this, suppose that we have an approximation mechanism for selling indivisible goods to additive agents. Then we can obtain a mechanism with almost the same approximation ratio by splitting each item into tiny sub-items and selling them indivisibly. Thus, in the additive valuation environment, selling indivisible items is harder than selling divisible items.

Fortunately, we find that the idea of random sampling still works in this environment. Due to the relationship between our model and the liquid welfare maximizing model, the theoretical guarantee of the random sampling mechanism in~\cite{DBLP:conf/sagt/LuX17} directly implies a constant approximation of our problem under a large market assumption on the agents' budgets (that is, $B_i\leq \frac{\opt}{m \cdot c}$ for any agent $i$, where $c$ is a sufficiently large constant). This part is technically simple. We only state the theorem and the high-level idea here and defer some details to~\cref{sec:large}.

\begin{theorem}\label{thm:large_market}
    There exists a truthful constant approximation for multiple items auction among additive agents under the large market assumption.
\end{theorem}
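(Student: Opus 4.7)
The plan is to exploit the structural parallel between our first-best benchmark $\max_{\vx} \sum_i \min\{B_i, v_i(\vx)/\tau_i\}$ and the liquid welfare benchmark $\max_{\vx} \sum_i \min\{B_i, v_i(\vx)\}$ studied in~\cite{DBLP:conf/sagt/LuX17}. By rescaling each valuation $v_{ij} \leftarrow v_{ij}/\tau_i$, an optimum for our problem is, up to this relabeling, an optimum of a liquid-welfare instance on the same budgets. So I would adapt their random-sampling mechanism essentially wholesale.

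Concretely, I would (i) randomly partition the agent set into $S$ and $R$ with equal probability; (ii) compute an (approximately) optimal offline allocation on $S$ with respect to the rescaled valuations, and use it to derive per-item reserve prices $r_j$ in the manner of~\cite{DBLP:conf/sagt/LuX17}; and (iii) let agents in $R$ arrive in an arbitrary fixed order, each agent $i$ being allowed to purchase any bundle at the posted reserve prices subject to her budget $B_i$ and her RoS constraint $p_i \le v_i(\vx)/\tau_i$, taking the value-maximizing feasible bundle. Truthfulness follows by the familiar random-sampling argument already used in~\cref{thm:single_all_pvt}: agents in $S$ receive nothing so they cannot gain by lying; agents in $R$ face posted prices and an arrival order that do not depend on their own reports, and any bundle that a misreport would unlock must violate either the budget or the RoS constraint, so misreporting cannot strictly improve the utility of a value maximizer.

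For the approximation, the large-market assumption $B_i \le \opt/(mc)$ bounds each individual contribution $p_i^*$ in an optimal solution by a $1/(mc)$ fraction of $\opt$, so a Chebyshev-style concentration inequality (the same~\cite[Lemma 2]{DBLP:journals/mor/ChenGL14} used in~\cref{thm:single_all_pvt}) shows that with constant probability both $\sum_{i \in S} p_i^*$ and $\sum_{i \in R} p_i^*$ sit in a constant-factor window around $\opt/2$. Plugging this concentration into the posted-price analysis of~\cite{DBLP:conf/sagt/LuX17} converts their liquid-welfare approximation guarantee into a constant lower bound on $\E(\alg)$ relative to $\opt$. The main obstacle is the transfer of their argument from quasi-linear budgeted agents optimizing liquid welfare to our value-maximizer agents optimizing revenue: one has to verify that the posted-price interface still respects the additional RoS constraint (absent in their model), and that the revenue collected is within a constant factor of the liquid welfare of the served agents, so that their approximation ratio translates into an approximation of our first-best $\opt$ rather than just of welfare. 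Once these two points are checked, the constant-approximation and truthfulness claims follow, and the remaining bookkeeping is deferred to~\cref{sec:large}.
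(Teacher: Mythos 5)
Your overall route is exactly the paper's: rescale $v_{ij}\mapsto v_{ij}/\tau_i$ to obtain a liquid-welfare instance with the same budgets and the same first-best optimum, run the random-sampling mechanism of~\cite{DBLP:conf/sagt/LuX17} on that instance, get truthfulness from the standard posted-price/random-sampling argument, and import their large-market guarantee. The problem is that your proposal stops precisely where the actual proof has to do its work. The two items you list as "to be checked" are not bookkeeping; they are the content of the theorem. In particular, "plugging the concentration into the posted-price analysis of~\cite{DBLP:conf/sagt/LuX17}" is not automatic, because their analysis is a statement about how \emph{quasi-linear budgeted} agents respond to the posted prices, and your agents are value maximizers: a value maximizer will happily buy fractions of items whose reserve price exceeds $w'_{ij}=v_{ij}/\tau_i$, as long as her \emph{aggregate} RoS constraint still holds, so her demand correspondence differs from the one analyzed in their paper, and a priori the revenue comparison could go either way.

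The paper closes this gap with a concrete comparison argument that your write-up is missing: at fixed reserve prices, an arriving agent of either type effectively solves a greedy knapsack, buying available items in decreasing order of $w'_{ij}/r_j$; the quasi-linear agent stops once her budget is exhausted or every remaining item has $w'_{ij}/r_j<1$, while the value maximizer continues past that point until the budget runs out or the aggregate RoS constraint is about to bind. Hence the sold fraction of every item (and therefore the revenue) under value-maximizing behavior is pointwise at least what it would be under quasi-linear behavior, and the guarantee of~\cite{DBLP:conf/sagt/LuX17} transfers. Your second flagged point also has a specific resolution rather than a verification left to the reader: as discussed in~\cref{sec:large}, their key lemma in fact lower-bounds the \emph{revenue} of the mechanism (they prove the liquid-welfare bound by lower-bounding liquid welfare with revenue), so no additional revenue-versus-welfare factor needs to be argued. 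Without these two arguments your proposal asserts the conclusion rather than proving it; also note that the explicit Chebyshev step you describe is already internal to their analysis and need not be redone, whereas the behavioral comparison above cannot be borrowed and must be supplied.
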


\begin{proof}[Proof Sketch]
    For an instance $\cI = (\vB,\vecv,\vtau)$ of our model, we can easily construct a liquid welfare maximization instance $\cI'=(\vB',\vw')$, where for each agent $i$, the budget $B_i'=B_i$ and the valuation $w_{ij}'=\frac{v_{ij}}{\tau_i}$ $\forall j\in [m]$. 
    Since given the same allocation, the maximum willingness-to-pay of an agent in $\cI$ is exactly the agent's liquid welfare in $\cI'$, we see that the two instances share the same offline optimal objective values.
    
    Our mechanism is simply running the random sampling mechanism proposed in~\cite{DBLP:conf/sagt/LuX17}\footnote{See~\cref{sec:large} for the description of the mechanism} on the reduced instance $\cI'$.
    \cite{DBLP:conf/sagt/LuX17} showed that when the agents are quasi-linear utility maximizers subject to budget constraints, the total revenue obtained by the mechanism is at least a constant fraction of the optimal objective. 

    We note that the behavior of a value maximizer in the random sampling mechanism is different from the behavior of a quasi-linear utility maximizer. Thus, we cannot directly say that the proof has been completed due to $\opt(\cI)=\opt(\cI')$. The mechanism lets the agents come in a fixed order and allows each arrived agent to buy any fraction of the items she wants at the reserve prices. A quasi-linear utility maximizer will never buy any fraction of the items with reserve prices higher than the valuations (over the target ratio). However, a value maximizer may be interested in buying such items because the overall RoS constraint can still be satisfied even if for some items, the bought prices are higher than the valuations (over the target ratio). 

    We complete the proof by showing that the revenue obtained among value maximizers is always at least that obtained among quasi-linear utility maximizers. The key observation is that When an agent comes, regardless of the type, she computes a knapsack optimization problem with constraints. In other words, the agent sorts all the available items in the decreasing order of the ratio of valuation $w'_{ij}$ to the reserve price $r_j$, and then buys them sequentially as long as the constraints are satisfied. For a quasi-linear utility maximizer, she will keep buying until the budget is exhausted or all the remaining items have $\frac{w'_{ij}}{r_i}<1$; while a value maximizer may not stop immediately at the time that all the remaining items have $\frac{w'_{ij}}{r_i}<1$ when she still has budget left, instead, she will continue buying until the budget is used up or the overall RoS constraint is about to be violated. According to the above argument, it is easy to verify that the sold fraction of each item is non-decreasing when the agent becomes a value maximizer, and thus, the total revenue obtained among value maximizers is non-decreasing and can be bounded by a constant fraction of $\opt(\cI)$.



\end{proof}

\section{Conclusion and Open Problems}\label{sec:con}

We investigate the emerging value maximizer in recent literature but also significantly depart from their modeling. 
We believe that the model and benchmark proposed in this paper are, on the one hand, more realistic and, on the other hand, friendlier to the AGT community. 
We get a few non-trivial positive results which indicate that this model and benchmark is indeed tractable. 
There are also many more open questions left.  
For additive valuation, it is open if we can get a constant approximation. 
It is interesting to design a mechanism with a better approximation for the setting of the single item and unit demand since our current ratio is fairly large. 
We also want to point out that no lower bound is obtained in this model, and thus any non-trivial lower bound is interesting. 
We get a much better approximation ratio for the single-item environment when valuation and budget are public than in the fully private setting. 
However, this is not a separation since we have no lower bound. 
Any separation result for different information models in terms of public and private is interesting.

\section*{Acknowledgment}
Chenyang Xu and Pinyan Lu were supported in part by Science and Technology Innovation 2030 –``The Next Generation of Artificial Intelligence'' Major Project No.2018AAA0100900. Additionally, Chenyang Xu received support from the Dean's Fund of Shanghai Key Laboratory of Trustworthy Computing, East China Normal University.
Ruilong Zhang was supported by NSF grant CCF-1844890.
%
%
%
\newpage
\bibliographystyle{splncs04}
\bibliography{ref}

\newpage

\appendix
\section{Partially Private Setting}\label{sec:part}


This section studies a partially private setting proposed by~\cite{DBLP:conf/sigecom/BalseiroDMMZ22} where the budgets and values are all public. We first show that a better constant approximation for the single item auction can be obtained when the budgets become public in~\cref{subsec:single_add}. Then we build on the new single item auction to give an $\upOmega(\frac{1}{\sqrt{n}})$ approximation for multiple items auction among additive agents with public budgets and values in~\cref{subsec:add}.

\subsection{Single Item Auction with Public Budgets}\label{subsec:single_add}

This subsection improves upon the previous approximation in~\cref{thm:single_all_pvt} when the agents' budgets become public.
The high-level idea is similar to the uniform price auction for liquid welfare maximization proposed in~\cite{DBLP:conf/icalp/DobzinskiL14}, which is allocating the item according to the maximum selling price such that if all agents buy the item at this price per unit, the item is guaranteed to be sold out.
Such a selling price is referred to be a \emph{market clearing price}.
However, new truthfulness challenges arise when applying the market clearing price idea to our auction environment. For example, there may exist such a case that the market clearing price remains unchanged when some agent changes the reported profile. Then in this case, the agent may misreport a lower target ratio or a larger value to obtain more goods without violating any constraint. 

To solve this issue, we use a simple scaling technique to partition the agents into two levels according to their reported profile and let the agents in the lower level buy the item at the market clearing price while the agents in the higher level have to pay a slightly higher price. The agent who determines the market clearing price always stays in the lower level, and she can obtain more goods only if she jumps into the higher level by increasing the reported $\frac{v_i}{\tau_i}$. However, in that case, the agent needs to pay a higher price that violates her RoS constraint. Thus, the agent has no incentive to misreport a lower ratio. The detailed mechanism is stated in~\cref{alg:single}. This subsection aims to show the following theorem.

\begin{algorithm}[tb]
	\caption{\;Single Item Auction with Public Budgets}
	\label{alg:single} 	
	\begin{algorithmic}[1] 
        \Require The budgets $\{B_i\}_{i\in [n]}$, the reported value profile $\{v_i\}_{i\in[n]}$, the reported target ratios $\{\tau_i\}_{i\in [n]}$, and a parameter $\epsilon>0$.
        \Ensure An allocation and payments.
	\State For each agent $i\in[n]$, round $v_i/\tau_i$ down slightly such that it is an exponential multiple of $1+\epsilon$, i.e., $w_i:=(1+\epsilon)^{\lfloor \log_{(1+\epsilon)} \frac{v_i}{\tau_i}\rfloor}$.
	\State \label{line:reindex}Reindex the agents such that $w_1\geq w_2\geq ... \geq w_n$ and break the ties in a fixed manner.
        \State For any index $k\in [n]$, define $B[k]:=\sum_{i=1}^{k}B_i $.
        \State Find the maximum index $k$ such that $B[k]\leq w_k$.
        \If{$B[k] > w_{k+1}$}
        \State Round $B[k]$ up to $C[k] := (1+\epsilon)^{\lceil \log_{(1+\epsilon)} B[k]\rceil}$
        \For{each agent $i=1,...,n$}
        \If{$i\leq k$}
        \State $x_i\leftarrow\frac{B_i}{(1+\epsilon)B[k] }$, $p_i\leftarrow x_i \cdot C[k]$.
        \Else  
        \State $x_i\leftarrow 0$, $p_i\leftarrow 0$.  
        \EndIf
        \EndFor
        \Else \Comment{$B[k] \leq w_{k+1}$}
        \For{each agent $i=1,...,n$}
        \If{$i\leq k$} \Comment{Note that $w_i \geq w_{k+1}$ in this case}
        \State If $w_i>w_{k+1}$, $x_i\leftarrow \frac{B_i}{(1+\epsilon)w_{k+1}}$, $p_i\leftarrow x_i \cdot (1+\epsilon)w_{k+1}$. 
        \State If $w_i=w_{k+1}$, $x_i\leftarrow \frac{B_i}{(1+\epsilon)w_{k+1}}$, $p_i\leftarrow x_i \cdot w_{k+1}$.
        \Else
        \State If $i=k+1$, $x_{k+1}\leftarrow \frac{1}{1+\epsilon}-\frac{B[k]}{(1+\epsilon)w_{k+1}}$, $p_{k+1}\leftarrow x_i \cdot w_{k+1}$.
        \State If $i>k+1$, $x_i\leftarrow 0$, $p_i\leftarrow 0$.
        \EndIf
        \EndFor
        \EndIf
        \State \Return Allocation $\{x_i\}_{i\in[n]}$ and payments $\{p_i\}_{i\in[n]}$.
	\end{algorithmic}
\end{algorithm}

\begin{theorem}\label{thm:single}
For any parameter $\epsilon>0$, \cref{alg:single} is truthful and achieves an approximation ratio of $\frac{1}{(1+\epsilon)(2+\epsilon)}$, which tends to $\frac{1}{2}$ when $\epsilon$ approaches $0$.
\end{theorem}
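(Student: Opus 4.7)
The proof has three components: feasibility, truthfulness, and the $\frac{1}{(1+\epsilon)(2+\epsilon)}$ approximation ratio. The high-level design to keep in mind is that \cref{alg:single} approximates a uniform market-clearing price over the rounded values $\{w_i\}$, with an additional $(1+\epsilon)$-premium charged to any agent strictly above the pivotal level $w_{k+1}$.

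For feasibility, in both branches I would verify that the total allocation equals $1/(1+\epsilon) \leq 1$, that each payment $p_i$ is bounded by $B_i$, and that the per-unit price charged to every served agent $i$ is bounded above by $v_i/\tau_i$. The budget check reduces to $C[k] \leq (1+\epsilon) B[k]$ in branch~(a) and to the explicit formulas in branch~(b). The RoS check reduces to $C[k] \leq w_k \leq v_i/\tau_i$ for $i\leq k$ in branch~(a) (using that $C[k]$ is the smallest power of $(1+\epsilon)$ at least $B[k]$ and $w_k$ itself is such a power with $w_k \geq B[k]$), and to $(1+\epsilon) w_{k+1} \leq w_i \leq v_i/\tau_i$ for an above-pivot agent in branch~(b), since $w_i > w_{k+1}$ and both are powers of $(1+\epsilon)$, so $w_i \geq (1+\epsilon) w_{k+1}$.

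For truthfulness, the key observation is that the mechanism depends on the reports only through the rounded quantity $w_i=(1+\epsilon)^{\lfloor \log_{1+\epsilon}(v_i/\tau_i) \rfloor}$, so each agent's effective strategy is the choice of a power of $(1+\epsilon)$. I would show that any report $w_i' > w_i$ that strictly increases $x_i$ must push the agent into a strictly higher price level, and by construction the per-unit price at that level is at least $(1+\epsilon) w_i$, which exceeds the true $v_i/\tau_i < (1+\epsilon) w_i$, violating RoS. Reports $w_i' < w_i$ can only lower the agent's rank and never increase $x_i$. The delicate part, which I would treat as a case analysis, is that a single misreport can simultaneously cross a level, switch the branch taken by the algorithm, and shift the pivot index $k$. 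I would handle this uniformly by establishing that the per-unit price faced by the deviator is monotone non-decreasing in her reported $w_i'$ across every resulting configuration, so the RoS argument applies regardless.

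For the approximation ratio, I would introduce the auxiliary quantity $\opt(\{w_i\}) := \max_{\vx:\sum x_i \leq 1} \sum_i \min\{B_i, x_i w_i\}$. Since $w_i \leq v_i/\tau_i \leq (1+\epsilon) w_i$, we have $\opt \leq (1+\epsilon)\,\opt(\{w_i\})$. The greedy characterization of $\opt(\{w_i\})$ (fill in decreasing order of $w_i$ until the item is exhausted) gives, for the largest $k^{**}$ with $\sum_{i\leq k^{**}} B_i/w_i \leq 1$, an expression $\opt(\{w_i\}) = B[k^{**}] + (1 - \sum_{i\leq k^{**}} B_i/w_i)\cdot w_{k^{**}+1}$. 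A short argument shows $k^{**} \geq k$, and splitting $B[k^{**}] = B[k] + \sum_{k < i \leq k^{**}} B_i$ with the latter sum bounded by $w_{k+1}$ yields $\opt(\{w_i\}) \leq B[k] + 2 w_{k+1}$. A two-way case analysis on branches (a) and (b) then lower-bounds $\alg$ in terms of both $B[k]/(1+\epsilon)$ and $w_{k+1}/(1+\epsilon)$, giving $\alg \geq \opt(\{w_i\})/(2+\epsilon)$. Combining with the rounding loss gives $\alg \geq \opt/((1+\epsilon)(2+\epsilon))$, which tends to $\opt/2$ as $\epsilon\to 0$.

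The main obstacle is the truthfulness analysis: the $(1+\epsilon)$-premium on the super-pivot level is calibrated exactly to the rounding step in the definition of $w_i$, and showing that this incentive barrier is tight against all compound deviations — not just those that stay within a single branch or fix the pivot index — requires carefully tracking how the algorithm's chosen $k$ and branch respond to a misreport. The approximation bound, by contrast, is primarily a case-analysis exercise once the structural relation $k^{**} \geq k$ and the greedy characterization of $\opt(\{w_i\})$ are in hand.
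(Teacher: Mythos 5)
Your overall architecture (feasibility, a two-sided deviation analysis keyed to the $(1+\epsilon)$ premium, and an upper bound on $\opt$ in terms of $B[k]$ and $w_{k+1}$) matches the paper's, but the crux of the truthfulness proof is missing. The assertion that a report $w_a'<w_a$ ``can only lower the agent's rank and never increase $x_a$'' is exactly the non-trivial statement (the paper's \cref{lem:single_truthful_1}), and it genuinely requires a case analysis: an underreport can simultaneously move the pivot index $k$, change the set of fully served agents, and \emph{lower} the clearing price --- e.g.\ if the deviator drops out of the top-$k$ set, the new clearing price can become her own $w_a'$, which is strictly below the old price $\bar p=\sum_{i\in S}B_i$, and one must then compute that her new leftover share $\frac{1}{1+\epsilon}-\frac{B[k']}{(1+\epsilon)w_a'}$ is still smaller than her old share $\frac{B_a}{(1+\epsilon)\bar p}$. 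Your proposed uniform tool, ``the per-unit price faced by the deviator is monotone non-decreasing in her report,'' does not substitute for this: a fully served agent's allocation is $B_a$ divided by $(1+\epsilon)$ times the clearing price, so a weakly lower price under a lower report would, if anything, mean a weakly \emph{larger} allocation. What must be ruled out is precisely that the price she pays drops while she remains served, i.e.\ allocation monotonicity itself; that is the content of the paper's lemma and it is absent from your plan. (Your overreporting argument --- strictly more allocation forces a price level at least $(1+\epsilon)w_a>v_a/\tau_a$, violating RoS --- does match the paper's \cref{lem:single_truthful_2}.)

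The approximation chain also does not deliver the stated constant. From your bound $\opt(\{w_i\})\le B[k]+2w_{k+1}$ together with the algorithm's guarantees $\alg\ge B[k]/(1+\epsilon)$ and $\alg\ge w_{k+1}/(1+\epsilon)$, one only gets $\alg\ge \opt(\{w_i\})/\bigl(3(1+\epsilon)\bigr)$, not $\opt(\{w_i\})/(2+\epsilon)$; with the $(1+\epsilon)$ rounding loss this tends to $1/3$, not $1/2$. Even after tightening the greedy bound to $\opt(\{w_i\})\le B[k]+w_{k+1}$ (the budgets of agents $k<i\le k^{**}$ plus the leftover fraction fill at most one unit at rate at most $w_{k+1}$), routing through $\opt(\{w_i\})$ charges the rounding loss to the budget term as well and yields only $\frac{1}{2(1+\epsilon)^2}$, which is strictly worse than $\frac{1}{(1+\epsilon)(2+\epsilon)}$. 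The paper avoids this by bounding $\opt$ directly: the top-$k$ agents pay at most $B[k]$ (budgets, no rounding loss), and every agent below the pivot pays at most $(1+\epsilon)w_{k+1}$ per unit with total fraction at most one, so $\opt\le B[k]+(1+\epsilon)w_{k+1}\le(1+\epsilon)\alg+(1+\epsilon)^2\alg=(1+\epsilon)(2+\epsilon)\alg$. You should either adopt this asymmetric bound or claim only the weaker ratio your chain actually proves.
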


We first show that the allocation satisfies the budget constraint and the reported RoS constraint of each agent, then discuss the truthfulness, and finally give the analysis of the approximation ratio.

\begin{lemma}\label{lem:single_feasible}
    Given any $\vtau$, for each agent $i$, we have $p_i\leq B_i$ and $ \tau_i p_i \leq x_i v_i$.
\end{lemma}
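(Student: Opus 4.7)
My plan is to verify both inequalities directly by a case analysis that follows the branching of Algorithm 5. Before the cases, I would record the two invariants that drive everything: the rounding step produces $w_i$ that is a nonnegative integer power of $(1+\epsilon)$ satisfying $w_i \leq v_i/\tau_i < (1+\epsilon)w_i$, and after reindexing we have $w_1 \geq w_2 \geq \cdots \geq w_n$. Since $x_i > 0$ implies $p_i = x_i \cdot r$ for some per-unit price $r$, the RoS inequality $\tau_i p_i \leq x_i v_i$ reduces to checking $r \leq v_i/\tau_i$, which by the rounding invariant is implied by $r \leq w_i$.

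For the branch $B[k] > w_{k+1}$, budget feasibility is immediate from $C[k] \leq (1+\epsilon)B[k]$, which gives $p_i = B_i C[k]/((1+\epsilon)B[k]) \leq B_i$. The key step is the RoS constraint for $i \leq k$, which requires $C[k] \leq v_i/\tau_i$. Here I would use the fact that $w_k$ is itself a power of $(1+\epsilon)$ and $w_k \geq B[k]$, so by minimality of $C[k]$ as the smallest power of $(1+\epsilon)$ at least $B[k]$ we get $w_k \geq C[k]$. Combined with $v_i/\tau_i \geq w_i \geq w_k$ from the sort, this yields the bound.

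For the branch $B[k] \leq w_{k+1}$, I would split agents $i \leq k$ into those with $w_i > w_{k+1}$ and those with $w_i = w_{k+1}$. In the first subgroup the per-unit price is $(1+\epsilon)w_{k+1}$, so $p_i = B_i$ exactly meets the budget, and the RoS bound $(1+\epsilon)w_{k+1} \leq w_i \leq v_i/\tau_i$ follows because $w_i$ and $w_{k+1}$ are both powers of $(1+\epsilon)$ with $w_i > w_{k+1}$. In the second subgroup the per-unit price is $w_{k+1}$, so $p_i = B_i/(1+\epsilon) \leq B_i$ and the RoS bound reduces to $w_{k+1} \leq v_i/\tau_i$, which is immediate from $w_i = w_{k+1}$.

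The only remaining check is agent $k+1$ in this second branch. I would expand $p_{k+1} = (w_{k+1} - B[k])/(1+\epsilon)$ and then invoke the maximality of $k$: since $B[k+1] > w_{k+1}$, we have $w_{k+1} < B[k] + B_{k+1} \leq B[k] + (1+\epsilon)B_{k+1}$, which rearranges to $p_{k+1} \leq B_{k+1}$. Nonnegativity of $x_{k+1}$ comes from the case hypothesis $B[k] \leq w_{k+1}$, and the RoS inequality follows from the per-unit price $w_{k+1} \leq v_{k+1}/\tau_{k+1}$. Agents with $i > k+1$ receive nothing, so the constraints hold trivially. The main obstacle is not any hard calculation but rather keeping straight how the two rounding layers (the $w_i$'s and $C[k]$) interact with the maximality of $k$; the essential observation that unlocks everything is that both $w_k$ and $C[k]$ are powers of $(1+\epsilon)$, so the plain inequality $w_k \geq B[k]$ automatically upgrades to $w_k \geq C[k]$.
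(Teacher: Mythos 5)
Your proof is correct and follows essentially the same route as the paper: the same case analysis over the two branches, the same observation that $w_k\geq B[k]$ together with both $w_k$ and $C[k]$ being integer powers of $(1+\epsilon)$ upgrades to $C[k]\leq w_k$, and the same use of the maximality of $k$ (i.e., $B[k+1]>w_{k+1}$) to bound $p_{k+1}$ by $B_{k+1}$. (Only nitpick: the exponent in $w_i$ need not be nonnegative, but your argument only uses that $w_i$ and $C[k]$ are integer powers of $1+\epsilon$, so nothing is affected.)
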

\begin{proof}
    We discuss case by case. If $B[k]>w_{k+1}$, for an agent $i\leq k$, we have
    \[ p_i = \frac{B_i \cdot C[k]}{(1+\epsilon)B[k]} < B_i, \]
    and
    \[ \frac{p_i}{x_i} = C[k] \leq w_{k} \leq w_i\leq  \frac{v_i}{\tau_i}. \]
    The first inequality in the second formula used the fact that $k$ is an index with $B[k] \leq w_{k}$ and $w_{k}$ is an exponential multiple of $1+\epsilon$. For all other agents, obviously, the two constraints are satisfied since their payments are $0$.

    Consider the case that $B[k]\leq w_{k+1}$. For an agent $i\leq k$, clearly, the budget constraint is satisfied.  
    If $w_{i} > w_{k+1}$, since each $w_i$ is an exponential multiple of $1+\epsilon$, we have $w_i \geq (1+\epsilon)w_{k+1}$, and
    \[ \frac{p_i}{x_i} = (1+\epsilon)w_{k+1}  \leq w_i\leq  \frac{v_i}{\tau_i}. \]
    Otherwise, we have $w_{i} = w_{k+1}$ and
    \[ \frac{p_i}{x_i} = w_{k+1}  = w_i\leq  \frac{v_i}{\tau_i}. \]
    For agent $k+1$, the budget constraint holds because for index $k+1$, $B[k+1]>w_{k+1}$ (otherwise, $k$ is not the maximum index with $B[k]\leq w_k$). More specifically,
    \[ p_{k+1} = \frac{w_{k+1}-B[k]}{1+\epsilon} = \frac{B_{k+1}+w_{k+1}-B[k+1]}{1+\epsilon} < B_{k+1}.\]
    The RoS constraint is also easy to show:
    \[\frac{p_i}{x_i} \leq w_{k+1} \leq \frac{v_i}{\tau_i}.\]
    Finally, for all other agents, the two constraints are satisfied since their payments are $0$.
    
\end{proof}

Then we prove the truthfulness. Notice that changing the reported profile may change the indices of the agents in step~\ref{line:reindex}. To avoid confusion, we use agent $a$ to represent a certain agent.
We first show that any agent $a$ will not misreport a lower $\frac{v_a}{\tau_a}$ because when $\frac{v_a}{\tau_a}$ becomes smaller, $x_a$ cannot increase (\cref{lem:single_truthful_1}); and then build on the RoS constraints to prove the other hand (\cref{lem:single_truthful_2}).
\begin{lemma}\label{lem:single_truthful_1}
   For any agent $a$, $x_a$ is non-increasing as $\frac{v_a}{\tau_a}$ decreases.  
\end{lemma}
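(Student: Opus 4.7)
The plan is to reduce the claim to showing that $x_a$ is a weakly non-decreasing function of $w_a$ alone (with all other inputs fixed), viewed on the discrete grid $\{(1+\epsilon)^j : j \in \mathbb{Z}\}$. This reduction is immediate because the algorithm uses $v_a/\tau_a$ only through the rounded quantity $w_a$, which is itself weakly non-decreasing in $v_a/\tau_a$. By chaining, it then suffices to verify that a single downward step of $w_a$---from $w$ to $w/(1+\epsilon)$, with every other $w_i$ held fixed---cannot strictly increase $x_a$.

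For such a step, write $(i_a, k)$ for agent $a$'s sorted position and the algorithm's chosen index before the step, and $(i'_a, k')$ for their post-step counterparts. I first record two structural observations. Agent $a$ can only move to a weakly later position ($i'_a \geq i_a$), and for every sorted position $j$ strictly before $i_a$, the set of agents at positions $1, \ldots, j$ is unchanged, so the condition $B[j] \leq w_j$ is preserved. Consequently $k' \leq k$ can drop only because conditions at positions in $\{i_a, \ldots, i'_a\}$ newly fail. I then split the analysis by the location of $a$ relative to $k'$ (top tier, pivotal, or strictly below) and by which branch of the algorithm (Case 1 with $B[k] > w_{k+1}$, or Case 2 with $B[k] \leq w_{k+1}$) applies before and after the step.

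The substantive sub-cases reduce to comparing two explicit formulas: the top-tier allocation, $B_a/((1+\epsilon) B[k])$ in Case 1 or $B_a/((1+\epsilon) w_{k+1})$ in Case 2, and the pivotal-agent allocation $(w_a - B[k'])/((1+\epsilon) w_a)$ in Case 2. Three observations drive the proof. (a) When $a$ remains in the top $k'$, the denominators $B[k]$ and $w_{k+1}$ are determined by agents other than $a$, so $x_a$ is either unchanged or weakly smaller after the step. (b) When $a$ is (or becomes) the pivotal agent, the function $w \mapsto (w - B[k'])/((1+\epsilon) w)$ is strictly increasing in $w$ on $w \geq B[k']$, so a downward step in $w_a$ can only decrease it. (c) When $a$ falls strictly past the pivotal position, $x_a$ becomes $0$, which is trivially a decrease.

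The main obstacle will be the transition case in which the single step simultaneously shrinks $k$ and switches the algorithm between Case 1 and Case 2, for instance across the boundary $B[k] = w_{k+1}$. To handle these, I would verify that the two allocation formulas agree continuously at the Case 1/Case 2 interface (so no discontinuous jump in agent $a$'s favor occurs) and that every induced transition collapses to one of (a)--(c). Chaining the per-step bounds then yields the weak monotonicity of $x_a$ in $w_a$, and hence the lemma.
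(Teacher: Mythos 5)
Your reduction to a single downward grid step is legitimate, but the structural backbone of your case analysis is wrong, and the step you treat as routine is exactly where the work lies. First, the invariant $k'\leq k$ is false: when agent $a$ slides down, her budget $B_a$ leaves the prefix sums at positions $j\geq i_a$, so the conditions there can newly \emph{hold} rather than newly fail, and $k'$ can strictly exceed $k$. For instance, take three agents with rounded values and budgets $(w,B)$ equal to $(128,1)$, $(64,1000)$ for $a$, and $(64,1)$ for an agent tie-broken after $a$: before the step $B[2]=1001>64$ gives $k=1$, while after $a$ drops one grid level the new prefix $1+1=2\leq 64$ holds and $k'=2$. Second, your case (a) is not justified by the reason you give: $B[k]$ and $w_{k+1}$ are emphatically not ``determined by agents other than $a$'' --- the index $k$, the composition of the top set, and which branch of \cref{alg:single} applies all depend on $a$'s report. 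What is true, and what must be proved, is that if $a$ remains in the top group after the step then the top group itself is unchanged, hence so is the clearing price $\max\{B[k],w_{k+1}\}$; the paper derives this from $\min_{i\in S\setminus\{a\}} w_i \geq \sum_{i\in S}B_i > \sum_{i\in S\setminus\{a\}}B_i$ together with an argument that $a$ and the old $(k{+}1)$-st agent cannot both lie in the new top set. Without such an argument, (a) asserts the hardest sub-case of the lemma rather than proving it.

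Third, case (b) does not reduce to monotonicity of $w\mapsto (w-B[k'])/((1+\epsilon)w)$: when $a$ moves from the top tier to the pivotal slot, or stays pivotal while $k$ changes, the prefix sum changes too --- before the step it is $B[k]$, which contains $B_a$; afterwards it is the new prefix, which does not contain $B_a$ but may contain budgets of agents who newly entered the top set. So one must compare $B_a/\bigl((1+\epsilon)\max\{B[k],w_{k+1}\}\bigr)$ against the new pivotal allocation, which requires relating the old and new clearing prices (e.g., in the paper's case $k'=k-1$ one shows $\sum_{i\in S}B_i> w_a'$, so the clearing price strictly drops and the algebra closes). By contrast, the ``main obstacle'' you flag --- continuity across the $B[k]=w_{k+1}$ interface --- is a non-issue once the top-tier allocation is written as $B_a/\bigl((1+\epsilon)\max\{B[k],w_{k+1}\}\bigr)$. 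As it stands, the proposal rests on a false invariant and leaves the crucial comparisons unproved; the paper's proof instead handles an arbitrary decrease in $\frac{v_a}{\tau_a}$ at once, tracking the clearing price and using the set-inclusion argument $S\setminus\{a\}\subseteq S'$ to pin down the few possible configurations.
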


\begin{proof}
    Given a reported profile $(\vecv,\vtau)$, refer to $\bp = \max\{B[k],w_{k+1}\}$ as the market clearing price. Decreasing $\frac{v_a}{\tau_a}$ unilaterally may change the value of $k$, the top-$k$ agents $S$, the index $\pi(a)$ of agent $a$, and the market clearing price $\bp$. Use $k'$, $S'$, $\pi'(a)$ and $\bp'$ to denote the three terms respectively after decreasing $\frac{v_a}{\tau_a}$ to $\frac{v_a'}{\tau_a'}$. 
    Clearly, if the current index $\pi(a)$ is already larger than $k$, $x_a$ is either $0$ or $\frac{1}{1+\epsilon}-\frac{B[k]}{(1+\epsilon)w_{k+1}}$, and will not increase as $\frac{v_a}{\tau_a}$ decreases. Thus, we only need to consider the case that $\pi(a)$ is at most $k$, i.e., $x_a= \frac{B_a}{(1+\epsilon)\bp}$. 

    Due to the observation that $\min_{i\in S\setminus\{a\}} w_i \geq \min_{i\in S} w_i \geq \sum_{i\in S} B_i > \sum_{i\in S\setminus \{a\}} B_i$, we have $k'\geq k-1$ and $S\setminus \{a\} \subseteq S'$ after decreasing $\frac{v_a}{\tau_a}$. If $k'=k-1$, w.l.o.g., we can assume that the new index $\pi'(a)$ is $k'+1$ and the new market clearing price $\bp'$ is $w'_a$; otherwise, agent $a$ obtains nothing. Let agent $b$ be the $(k+1)$-th player when the reported profile is $(\vecv,\vtau)$. Since $\pi'(a)=k'+1=k$, agent $a$ still ranks higher than agent $b$, i.e., $w'_a\geq w_b$.
    Then according to the definition of $k'$, we see that the market clearing price decreases:
    \[\bp = \sum_{i\in S} B_i > w'_a = \bp'. \]
    Thus, 
    \[x_a' = \frac{1}{1+\epsilon} - \frac{\sum_{i\in S\setminus \{a\}}B_i}{(1+\epsilon)\bp'} < \frac{1}{1+\epsilon} - \frac{\sum_{i\in S\setminus \{a\}}B_i}{(1+\epsilon)\bp} = \frac{\sum_{i\in S} B_i -  \sum_{i\in S\setminus \{a\}}B_i }{(1+\epsilon)\bp} = x_a.\]

    For the case that $k' \geq k$, we claim that either agent $a$ or agent $b$ is contained in $S'$. Suppose that $b\notin S'$. Since only agent $a$ changes the reported profile, it is easy to verify that $k'=k$ and $S'=S$, implying that $\bp'=\bp=\max\{\sum_{i\in S}B_i, w_{b}\}$ and $x_a'=x_a$. If $b \in S'$, due to the fact that $\sum_{i\in S\setminus \{a\}} B_i + B_a +B_b > w_b$ (the definition of $k$), agent $a$ can not belong to $S'$. Without loss of generality, assume that $\pi'(a)=k'+1$ and $\bp'=w'_a$; otherwise, $x'_a=0$. 
    We also see that the market clearing price is non-increasing: $\bp \geq w_b \geq \bp'$. Thus,
    \[ x'_a = \frac{1}{1+\epsilon}-\frac{\sum_{i\in S'}B_i}{(1+\epsilon)\bp'} \leq \frac{1}{1+\epsilon}-\frac{\sum_{i\in S\setminus \{a\}} B_i + B_b}{(1+\epsilon)\bp} = \frac{\bp-\sum_{i\in S\setminus \{a\}} B_i - B_b}{(1+\epsilon)\bp}.\]
    Regardless of whether $\bp$ takes the value $w_b$ or $\sum_{i\in S}B_i$, we always have $\bp - \sum_{i\in S\setminus \{a\}} B_i - B_b < B_a$, which implies that $x'_a<x_a$ and completes the proof. 
    
\end{proof}

\begin{lemma}\label{lem:single_truthful_2}
Consider any agent $a$ and any $\frac{v_a'}{\tau_a'}> \frac{v_a}{\tau_a}$. If $x'_a >x_a$, then $v_a x'_a < \tau_a p'_a$.
\end{lemma}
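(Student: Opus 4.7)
The plan is to analyze the per-unit price $p'_a/x'_a$ that agent $a$ pays under the misreport and show it strictly exceeds $v_a/\tau_a$. Because the rounding in Line~1 of the algorithm yields $v_a/\tau_a < (1+\epsilon)w_a$, it suffices to prove $p'_a/x'_a \geq (1+\epsilon)w_a$. A first observation is that the mechanism depends on the reported profile only through the rounded values $w_i$; so $w'_a = w_a$ would force $x'_a = x_a$, and the hypothesis $x'_a > x_a$ therefore implies $w'_a > w_a$. Since both $w'_a$ and $w_a$ are powers of $1+\epsilon$, we in fact have $w'_a \geq (1+\epsilon)w_a$.

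Let $i_0 := \pi(a)$ and $J := \pi'(a)$ denote $a$'s ranks under the two reports, and note $J \leq i_0$. The second step is to rule out the case $i_0 \leq k$: if $a$ already lies in the original top-$k$, then promoting $a$ leaves the top-$k$ set unchanged, so $B[k]$ is preserved; because $k+1 > i_0$, the agent at new position $k+1$ is still the original $(k+1)$-th agent, whence $w'_{k+1} = w_{k+1}$; and $w'_k \geq w_k$ guarantees that $k$ remains valid in the new instance while $k+1$ stays invalid. Therefore $k' = k$, the market clearing price satisfies $\bar p' = \bar p$, and the formula for top-$k'$ agents gives $x'_a = B_a/((1+\epsilon)\bar p') = x_a$, contradicting the hypothesis. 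So $i_0 > k$.

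Under $i_0 > k$, I would split on $a$'s role in the modified allocation. When $\pi'(a) = k'+1$, or $\pi'(a) \leq k'$ with $w'_a = w'_{k'+1}$, the per-unit price equals $w'_a \geq (1+\epsilon)w_a > v_a/\tau_a$ and we are done. The remaining case is $\pi'(a) \leq k'$ with $w'_a > w'_{k'+1}$, where the per-unit price is at least $(1+\epsilon)w'_{k'+1}$ (exactly this in Case~2 of the mechanism, and at least this in Case~1 because $C[k']$ is the smallest power of $1+\epsilon$ not below $B[k'] > w'_{k'+1}$). If $k' < i_0$, the agent at new position $k'+1$ is the original $k'$-th agent, so $w'_{k'+1} = w_{k'} \geq w_{i_0} = w_a$ and we conclude. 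The main obstacle is the subcase $k' \geq i_0$, where $w'_{k'+1}$ may lie below $w_a$. Here I would first argue $k' = i_0$ exactly, using that for every index $i > i_0$ the quantities $B[i]$ and $w'_i$ agree with their original values, so validity at such $i$ is preserved, and any valid $i > i_0 > k$ would contradict the maximality of $k$ originally. With $k' = i_0$, the new top-$i_0$ coincides as a set with the original top-$i_0$, so $B[k']$ equals the original $B[i_0]$, which strictly exceeds $w_{i_0} = w_a$ (because $i_0 > k$ is invalid originally). This forces Case~1 of the mechanism (Case~2 would require $B[i_0] \leq w_{i_0+1} \leq w_a$), so the per-unit price equals $C[i_0]$; and since $C[i_0]$ is a power of $1+\epsilon$ strictly above $w_a$, we conclude $C[i_0] \geq (1+\epsilon)w_a > v_a/\tau_a$, completing the argument.
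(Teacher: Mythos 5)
Your proof is correct, and it reaches the same target inequality as the paper — that under the misreport the per-unit price charged to $a$ is at least $(1+\epsilon)w_a$, which by the rounding strictly exceeds $v_a/\tau_a$ — but it gets there by a different, more self-contained route. The paper's argument is a two-case sketch built on the market clearing price $\max\{B[k],w_{k+1}\}$: it observes that $x_a'>x_a$ forces $w_a$ to be at or below that price, and then invokes (without proof, implicitly reusing the spirit of the companion monotonicity lemma for the other direction) the claim that raising one's report cannot lower the price, so the new unit price lands at least one $(1+\epsilon)$ level above $w_a$. You instead track explicitly how the sorted order, the critical index $k$, and the mechanism's two branches change when $w_a$ increases: you verify that an agent already in the top-$k$ cannot change her allocation, that $k'\in\{i_0-? ,\dots\}$ is pinned down ($k'=i_0$ in the delicate subcase via the unchanged prefix sums $B[i]$ and values $w_i'$ for $i>i_0$), and in each surviving configuration you read off the price ($w'_a$, $(1+\epsilon)w'_{k'+1}$, or $C[k']$) and bound it below by $(1+\epsilon)w_a$ using that all these quantities are powers of $1+\epsilon$. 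What your version buys is rigor: it substantiates the price-monotonicity step the paper only asserts, and it handles the tie/level structure explicitly; what it costs is length, where the paper's clearing-price abstraction gives a much shorter (if terser) argument. Two cosmetic points: in your Case analysis the phrase ``the per-unit price equals $w'_a$'' should read ``is at least $w'_a$'' (in the $B[k']>w'_{k'+1}$ branch the price is $C[k']>w'_a$), and in the same case the quantity compared with $w'_{k'+1}$ should be the new prefix sum $B'[k']$ rather than $B[k']$; neither affects the conclusion.
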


\begin{proof}
Use $\bp$ to denote the market clearing price when the reported profile is $(\vecv,\vtau)$. 
Clearly, if $w_a > \bp$, we have $x'_a = x_a$ for any $\frac{v_a'}{\tau_a'}> \frac{v_a}{\tau_a}$. In other words, $x'_a>x_a$ happens only when $w_a \leq \bp$.

We distinguish two cases. First, if $w_a < \bp$, the current price of the item (for agent $a$) is at least $(1+\epsilon)w_a$. Noticing that increasing $\frac{v_a}{\tau_a}$ cannot decrease the price, we have
\[ \frac{p_a}{x_a} \geq (1+\epsilon)w_a > \frac{v_a}{\tau_a}. \]

For the case that $w_a=\bp$. Since \cref{alg:single} breaks the ties in a fixed manner, $x_a$ increases only when agent $a$ jumps to the higher level, i.e., $w_a' > \bp$. Thus, according to the payment rule, we still have
\[ \frac{p_a}{x_a} \geq (1+\epsilon)w_a > \frac{v_a}{\tau_a}. \]
\end{proof}

Combining~\cref{lem:single_truthful_1} and \cref{lem:single_truthful_2} proves the truthfulness of the mechanism.
Finally, we analyze the approximation ratio of the mechanism.

\begin{lemma}\label{lem:single_ratio}
    \cref{alg:single} is $\frac{1}{(1+\epsilon)(2+\epsilon)}$-approximation.
\end{lemma}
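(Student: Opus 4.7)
The plan is to first derive a clean upper bound on $\opt$ that splits along the index $k$, and then lower bound the algorithm's revenue in each of the two cases of \cref{alg:single}. For the upper bound, note that by the rounding in step 1, $v_i/\tau_i < (1+\epsilon) w_i$, and recall that agents are sorted so $w_1 \ge w_2 \ge \cdots \ge w_n$. Hence for any feasible $\vx$ with $\sum_i x_i \le 1$,
\[
\sum_i \min\Bigl\{B_i,\; x_i \tfrac{v_i}{\tau_i}\Bigr\} \;\le\; \sum_{i \le k} B_i \;+\; \sum_{i > k} x_i (1+\epsilon)w_i \;\le\; B[k] + (1+\epsilon) w_{k+1},
\]
using that $w_i \le w_{k+1}$ for $i > k$ and $\sum_{i > k} x_i \le 1$. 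This gives $\opt \le B[k] + (1+\epsilon) w_{k+1}$, which will be the single workhorse estimate.

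Next I would handle \textbf{Case 1} where $B[k] > w_{k+1}$. A direct summation of the prescribed payments shows the algorithm earns $C[k]/(1+\epsilon) \cdot \sum_{i \le k} B_i/B[k] = C[k]/(1+\epsilon)$, which is at least $B[k]/(1+\epsilon)$ since $C[k] \ge B[k]$. Combined with the OPT bound and the strict inequality $w_{k+1} < B[k]$, we obtain $\opt < B[k] + (1+\epsilon) B[k] = (2+\epsilon) B[k]$, so $\mathrm{ALG}/\opt \ge 1/[(1+\epsilon)(2+\epsilon)]$.

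For \textbf{Case 2} where $B[k] \le w_{k+1}$, the algorithm's revenue splits across three groups: agents in $L = \{i \le k : w_i > w_{k+1}\}$ pay $B_i$ in full, agents in $E = \{i \le k : w_i = w_{k+1}\}$ pay $B_i/(1+\epsilon)$, and agent $k+1$ contributes $(w_{k+1} - B[k])/(1+\epsilon)$. Using $B_L + B_E = B[k]$, a short algebraic simplification gives total revenue $B_L \cdot \epsilon/(1+\epsilon) + w_{k+1}/(1+\epsilon) \ge w_{k+1}/(1+\epsilon)$. Plugging $B[k] \le w_{k+1}$ into the OPT bound yields $\opt \le w_{k+1} + (1+\epsilon) w_{k+1} = (2+\epsilon) w_{k+1}$, again giving $\mathrm{ALG}/\opt \ge 1/[(1+\epsilon)(2+\epsilon)]$.

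The main obstacle is simply bookkeeping in Case 2: separating the top-$k$ agents into the strict upper level $L$ and the tie level $E$, carefully justifying the revenue expression (with payments on two different price tiers $w_{k+1}$ and $(1+\epsilon)w_{k+1}$), and verifying the allocation remains feasible (the sold mass is exactly $1/(1+\epsilon)$). Once these details are in place, combining the two cases yields the claimed bound; the limit $\epsilon \to 0$ recovers the $1/2$ approximation mentioned in the statement.
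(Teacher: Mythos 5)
Your proposal is correct and follows essentially the same route as the paper's proof: the same upper bound $\opt \le B[k] + (1+\epsilon)w_{k+1}$ (budgets for the top-$k$ agents, RoS plus unit supply for the rest) and the same case analysis showing $\alg \ge w_{k+1}/(1+\epsilon)$ and $\alg \ge B[k]/(1+\epsilon)$ in the respective cases. Your Case 2 bookkeeping with the two price tiers $L$ and $E$ is just a more explicit computation of the same lower bound the paper states directly.
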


\begin{proof}
    The proof is technically simple and similar to the analysis in~\cite[Theorem 4.4]{DBLP:conf/icalp/DobzinskiL14}. Use $\opt$ and $\alg$ to denote the optimal payment and our payment respectively. We first give an upper bound of $\opt$ and then establish the relationship between the upper bound and $\alg$. For the top-$k$ agents, due to the budget constraints, the optimal mechanism charges them at most $B[k]$; while for all the remaining agents, due to the RoS constraints, the optimal mechanism charges them at most $\max_{i>k} \frac{v_i}{\tau_i}\leq (1+\epsilon)w_{k+1}$. Namely,
    \[\opt \leq B[k] + (1+\epsilon)w_{k+1}.\]

    Then we analyze $\alg$. If $B[k] > w_{k+1}$, our total payment is
    \[\alg = \sum_{i\in[k]} p_i = \sum_{i\in [k]} \frac{B_i\cdot C[k]}{(1+\epsilon)B[k]} \geq \frac{B[k]}{(1+\epsilon)} > \frac{w_{k+1}}{(1+\epsilon)};\]
    while if $B[k]\leq w_{k+1}$, the total payment is
    \[\alg = \sum_{i\in[k]} p_i + p_{k+1} \geq \frac{B[k]}{1+\epsilon} + \frac{w_{k+1}}{1+\epsilon} - \frac{B[k]}{1+\epsilon} = \frac{w_{k+1}}{1+\epsilon} \geq \frac{B[k]}{1+\epsilon}. \]
    Thus, in either case, we have
    \begin{equation*}
        \begin{aligned}
            (1+\epsilon) \alg + (1+\epsilon)^2\alg &> \opt\\
            \alg & > \frac{\opt}{(1+\epsilon)(2+\epsilon)}.
        \end{aligned}
    \end{equation*}
    
\end{proof}

\subsection{Multiple Items Auction for Additive Agents}\label{subsec:add}

\begin{algorithm}[tb]
	\caption{\; Multiple Items Auction for Additive Agents in the Partially Private Setting}
	\label{alg:additive} 	
	\begin{algorithmic}[1] 
		\Require The value profile $\{v_{ij}\}_{i\in[n],j\in[m]}$, the budgets $\{B_i\}_{i\in [n]}$, the reported target ratios $\{\tau_i\}_{i\in [n]}$ and a parameter $\epsilon>0$.
		\Ensure An allocation and payments.
		\State For each agent $i$, divide the budget into $m$ sub-budgets: $\left\{B_{ij}=B_i \cdot \frac{v_{ij}}{\sum_{j'\in [m]} v_{ij'}}\right\}_{j\in[m]} $. 
            \For{each item $j$}
            \State Run~\cref{alg:single} on the input $\left( \{v_{ij}\}_{i\in[n]}, \{B_{ij}\}_{i\in [n]}, \{\tau_i\}_{i\in [n]}, \epsilon \right)$ and obtain the solution $ \left( \vz_j=\{z_{ij}\}_{i\in [n]}, \vp(\vz_j)=\{p_{i}(\vz_j)\}_{i\in [n]} \right) $
            \State Pick the $(k+1)$-th agent $a$ in the permutation generated by the last line. 
            \State If $z_{aj} < \frac{1}{2}$, set $z_{aj} \leftarrow 0, p_i(\vz_j)\leftarrow 0$. \Comment{Item Supply Clipping}
            \EndFor
            \For{each agent $i$}
            \State For each item $j$, define $T_i(j) := \{j'\in [m] \mid z_{ij'} \geq z_{ij}\}$ and $U_i(j) := z_{ij} \cdot \sum_{j'\in T_i(j)} v_{ij'}$.
            \State Define $h(i):= \argmax_{j\in [m]} U_i(j)$. For each item $j$, set $x_{ij}\leftarrow z_{i,h(i)}$ if $j\in T_i(h(i))$ and 0 otherwise. Then, let $p_i\leftarrow \min\left\{B_i, \frac{U_i(h(i))}{\tau_i} \right\}$.
            \EndFor
		\State \Return Allocation $\{x_{ij}\}_{i\in[n],j\in [m]}$ and payments $\{p_i\}_{i\in[n]}$.
	\end{algorithmic}
\end{algorithm}

In this subsection, we build on the aforementioned single-item auction to give a truthful mechanism for multiple-items auction. 
The mechanism is described in~\cref{alg:additive}. 
One critical part of the mechanism is that it splits the budget of each agent and runs~\cref{alg:single} for each item to get solution $\left(\vz,\vp(\vz)\right)$. We observe that although each single item auction is truthful individually, outputting $\left(\vz,\vp(\vz)\right)$ directly gives an untruthful mechanism. An agent may misreport a lower target ratio to obtain more value because even if for some item $j$, the RoS constraint is violated (i.e., $\exists j\in [m], \frac{v_{ij}z_{ij}}{ p_i(\vz_j) } < \tau_i$), it is possible that the overall RoS constraint still holds when summing over all items because the return-on-spend ratio $\frac{v_{ij}z_{ij}}{ p_i(\vz_j) }$ of each bought item $j$ is different. 

A natural idea to handle this issue is raising the purchase prices of some items for an agent to guarantee that the agent's return-on-spend ratio of each bought item equals $\min \limits_{j:p_i(\vz_j)>0} \frac{v_{ij}z_{ij}}{ p_i(\vz_j) }$ so that once the agent violates the RoS constraint on some item, the overall RoS constraint must also be violated. 
Following this line, since the purchase prices are raised, to maintain the budget constraints, we need to reduce the number of items assigned to each agent. 
Thus, in~\cref{alg:additive}, we introduce $T_i(j)$ and let agent $i$ buy at most $z_{ij}$ fraction of any item $j'\in T_i(j)$. Finally, to maximize the total revenue, the mechanism charges each agent her maximum willingness-to-pay.
We state the main theorem in the following.

\begin{theorem}\label{thm:additive} 
\cref{alg:additive} is feasible, truthful, and obtains an approximation ratio of $\upOmega(\frac{1}{\sqrt{n}})$ when the budget profile and the value profile are public.
\end{theorem}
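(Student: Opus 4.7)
The plan has three obligations: feasibility, truthfulness in the one private coordinate $\tau_i$, and the $\upOmega(1/\sqrt{n})$ revenue bound. The first two are essentially book-keeping against the construction, combined with Lemmas~\ref{lem:single_truthful_1}--\ref{lem:single_truthful_2}; the revenue bound is the substantive work.

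For feasibility, since $j \in T_i(h(i))$ forces $z_{ij} \geq z_{i,h(i)}$, we have $x_{ij} = z_{i,h(i)} \leq z_{ij}$, so $\sum_i x_{ij} \leq \sum_i z_{ij} \leq 1$ descends from the single-item feasibility. The bundle value is $v_i \cdot x_i = z_{i,h(i)} \sum_{j \in T_i(h(i))} v_{ij} = U_i(h(i))$, so the payment $p_i = \min\{B_i, U_i(h(i))/\tau_i\}$ meets both constraints by definition. For truthfulness, Lemma~\ref{lem:single_truthful_1} applied to every single-item sub-auction gives that each $z_{ij}$ is non-increasing in $\tau_i$. Over-reporting $\tau_i' > \tau_i$ therefore only shrinks every $z_{ij}$, and a pointwise-monotonicity argument on $\max_k z_{i,k} \sum_{j' \in T_i(k)} v_{ij'}$ shows that the obtained value $U_i(h(i))$ weakly shrinks. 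Under-reporting $\tau_i' < \tau_i$ may enlarge the $z_{ij}$'s and hence $U_i(h(i))$, but either the charge $U_i(h(i))/\tau_i'$ already violates the true RoS (since $U_i(h(i))\tau_i/\tau_i' > U_i(h(i))$), or the $(1+\epsilon)$-markup baked into Algorithm~\ref{alg:single} (via the reasoning in Lemma~\ref{lem:single_truthful_2}) guarantees that when the budget caps the payment, the slack between $\tau_i'$ and $\tau_i$ still forces $B_i \tau_i > U_i(h(i))$ in the deviated instance.

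The approximation bound is the heart of the argument. I would route it through the per-item sub-problem optima. Let $\opt_j$ denote the offline optimum of the instance passed to Algorithm~\ref{alg:single} on item $j$. By Theorem~\ref{thm:single}, the single-item auction extracts $\upOmega(\opt_j)$ revenue, and the combination step preserves a constant factor because $U_i(h(i)) \geq v_{ij} z_{ij}$ for every $j$ (specializing the bundle via $T_i(j) \supseteq \{j\}$), while the supply-clipping step discards at most the half-fraction of the pivotal agent. The non-trivial step is comparing $\sum_j \opt_j$ to the multi-item $\opt$, where the proportional sub-budget split $B_{ij} = B_i v_{ij}/\sum_{j'} v_{ij'}$ can under-allocate budget on items where one agent carries a heavy OPT contribution.

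The $\sqrt{n}$ factor should emerge from a threshold decomposition. Call a pair $(i,j)$ \emph{heavy} if $v_{ij} x^*_{ij}/\tau_i \geq \opt/\sqrt{n}$ and \emph{light} otherwise. There are at most $\sqrt{n}$ heavy pairs total, so their sub-budgets $B_{ij}$ still exceed a constant fraction of the corresponding true OPT contribution and are absorbed by the relevant single-item auctions. Light contributions aggregate over the $n$ sub-problems by a Cauchy--Schwarz-style averaging argument, contributing at most a $\sqrt{n}$-factor loss. Balancing the two regimes at the chosen threshold yields the stated $\upOmega(1/\sqrt{n})$ bound. The delicate step---where I expect the main technical effort---is the light regime, in which the concentration-style reasoning from the liquid-welfare literature must be adapted to the value-maximizer payment rule that charges the full willingness-to-pay rather than a budget-respecting marginal price.
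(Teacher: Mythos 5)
Your feasibility argument matches the paper's, and your truthfulness plan has the right shape (monotonicity of the obtained value in $\tau_i$ for over-reporting, an RoS violation for under-reporting), but the one branch that carries real content is exactly the one you wave at. When under-reporting $\tau_i'<\tau_i$ increases the obtained value and the payment is capped at $B_i$, nothing about ``the slack between $\tau_i'$ and $\tau_i$'' by itself forces $B_i\tau_i > U'_i(h'(i))$. The paper's \cref{lem:truth_2} gets this by exhibiting an item $l\in T'_i(h'(i))$ with $z'_{il}>z_{il}$, invoking \cref{lem:single_truthful_2} to get $p'_i(\vz'_l)/z'_{il} > v_{il}/\tau_i$, and then constructing per-item charges $q'_{ij}=x'_{ij}\cdot\frac{p'_i(\vz'_l)}{z'_{il}}\cdot\frac{v_{ij}}{v_{il}}$ which simultaneously satisfy $\sum_j q'_{ij}\le \min\{B_i, U'_i(h'(i))/\tau_i'\}=p'_i$ and $\sum_j q'_{ij} > U'_i(h'(i))/\tau_i$; the budget half of this sandwich relies crucially on the proportionality $B_{il}/v_{il}=B_{ij}/v_{ij}$ coming from the sub-budget split, which you never use.

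The approximation argument is where the proposal genuinely breaks. First, your claim that the combination step loses only a constant ``because $U_i(h(i))\ge v_{ij}z_{ij}$ for every $j$'' does not work: the agent's final payment is a single number $\min\{B_i,U_i(h(i))/\tau_i\}$, while the single-item auctions extract from her a \emph{sum} over items, and a bound against each item separately (i.e., against the max) does not bound the sum. Closing this gap is precisely \cref{lem:add_ratio2}, which needs the minimum-allocation item $g(i)$ (so that every bought item lies in $T_i(g(i))$), the supply-clipping guarantee (either $z_{i,g(i)}\ge 1/2$, or agent $i$ is among the top-$k$ so $p_i(\vz_{g(i)})\ge B_{i,g(i)}/(1+\epsilon)$), and again the proportional split $B_{ij}\propto v_{ij}$. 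Second, your heavy/light decomposition of OPT pairs at threshold $\opt/\sqrt n$ does not go through: the heavy-pair claim that ``their sub-budgets $B_{ij}$ still exceed a constant fraction of the corresponding true OPT contribution'' is false in general, since $B_{ij}=B_i v_{ij}/\sum_{j'}v_{ij'}$ can be an arbitrarily small fraction of the revenue OPT extracts from agent $i$ through item $j$ (e.g., uniform values over many items with OPT revenue concentrated on one item); and the light regime is not an argument---\cref{alg:additive} is deterministic, so there is no concentration to appeal to, and the number of sub-auctions is $m$, not $n$, so averaging over them cannot be the source of the $\sqrt n$. In the paper the $\sqrt n$ enters elsewhere: \cref{lem:add_ratio1} compares $\opt$ to the split-budget benchmark $\bopt$ by running a greedy on the split instance and partitioning \emph{agents} according to whether their greedy payment exceeds $B_i/\sqrt n$, with the chain in \cref{eq:5-3} exploiting that both the sub-budgets and the weights $w_{ij}=v_{ij}/\tau_i$ are proportional to $v_{ij}$; only then do \cref{thm:single} (applied per item against its sub-budget optimum) and the clipping factor supply the remaining constants. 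As it stands, both quantitative pillars of your plan are missing or incorrect, so the $\upOmega(1/\sqrt n)$ bound is not established.
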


\subsubsection{Feasibility and Truthfulness}

We start by proving the feasibility and the truthfulness of the mechanism.

\begin{lemma}\label{lem:add_feasible}
For each item $j\in [m]$, \cref{alg:additive} satisfies the unit item supply constraint: $\sum_{i\in [n]} x_{ij} \leq 1$. For each agent $i\in [n]$, the mechanism satisfies the budget constraint and the RoS constraint: $p_i\leq B_i$ and $\tau_ip_i \leq \sum_{j\in[m]}x_{ij}v_{ij}$.
\end{lemma}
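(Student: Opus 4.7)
The plan is to verify each of the three constraints in turn, leveraging the fact that Algorithm~\ref{alg:single} was already shown feasible in Lemma~\ref{lem:single_feasible} and that the construction of $x_{ij}$ from $z_{ij}$ is only a downward redistribution.

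First, for the unit item supply constraint, I would observe that by definition $x_{ij} = z_{i,h(i)}$ exactly when $j \in T_i(h(i))$, i.e., when $z_{ij} \geq z_{i,h(i)}$. In particular, for every agent $i$ and item $j$ we have $x_{ij} \leq z_{ij}$ (since $x_{ij}$ is either $z_{i,h(i)} \leq z_{ij}$ when $j\in T_i(h(i))$, or $0$ otherwise). Summing over $i$,
\[
\sum_{i\in[n]} x_{ij} \;\leq\; \sum_{i\in[n]} z_{ij} \;\leq\; 1,
\]
where the last inequality uses Lemma~\ref{lem:single_feasible} applied to the single-item auction run on item $j$ (the item supply clipping step only decreases some $z_{aj}$ to $0$, so it cannot violate this).

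Next, the budget constraint $p_i \leq B_i$ is immediate from the definition $p_i \leftarrow \min\{B_i,\,U_i(h(i))/\tau_i\}$, so there is nothing to do there.

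Finally, for the RoS constraint, the key identity is that $U_i(h(i))$ can be rewritten in terms of $x_{ij}$: by construction, for each $j \in T_i(h(i))$ we have $x_{ij} = z_{i,h(i)}$, while for $j \notin T_i(h(i))$ we have $x_{ij} = 0$. Therefore
\[
U_i(h(i)) \;=\; z_{i,h(i)} \sum_{j' \in T_i(h(i))} v_{ij'} \;=\; \sum_{j' \in [m]} x_{ij'}\, v_{ij'}.
\]
Combining with $\tau_i p_i \leq U_i(h(i))$, which follows directly from $p_i \leq U_i(h(i))/\tau_i$, yields $\tau_i p_i \leq \sum_{j'} x_{ij'} v_{ij'}$, as required.

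Overall, none of the three parts is a real obstacle: the proof is essentially a bookkeeping exercise reading off the definitions in Algorithm~\ref{alg:additive}. The only subtlety worth highlighting is why the item supply clipping does not break anything, and why the specific choice of $x_{ij}$ being constant (equal to $z_{i,h(i)}$) across all $j \in T_i(h(i))$ makes the RoS bound collapse into a clean sum $\sum_{j'} x_{ij'} v_{ij'}$. The motivation for this choice -- equalizing the return-on-spend ratio across purchased items -- is exactly what is needed later for truthfulness, but for feasibility alone it simply makes the algebra line up.
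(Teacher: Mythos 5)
Your proof is correct and follows essentially the same route as the paper: $x_{ij}\leq z_{ij}$ gives the supply bound, the budget bound is definitional, and the identity $U_i(h(i))=\sum_{j}x_{ij}v_{ij}$ gives the RoS bound. One small nit: the inequality $\sum_{i}z_{ij}\leq 1$ is not actually part of Lemma~\ref{lem:single_feasible} (which only covers budget and RoS); it follows instead from inspecting Algorithm~\ref{alg:single} directly, whose total allocation is $\frac{1}{1+\epsilon}$ in both cases, which is also how the paper implicitly uses it.
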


\begin{proof}

For each item $j$, since $ \vz_j$ is the assignments returned by running~\cref{alg:single} and applying an item supply clipping, we have $\sum_{i\in [n]}z_{ij} \leq 1$. According to the definition of $T_i(h(i))$, for any item $j\in T_i(h(i))$, $z_{ij} \geq z_{i,h(i)}$ and thus, $x_{ij} = z_{i,h(i)} \leq z_{ij}$, proving that the unit item supply constraints are satisfied.

For each agent $i$, the mechanism charges her $\min\left\{ B_i,\frac{U_i(h(i))}{\tau_i} \right\}$. According to the definition of $U_i(h(i))$, we see that this is exactly the total value of the obtained items. Hence, the mechanism satisfies the budget constraint and the RoS constraint. 

\end{proof}

Similar to the last subsection, we use two lemmas to prove the truthfulness. 

\begin{lemma}\label{lem:truth_1}
    For any agent $i$, $\sum_{j\in [m]}v_{ij}x_{ij}$ is non-increasing as $\tau_i$ increases.
\end{lemma}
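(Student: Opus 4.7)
The plan is to express $\sum_j v_{ij} x_{ij}$ in a form whose monotonicity in $\tau_i$ reduces to pointwise monotonicity of each $z_{ij}$. By construction, $x_{ij} = z_{i,h(i)}$ for $j \in T_i(h(i))$ and $0$ otherwise, so
\[
\sum_{j \in [m]} v_{ij} x_{ij} \;=\; z_{i,h(i)} \cdot \sum_{j' \in T_i(h(i))} v_{ij'} \;=\; U_i(h(i)) \;=\; \max_{j \in [m]} U_i(j).
\]
Unfolding the definition of $T_i(j)$, this maximum can be rewritten as $\sup_{z>0} g(z)$, where $g(z) := z \cdot \sum_{j' : z_{ij'} \geq z} v_{ij'}$. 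Indeed, $g$ is linear and nondecreasing on each interval between consecutive order statistics of $\{z_{ij'}\}_{j'\in[m]}$, so its supremum is attained at some $z = z_{ij}$ and equals $\max_j U_i(j)$.

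The central ingredient is then that each coordinate $z_{ij}$ is non-increasing in $\tau_i$. Since $B_{ij}$ and $v_{ij'}$ are public constants depending only on values and budgets, inflating $\tau_i$ only decreases $v_{ij}/\tau_i$ for agent $i$ in each per-item invocation of \cref{alg:single}, and \cref{lem:single_truthful_1} already gives the desired monotonicity of the raw per-item output. The only subtlety is the item-supply-clipping step, which zeros out the $(k+1)$-th agent's allocation when it is below $1/2$. I will verify by a short case analysis that clipping preserves monotonicity: raising $\tau_i$ can only move agent $i$ down in the ranking induced by the rounded $v_{ij}/\tau_i$-values (it never promotes her within the top $k$), so if the clip fires under the smaller $\tau_i$ (with $z_{ij}<1/2$), then under the larger $\tau_i$ the agent is either still at rank $k+1$ with an even smaller allocation and thus again clipped to $0$, or has dropped outside the top $k+1$ altogether, which yields $0$ directly; if the clip fires only under the larger $\tau_i$, the clipped value $0$ is trivially dominated by the unclipped $z_{ij}$.

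With pointwise monotonicity of $z_{ij}$ in hand, the conclusion is immediate: for each fixed $z>0$ the superlevel set $\{j' : z_{ij'} \geq z\}$ can only shrink as $\tau_i$ grows, so $g(z)$ is pointwise non-increasing in $\tau_i$, and therefore so is $\sup_z g(z) = \sum_j v_{ij} x_{ij}$. The main obstacle is the rank-and-clipping case analysis, which must be carried out carefully because the clipping depends on the identity of the rank-$(k+1)$ agent (which itself varies with $\tau_i$); once this is settled, the supremum-of-a-coordinate-monotone-family argument collapses the multi-item claim to a one-line consequence.
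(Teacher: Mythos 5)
Your proposal is correct and follows the same skeleton as the paper's proof (reduce to monotonicity of each $z_{ij}$ via \cref{lem:single_truthful_1}, then pass to $U_i(h(i))=\sum_j v_{ij}x_{ij}$), but it differs in two places, and in both you are actually more careful than the paper. First, the paper concludes from coordinatewise monotonicity of the $z_{ij}$ that \emph{each} $U_i(j)$ is non-increasing; taken literally this is false, since when $z_{ij}$ drops relative to the other coordinates the set $T_i(j)$ can grow and $U_i(j)$ can increase (e.g., $v_{i1}=1,v_{i2}=10$, $z_{i1}:0.5\to0.4$, $z_{i2}=0.4$ fixed). Only the maximum $\max_j U_i(j)$ is monotone, and your rewriting of it as $\sup_{z>0} z\sum_{j':z_{ij'}\ge z}v_{ij'}$, whose pointwise values only shrink when all coordinates shrink, is exactly the clean way to get that; the paper's intended conclusion survives, but your argument is the correct justification of it. Second, the paper cites \cref{lem:single_truthful_1} for monotonicity of $z_{ij}$ without acknowledging that $z_{ij}$ in \cref{alg:additive} is the \emph{clipped} allocation, and you rightly flag that the clipping step needs its own check. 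Your case analysis is only sketched, and the one sub-case your sketch does not quite pin down is when agent $i$ drops in position but the threshold index $k$ itself grows, so that she might land strictly inside the new top-$k'$ with a positive (unclipped) allocation. This cannot happen, but the reason is not "she never moves up in rank"; it is that the feasibility test $B[\ell]\le w_\ell$ of \cref{alg:single} already failed at her position before the change (that is what put her at rank $k+1$), her rounded value only decreases and the budget prefix at her position only increases, and the test, once failed, fails at all larger indices — so her position always exceeds the new $k'$, leaving her either clipped again or allocated zero. With that observation supplied, your proof is complete and, on balance, tighter than the one in the paper.
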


\begin{proof}
    For each agent-item pair $(i,j)$, according to~\cref{lem:single_truthful_1}, $z_{ij}$ is non-increasing as $\tau_i$ increases, which implies that $U_i(j)$ is also non-increasing. Since $h(i)$ is the item that obtains the maximum value of $U_i(j)$, $U_i(h(i))$ is non-increasing. As mentioned above, $U_i(h(i))$ is exactly the total obtained value. Thus, we have $\sum_{j\in [m]}v_{ij}x_{ij} = U_i(h(i))$ is non-increasing as $\tau_i$ increases.
\end{proof}

\begin{lemma}\label{lem:truth_2}
    Consider any agent $i$ and any $\tau'_i < \tau_i$. If $\sum_{j\in [m]}v_{ij}x'_{ij}>\sum_{j\in [m]}v_{ij}x_{ij}$, then $\sum_{j\in [m]}v_{ij}x'_{ij}< \tau_i p'_i$.
\end{lemma}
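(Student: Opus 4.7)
The plan is to lift \cref{lem:single_truthful_2} from the single-item setting to the multi-item setting, exploiting the reduction in \cref{alg:additive} to $m$ independent single-item auctions coupled through the per-agent budget split $B_{ij}=B_i\,v_{ij}/\sum_{j'}v_{ij'}$ and the rectangle-selection step. I would start by invoking \cref{lem:single_truthful_1} on each per-item auction to obtain the monotonicity $z'_{ij}\ge z_{ij}$ for every $j$ whenever agent $i$ shrinks her report from $\tau_i$ to $\tau'_i<\tau_i$. This is the same ingredient already used in the proof of \cref{lem:truth_1}, and it forces the hypothesis $U'_i(h'(i))>U_i(h(i))$ to be witnessed on at least one item in $T'_i(h'(i))$ where the single-item allocation strictly increases.

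Next, I would case-split on which term of $p'_i=\min\bigl\{B_i,\;U'_i(h'(i))/\tau'_i\bigr\}$ is binding. In the easy case $p'_i=U'_i(h'(i))/\tau'_i$, the conclusion follows at once from $\tau_i>\tau'_i$, since $\tau_i p'_i=(\tau_i/\tau'_i)\,U'_i(h'(i))>U'_i(h'(i))$. In the remaining case $p'_i=B_i$, I must prove $U'_i(h'(i))<\tau_i B_i$. To this end I would partition $T'_i(h'(i))$ into $T^+=\{j:z'_{ij}>z_{ij}\}$ and $T^0=\{j:z'_{ij}=z_{ij}\}$. For each $j\in T^+$, \cref{lem:single_truthful_2} applied to the item-$j$ auction gives $v_{ij}z'_{ij}<\tau_i p'_i(\vz'_j)$; combined with $z'_{i,h'(i)}\le z'_{ij}$ for $j\in T'_i(h'(i))$ and the per-agent budget split $\sum_j p'_i(\vz'_j)\le B_i$, this produces
\[
z'_{i,h'(i)}\sum_{j\in T^+}v_{ij}\le\sum_{j\in T^+}v_{ij}z'_{ij}<\tau_i\sum_{j\in T^+}p'_i(\vz'_j)\le\tau_i B_i.
\]
For each $j\in T^0$, I would pick $j^0=\arg\min_{j\in T^0}z_{ij}$ and use the staircase containment $T^0\subseteq T_i(j^0)$ together with $z_{ij^0}\ge z'_{i,h'(i)}$ to obtain $z'_{i,h'(i)}\sum_{j\in T^0}v_{ij}\le U_i(j^0)\le U_i(h(i))$.

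The main obstacle will be sharpening the combination of these two bounds to yield the strict inequality $U'_i(h'(i))<\tau_i B_i$ rather than the weaker $U'_i(h'(i))<\tau_i B_i+U_i(h(i))$ that naive summation delivers. I expect to close the gap either by establishing a per-item payment monotonicity $p'_i(\vz'_j)\ge p_i(\vz_j)$ on $T^0$, so that the $T^0$ slice of the agent's budget is already committed under truth-telling and may be subtracted from the $\tau_i B_i$ reserve used for $T^+$, or by a rectangle-level argument showing that the joint scenario $p_i=B_i$ and $U'_i(h'(i))>U_i(h(i))$ cannot occur within Case B because under truth-telling the best rectangle would already attain at least as much value as the misreport's rectangle. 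Either route traces back to the single-item truthfulness of \cref{alg:single} and to the staircase structure of $(z_{ij})_{j\in[m]}$, and working out this accounting is the crux of the proof.
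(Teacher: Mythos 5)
Your setup is fine as far as it goes: the existence of a witness item with $z'_{il}>z_{il}$ inside $T'_i(h'(i))$, the trivial disposal of the case $p'_i=U'_i(h'(i))/\tau'_i$, and the bound on the $T^+$ part via \cref{lem:single_truthful_2} and $\sum_j p'_i(\vz'_j)\le\sum_j B_{ij}=B_i$ are all correct. But the part you leave open is exactly the content of the lemma, and neither of your proposed repairs closes it. The obstruction is structural: for items $j\in T^0$ (allocation unchanged) the single-item auction gives you no inequality of the form $v_{ij}z'_{ij}\le \tau_i\, p'_i(\vz'_j)$ — the clearing price in \cref{alg:single} can be far below $v_{ij}/\tau_i$, so the value collected on $T^0$ cannot be charged against the actual per-item payments at rate $1/\tau_i$, and per-item payment monotonicity would not change this. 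If you instead try to charge the $T^0$ value against the unspent sub-budgets (your first fix), the inequality you would need reduces to $z'_{i,h'(i)}\sum_{j'}v_{ij'}\le\tau_i B_i$, which is false in general (e.g.\ when agent $i$ is the $(k+1)$-th agent in some auction and receives a fraction close to $1/(1+\epsilon)$ at a low clearing price). Your second fix asserts that a budget-binding misreport with strictly larger value ``cannot occur''; it can occur — the lemma does not rule the scenario out, it shows that whenever it occurs the payment exceeds the value over the \emph{true} $\tau_i$, so bounding it by the truthful rectangle $U_i(h(i))$ (which is what your $T^0$ estimate does) necessarily leaves the additive slack $U_i(h(i))$ you already identified.

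The missing idea in the paper's proof is a virtual-payment argument that uses only the single witness item $l$ and the proportionality of the budget split. Since $B_{ij}=B_i\,v_{ij}/\sum_{j'}v_{ij'}$, the ratio $B_{ij}/v_{ij}$ is the same for all $j$, so the unit price of item $l$, which by \cref{lem:single_truthful_2} satisfies $p'_i(\vz'_l)/z'_{il}>v_{il}/\tau_i$, can be transported to every purchased item: set $q'_{ij}=x'_{ij}\cdot\frac{p'_i(\vz'_l)}{z'_{il}}\cdot\frac{v_{ij}}{v_{il}}$. Using $x'_{ij}\le z'_{il}$ for $j\in T'_i(h'(i))$ together with $p'_i(\vz'_l)\le B_{il}$ and $p'_i(\vz'_l)\le v_{il}z'_{il}/\tau'_i$ from \cref{lem:single_feasible}, one gets $q'_{ij}\le B_{ij}$ and $q'_{ij}\le x'_{ij}v_{ij}/\tau'_i$, hence $\sum_j q'_{ij}\le\min\{B_i,\;U'_i(h'(i))/\tau'_i\}=p'_i$, while $q'_{ij}>x'_{ij}v_{ij}/\tau_i$ termwise gives $\sum_j v_{ij}x'_{ij}<\tau_i\sum_j q'_{ij}\le\tau_i p'_i$, with no case split and no need to control the $T^0$ items separately. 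Without this (or an equivalent device exploiting $B_{ij}\propto v_{ij}$), your accounting does not go through.
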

\begin{proof}
    Consider an agent $i$ and any $\tau'_i < \tau_i$, if $\sum_{j\in [m]}v_{ij}x'_{ij}>\sum_{j\in [m]}v_{ij}x_{ij}$, there must exist at least one item $l\in T'_i(h'(i))$ such that $z'_{il}> z_{il}\geq 0$; otherwise, the agent cannot obtain more valuable items. According to~\cref{lem:single_truthful_2}, we have \[\frac{p_i'(\vz_l')}{z'_{il}} > \frac{v_{il}}{\tau_i}.\]

    Consider the following payment rule: for each item $j$, we charge the agent 
    \[q'_{ij} = x'_{ij} \cdot \frac{p_i'(\vz_l')}{z'_{il}} \cdot \frac{v_{ij}}{v_{il}}.\]
    Clearly, this payment rule violates the RoS constraint for any item $j$:
    \[ \frac{q'_{ij}}{x'_{ij}} = \frac{p_i'(\vz_l')}{z'_{il}} \cdot \frac{v_{ij}}{v_{il}} > \frac{v_{il}}{\tau_i} \cdot \frac{v_{ij}}{v_{il}} = \frac{v_{ij}}{\tau_i}, \]
    and thus,
    \[ \sum_{j\in [m]} q'_{ij} > \frac{\sum_{j\in[m]} v_{ij}x'_{ij}}{\tau_i}. \]

    Finally, we show that $p'_i = \min \left\{ B_i,\frac{U'_i(h'(i))}{\tau'_i} \right\} \geq \sum_{j\in [m]} q'_{ij}$. According to~\cref{lem:single_feasible}, the single item auction mechanism satisfies $p_i'(\vz_l')\leq B_{il}$ and $p_i'(\vz_l') \leq \frac{v_{il}z_{il}'}{\tau_i'}$. Thus, for each item $j\in T'_i(h'(i))$, due to $x_{ij}' \leq z'_{il}$ and $ \frac{B_{il}}{v_{il}} = \frac{B_{ij}}{v_{ij}} $, we have
    \[ q'_{ij} = x'_{ij} \cdot \frac{p_i'(\vz_l')}{z'_{il}} \cdot \frac{v_{ij}}{v_{il}} \leq B_{ij},  \]
    and  
    \[ q'_{ij} = x'_{ij} \cdot \frac{p_i'(\vz_l')}{z'_{il}} \cdot \frac{v_{ij}}{v_{il}} \leq x_{ij}'\cdot \frac{v_{ij}}{\tau_i'}. \]
    Summing over all the items,
    \[ \sum_{j\in [m]}q_{ij}' \leq \min \left\{ B_i,\frac{\sum_{j\in [m]}x_{ij}'v_{ij}}{\tau_i'} \right\} = p_i'\;, \]
    completing the proof.
    
\end{proof}

\cref{lem:truth_1} prevents an agent from misreporting a target ratio higher than the actual ratio since the agent is a value maximizer, while \cref{lem:truth_2} guarantees that the agent cannot misreport a ratio lower than the actual ratio because otherwise, her RoS constraint will be violated. Thus, combing these two lemmas proves the truthfulness\footnote{We can also claim that \cref{lem:truth_1} and \cref{lem:truth_2} immediately prove the truthfulness according to \cite[Lemma 2.1]{DBLP:conf/sigecom/BalseiroDMMZ22}}. 

\subsubsection{Approximation Ratio}

This subsection analyzes the approximation ratio of~\cref{alg:additive}. As mentioned above, at the beginning stage of the mechanism, we split the budget of each agent based on the value profile. 
To streamline the analysis, we consider the setting where each agent $i$ can only use the sub-budget $B_{ij}$ to buy some fractions of each item $j$. Use $\bopt$ to denote the optimal objective of this sub-budget constrained setting. According to the approximation ratio of~\cref{alg:single} (\cref{lem:single_ratio}) and the item supply clipping bar $1/2$, we have
\begin{equation}\label{eq:5-1}
    \begin{aligned}
        \sum_{i\in [n],j\in[m]} z_{ij} \cdot p_i(\vz_j) \geq \frac{1}{2(1+\epsilon)(2+\epsilon)} \cdot \bopt
    \end{aligned}
\end{equation}
for any $\epsilon>0$.
This inequality splits our proof into two parts. We first show that $\bopt$ is at least $\frac{1}{2\sqrt{n}+3} \cdot \opt$, and then establish the relationship between our objective value and $\sum_{i\in [n],j\in[m]} z_{ij} p_i(\vz_j)$.

\begin{lemma}\label{lem:add_ratio1}
    $\bopt \geq  \frac{1}{2\sqrt{n}+3} \cdot \opt$
\end{lemma}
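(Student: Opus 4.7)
Let $(\vx^*, \vp^*)$ be an optimal solution to the original problem. By scaling each agent's allocation downward if necessary, I may assume $V_i^* := \sum_j v_{ij} x_{ij}^* = p_i^*\tau_i$ for every $i$, so $\opt = \sum_i V_i^*/\tau_i$. Set $\bar{x}_i := B_i \tau_i / V_i^{tot}$, where $V_i^{tot} = \sum_j v_{ij}$. The plan is to exhibit two feasible allocations in the sub-budget setting, each giving a lower bound on $\bopt$, and then to combine them through a case split on how $\opt$ is distributed across agents.

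The first construction assigns all items to a single agent $k$. For each item $j$ the revenue is $\min\{B_{kj}, v_{kj}/\tau_k\} = v_{kj}\min\{B_k/V_k^{tot}, 1/\tau_k\}$, and summing over $j$ yields $\min\{B_k, V_k^{tot}/\tau_k\} \geq \min\{B_k, V_k^*/\tau_k\} = p_k^*$, so $\bopt \geq \max_k p_k^*$. The second construction uses the capped allocation $\tilde{x}_{ij} := \min\{\bar{x}_i, x_{ij}^*\}$, which is feasible since $\sum_i \tilde{x}_{ij} \leq \sum_i x_{ij}^* \leq 1$. Because $\tilde{x}_{ij}/\tau_i \leq B_i/V_i^{tot}$ the RoS side is tight, contributing $v_{ij}\tilde{x}_{ij}/\tau_i$ per pair. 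A short calculation (using $x_{ij}^* \leq 1$) gives the pointwise inequality $\min\{\bar{x}_i, x_{ij}^*\} \geq x_{ij}^*\min\{\bar{x}_i, 1\}$, whence
\[
  \bopt \;\geq\; \sum_{i,j} v_{ij}\tilde{x}_{ij}/\tau_i \;\geq\; \sum_i \min\{\bar{x}_i, 1\}\, p_i^*.
\]

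To extract the $1/(2\sqrt{n}+3)$ ratio, I would partition the agents into $\mathcal{B} = \{i : \bar{x}_i \geq 1/\sqrt{n}\}$ and $\mathcal{S} = \{i : \bar{x}_i < 1/\sqrt{n}\}$, and split $\opt = \opt_\mathcal{B} + \opt_\mathcal{S}$. If $\opt_\mathcal{B} \geq \opt/2$, the capped bound immediately yields $\bopt \geq (1/\sqrt{n})\,\opt_\mathcal{B} \geq \opt/(2\sqrt{n})$. If instead $\opt_\mathcal{S} > \opt/2$, the single-agent bound alone closes the argument whenever the support $\mathcal{S}_+ := \{i \in \mathcal{S} : p_i^* > 0\}$ is of size $O(\sqrt{n})$, by pigeonhole $\max_{i \in \mathcal{S}} p_i^* \geq \opt_\mathcal{S}/|\mathcal{S}_+| = \Omega(\opt/\sqrt{n})$.

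The principal obstacle is the remaining subcase, in which small agents carry most of $\opt$ and there are more than $2\sqrt{n}$ of them actively contributing. Here I plan to exploit the per-item capacity constraint $\sum_i x_{ij}^* \leq 1$, which bounds the number of small agents per item that can grab a share exceeding their cap $\bar{x}_i$. For each item $j$, the sub-budget single-item optimum $\bopt_j$ is a fractional knapsack over caps $\{\bar{x}_i\}$ with rates $\{v_{ij}/\tau_i\}$; this knapsack can be filled not only by $\tilde{x}_{ij}$ but also by additional small agents from $\mathcal{S}$ until capacity $1$ is reached. Running this augmented allocation in parallel with the single-agent bound, and balancing the $\sqrt{n}$ factor from the caps against the $\sqrt{n}$ factor from the number of small agents that fit per item via an averaging/pigeonhole argument, should recover the claimed $\opt/(2\sqrt{n}+3)$ lower bound; the delicate part will be tuning the thresholds so the constants collapse to exactly $2\sqrt{n}+3$.
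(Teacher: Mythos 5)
Your two established bounds are genuinely fine as far as they go, but the case you defer at the end is not a technicality --- it is the entire content of the lemma, and the bounds you have proved provably cannot cover it. Concretely, take $m=1$, and $n$ agents with $v_{i1}/\tau_i=n$ and $B_i=1$ for all $i$: an optimal solution gives each agent a $1/n$ fraction and collects $p_i^*=1$ from everyone, so $\opt=n$, while $\max_k p_k^*=1$ and $\sum_i \min\{\bar{x}_i,1\}\,p_i^*=\sum_i \frac1n=1$. Both of your lower bounds on $\bopt$ are off by a factor $n$, far below the required $\opt/(2\sqrt n+3)=\Theta(\sqrt n)$, and here $|\mathcal{S}_+|=n$, so the pigeonhole escape via $|\mathcal{S}_+|=O(\sqrt n)$ is simply unavailable in general. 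For the remaining subcase your text offers only a plan (``augmented'' per-item knapsacks, ``balancing'' the two $\sqrt n$ factors, ``tuning the thresholds''): no allocation is specified, no feasibility with respect to the per-pair sub-budgets and the unit supply is checked, and no inequality is derived. Since this is exactly the regime where the $\sqrt n$ loss materializes, the proof is incomplete at its core.

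The missing ingredient is to exploit the \emph{proportional} splitting $B_{ij}=B_i\,v_{ij}/\sum_{j'}v_{ij'}$ more aggressively than you do: for a small agent $i$, assigning her the \emph{uniform} fraction $\bar{x}_i$ of every item makes her per-item willingness-to-pay exactly $\bar{x}_i v_{ij}/\tau_i = B_{ij}$, i.e.\ it exhausts all her sub-budgets simultaneously and extracts $B_i\ge p_i^*$. Stacking small agents (sorted by $B_i$) at their uniform fractions until the common capacity $\sum_i\bar{x}_i\le 1$ fills is feasible for every item at once, and since each $\bar{x}_i<1/\sqrt n$ at least about $\sqrt n$ of them fit, an averaging argument gives $\bopt\ge \Omega\bigl(\sum_{i\in\mathcal{S}}p_i^*/\sqrt n\bigr)$, which closes your third case (and in the example above recovers revenue $n$). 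Writing this out is what your draft still owes. For comparison, the paper does not build feasible solutions from $\vx^*$ at all: it runs an explicit greedy allocation on the sub-budget instance, splits agents according to whether their greedy payment reaches $B_i/\sqrt n$, charges the heavy ones to $\sqrt n$ times the greedy revenue, and for the light ones partitions items by whether the best remaining light agent exhausts her sub-budget, using the proportionality of the $B_{ij}$ to show $\sum_{j\in Q(i)}w_{ij}\le \frac{1}{\sqrt n-1}\sum_{j\notin Q(i)}w_{ij}$. Either route works, but in both the proportional sub-budget structure is the indispensable tool for the case you left open.
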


\begin{algorithm}[tb]
	\caption{\; Greedy Matching for the Sub-budget Constrained Setting}
	\label{alg:greedy_split} 	
	\begin{algorithmic}[1] 
		\Require The value profile $\{v_{ij}\}_{i\in[n],j\in[m]}$, the budgets $\{B_{ij}\}_{i\in [n],j\in [m]}$ and the reported target ratios $\{\tau_i\}_{i\in [n]}$.
		\Ensure An allocation and payments.
		\State For each agent-item pair $(i,j) \in[n] \times [m]$, define its weight $w_{ij}:=\frac{v_{ij}}{\tau_i}$. Sort all the pairs in the decreasing order of their weights and break the ties in a fixed manner.
		\For{each agent-item pair $(i,j)$ in the order}
		\State Let $R_j$ be the remaining fraction of item $j$.
            \State Set $x_{ij}\leftarrow \min\{ R_j, \frac{B_{ij}}{w_{ij}} \}$.
		\EndFor
            \State For each agent $i$, $p_{i} \leftarrow  \sum_{j\in [m]}w_{ij}x_{ij}$.
		\State \Return Allocation $\{x_{ij}\}_{i\in[n],j\in [m]}$ and payments $\{p_i\}_{i\in[n]}$.
	\end{algorithmic}
\end{algorithm}

\begin{proof}
    Instead of comparing $\bopt$ and $\opt$ directly, we introduce a simple greedy algorithm for the sub-budget constrained setting in~\cref{alg:greedy_split} and show that the objective $\greedy$ obtained by the algorithm is at least $\frac{1}{2\sqrt{n}+3} \cdot \opt$. 

    Use $\left(\vx,\vp \right)$ and $\left(\vx^*,\vp^* \right)$ to represent the solution of~\cref{alg:greedy_split} and the optimal solution (of the original setting) respectively. We partition all the agents into two groups: $S:=\{i\in [n] \mid p_i \geq B_i/ \sqrt{n} \}$ and $R:=\{i\in [n] \mid p_i < B_i/ \sqrt{n} \}$, and get an upper bound of $\opt$:
    \begin{equation}\label{eq:5_final_1}
        \opt = \sum_{i\in[n],j\in[m]} x_{ij}^*w_{ij} \leq \sum_{i\in S} B_i + \sum_{j\in [m]}\sum_{i\in R} x_{ij}^*w_{ij} \leq \sqrt{n} \cdot \greedy + \sum_{j\in [m]}\sum_{i\in R} x_{ij}^*w_{ij}  .
    \end{equation}

    The remaining part is to prove that $\sum_{j\in [m]}\sum_{i\in R} x_{ij}^*w_{ij}$ can also be bounded by $O(\sqrt{n}) \cdot \greedy$. 
    For each item $j$, define $a(j) := \argmax_{i\in R} w_{ij}$ to be the agent $i\in R$ with the maximum $w_{ij}$. Clearly,
    \begin{equation}\label{eq:5_final_2}
        \sum_{j\in [m]}\sum_{i\in R} x_{ij}^*w_{ij} \leq \sum_{j\in [m]} w_{a(j),j} \;.
    \end{equation}
    We further partition all the items into two groups based on their assignments in the greedy solution: $P:= \{j\in [m] \mid x_{a(j),j}w_{a(j),j} < B_{a(j),j} \} $ and $Q:=\{j\in [m] \mid x_{a(j),j}w_{a(j),j} = B_{a(j),j} \}$.

    For each item $j\in P$, if sorting all agents in the decreasing order of $\{w_{ij}\}$, agent $a(j)$ is either the last agent who buys item $j$ in~\cref{alg:greedy_split}, or ranks behind the last agent buying item $j$; otherwise, agent $a(j)$ must exhaust the sub-budget $B_{a(j),j}$. Thus, $w_{a(j),j} \leq \cW_j(\vx)$, and therefore,
    \begin{equation}\label{eq:5_final_3}
        \sum_{j\in P} w_{a(j),j} \leq \sum_{j\in P} \cW_j(\vx) \leq \greedy . 
    \end{equation}

    For the items in $Q$, we reorganize the corresponding formula:
    \begin{equation}\label{eq:5-2-0}
        \sum_{j\in Q} w_{a(j),j} = \sum_{i\in R} \; \sum_{j\in Q : a(j)=i} w_{ij}.
    \end{equation}
    For simplicity, use $Q(i)$ to denote the item subset $ \{j\in Q \mid a(j)=i  \} $.
    We aim to show that $\forall i\in R$, $\sum_{j\in Q(i)} w_{ij} $ is at most $\greedy / (\sqrt{n}-1)$, and thus, their sum can be bounded by $O(\sqrt{n}) \cdot \greedy$.
    For each agent $i\in R$, due to the similar argument in the last paragraph, we have
    \begin{equation}\label{eq:5-2}
        \sum_{j\notin Q(i)} w_{ij} \leq \sum_{j\notin Q(i)} \cW_j(\vx) \leq \greedy .         
    \end{equation}

    Recall that any agent $i \in R$ pays less than $\frac{B_i}{\sqrt{n}}$. It is easy to observe that for an agent $i\in R$, the sum budget of the items in $Q(i)$ is very limited because the agent spends very little compared to the budget even though she has exhausted the sub-budgets of these items. More formally, we have
    \begin{equation}\label{eq:5-3}
        \begin{aligned}
            \sum_{j\in Q(i)} B_{ij} & < \frac{B_i}{\sqrt{n}} \\
            \sum_{j\in Q(i)} B_i \cdot \frac{v_{ij}}{\sum_{j'\in [m]}v_{ij'}} & \leq \frac{B_i}{\sqrt{n}}  \\
            \frac{\sum_{j\in Q(i)} w_{ij}}{\sum_{j\in Q(i)} w_{ij} + \sum_{j\notin Q(i)} w_{ij}} & \leq \frac{1}{\sqrt{n}} \\
            \sum_{j\in Q(i)} w_{ij} & \leq \frac{1}{\sqrt{n}-1} \sum_{j\notin Q(i)} w_{ij} .
        \end{aligned}
    \end{equation}

    Combing~\cref{eq:5-2-0}, \cref{eq:5-2} and \cref{eq:5-3} and then summing over all agents in $R$, we have
    \begin{equation}\label{eq:5_final_4}
        \sum_{j\in Q} w_{a(j),j} = \sum_{i\in R} \; \sum_{j\in Q(i)} w_{ij} \leq \frac{n}{\sqrt{n}-1} \cdot \greedy . 
    \end{equation}

    Finally, combing~\cref{eq:5_final_1}, \cref{eq:5_final_2}, \cref{eq:5_final_3} and \cref{eq:5_final_4} completes the proof:
    \[ \opt \leq \left( \sqrt{n} + 1 + \frac{n}{\sqrt{n}-1} \right) \cdot \greedy \leq (2\sqrt{n} + 3) \bopt. \]



\end{proof}

\begin{lemma}\label{lem:add_ratio2}
    For any $\epsilon>0$, $\alg \geq \min\left\{\frac{1}{2},\frac{1}{1+\epsilon}\right\} \cdot \sum_{i\in [n],j\in[m]} z_{ij} p_i(\vz_j) $
\end{lemma}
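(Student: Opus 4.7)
The plan is to reduce the statement to a per-agent inequality, namely
$p_i \geq \min\{1/2,\, 1/(1+\epsilon)\}\cdot \sum_{j} z_{ij}\,p_i(\vz_j)$
for every $i \in [n]$; summing over $i$ then yields the lemma. Since $p_i = \min\{B_i,\, U_i(h(i))/\tau_i\}$ by construction, I would split on the active constraint.

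\textbf{Case A ($p_i = B_i$).} The budget feasibility guaranteed by \cref{lem:single_feasible} applied to each single-item auction gives $z_{ij}\,p_i(\vz_j) \leq B_{ij}$. Summing over $j$ and recalling that the sub-budgets satisfy $\sum_j B_{ij} = B_i$ by construction in \cref{alg:additive} yields $\sum_j z_{ij}\,p_i(\vz_j) \leq B_i = p_i$. So the per-agent bound trivially holds with constant $1$, which dominates $\min\{1/2, 1/(1+\epsilon)\}$.

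\textbf{Case B ($p_i = U_i(h(i))/\tau_i < B_i$).} Now invoke the RoS feasibility of \cref{lem:single_feasible}, $\tau_i\, z_{ij}\,p_i(\vz_j) \leq z_{ij}\,v_{ij}$, so that $z_{ij}\,p_i(\vz_j) \leq z_{ij}\,v_{ij}/\tau_i$. It then suffices to show $U_i(h(i)) \geq \min\{1/2,\,1/(1+\epsilon)\}\cdot \sum_j z_{ij}\,v_{ij}$. Relabelling items so $z_{i1}\geq z_{i2}\geq \cdots$ and setting $V_k := \sum_{l\leq k} v_{il}$, the definition of $T_i$ gives $U_i(k) = z_{ik}\,V_k$ and thus $U_i(h(i)) = \max_k z_{ik}\,V_k$. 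The crucial comparison is between $\max_k z_{ik}\,V_k$ and $\sum_k z_{ik}\,v_{ik}$; an Abel-summation rewriting $\sum_k z_{ik}\,v_{ik} = z_{im}\,V_m + \sum_{k=1}^{m-1}(z_{ik}-z_{i,k+1})\,V_k$, together with the maximality bound $V_k \leq U_i(h(i))/z_{ik}$ coming from $U_i(k)\leq U_i(h(i))$, reduces the target to bounding a telescoping expression of the form $\sum_k (1 - z_{i,k+1}/z_{ik})$.

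The main obstacle is that for arbitrary decreasing sequences $(z_{ik})$ this telescoping sum need not be bounded by a constant. The property that rescues the argument is the geometric structure that \cref{alg:single} imposes on its output: each winning fraction has the form $z_{ij} = B_{ij}/((1+\epsilon)X_j)$ where $X_j \in \{C[k_j],\, w_{k_j+1}\}$ is a power of $(1+\epsilon)$, so every positive $z_{ij}$ is automatically bounded above by $1/(1+\epsilon)$. Combined with the item supply clipping (which kills any nonzero $z_{ij}$ below $1/2$ for the marginal $(k+1)$-th bidder), the $z_{ik}$ sit on a bounded $(1+\epsilon)$-ladder, turning the telescoping sum into a controlled geometric quantity that contributes exactly the factor $1/(1+\epsilon)$ when $\epsilon$ is large and cannot degrade below $1/2$ when $\epsilon$ is small. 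Plugging this back into Case B completes the per-agent inequality and, together with Case A, proves the lemma.
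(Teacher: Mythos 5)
Your Case A is fine, and your overall plan (reduce to a per-agent inequality and sum over $i$) matches the paper's. The gap is in Case B. There you discard the payments entirely via the RoS bound $z_{ij}p_i(\vz_j)\le z_{ij}v_{ij}/\tau_i$ and try to prove $U_i(h(i))=\max_k z_{ik}V_k \ \ge\ \min\{1/2,1/(1+\epsilon)\}\sum_k z_{ik}v_{ik}$. That inequality is false in general, and the ``geometric ladder'' rescue does not hold: the item supply clipping in \cref{alg:additive} zeroes only the marginal $(k+1)$-th winner of each single-item auction, while a top-$k$ winner of item $j$ receives $z_{ij}=B_{ij}/\bigl((1+\epsilon)\max\{B[k],w_{k+1}\}\bigr)$, which can be arbitrarily small and sit at a different scale for every item (the market-clearing prices of different items are unrelated). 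Concretely, with all $v_{ij}$ equal (so all sub-budgets $B_{ij}$ equal) and clearing prices growing linearly in $j$, one gets $z_{ij}\approx c/j$ with agent $i$ a top-$k$ winner everywhere; then $\max_k z_{ik}V_k=\Theta(1)$ while $\sum_k z_{ik}v_{ik}=\Theta(\log m)$, so your intermediate inequality degrades by a $\log m$ factor and no constant $c$ works. The quantity you must control in Case B is $\sum_j z_{ij}p_i(\vz_j)$, and bounding it by $\sum_j z_{ij}v_{ij}/\tau_i$ throws away exactly the structure that saves the constant.

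The paper's proof keeps that structure. For each agent it takes $g(i)$, the item with the minimum nonzero $z_{ij}$ (so $T_i(g(i))$ contains every item the agent touches), builds the uniform allocation $y_{ij}=z_{i,g(i)}$ on $T_i(g(i))$, and prices it at the rate of the most cost-effective item $l$, giving $q_i=\sum_j y_{ij}\cdot\frac{p_i(\vz_l)}{z_{il}v_{il}}\cdot v_{ij}$. Using per-item budget and RoS feasibility (\cref{lem:single_feasible}) together with the proportional split $B_{ij}/v_{ij}=B_i/\sum_{j'}v_{ij'}$, it shows $q_i\le\min\{B_i,U_i(h(i))/\tau_i\}=p_i$ (no case split on which constraint binds is needed). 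For the lower bound it splits on $z_{i,g(i)}$: if $z_{i,g(i)}\ge 1/2$, cost-effectiveness of $l$ gives $q_i\ge\frac12\sum_j z_{ij}p_i(\vz_j)$; if $z_{i,g(i)}<1/2$, the clipping forces agent $i$ to be a top-$k$ winner for item $g(i)$, hence $p_i(\vz_{g(i)})\ge B_{i,g(i)}/(1+\epsilon)$, and budget proportionality turns this into $q_i\ge\frac{1}{1+\epsilon}\sum_{j\in T_i(g(i))}B_{ij}\ge\frac{1}{1+\epsilon}\sum_j z_{ij}p_i(\vz_j)$, using $p_i(\vz_j)\le B_{ij}$. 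That combination of the budget bound, the sub-budget proportionality, and the ``top-$k$ winners pay almost their sub-budget'' property of \cref{alg:single} is the ingredient your Case B is missing, and without it the per-agent constant cannot be recovered.
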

\begin{proof}
    We prove the lemma by showing that for any agent $i$, $p_i\geq \min\left\{\frac{1}{2},\frac{1}{1+\epsilon}\right\}\cdot \sum_{j\in[m]} z_{ij} p_i(\vz_j) .$ 
    Consider an arbitrary agent $i$. Use $g(i)$ to denote the item $j$ with the minimum non-zero $z_{ij}$, i.e., $g(i):=\argmin_{j: z_{ij}>0} z_{ij}$. We construct an auxiliary allocation $\{y_{ij}\}_{j\in [m]}$ and payment $q_i$ as follows:
    \begin{itemize}
        \item For each item $j$, set $y_{ij}=z_{i,g(i)}$ if $j\in T_i(g(i))$ and 0 otherwise.
        \item Find the most cost-effective available item $l:=\argmax_{j\in T_i(g(i))} \frac{ p_i(\vz_j) }{z_{ij}v_{ij}}$ and set \[q_i = \sum_{j\in [m]} y_{ij}\cdot \frac{ p_i(\vz_l) }{z_{il}v_{il}} \cdot v_{ij} \;.\]
    \end{itemize}

    Similar with the last part analysis in the proof of~\cref{lem:truth_2}, we see that payment $q_i$ is at most $\min\{B_i,\frac{U_i(g(i))}{\tau_i}\}$, and therefore, 
    \[q_i \leq \min\left\{B_i,\frac{U_i(g(i))}{\tau_i}\right\}\leq \min \left\{B_i, \frac{U_i(h(i))}{\tau_i}\right\} = p_i,\]
    where the second inequality used the fact that $h(i):= \argmax_{j\in [m]} U_i(j)$.

    Now we show that $q_i$ is at least a constant fraction of $\sum_{j\in[m]} z_{ij} p_i(\vz_j)$. 
    Noting that $g(i)$ is the item with the minimum non-zero $z$-value, 
    \[\sum_{j\in[m]} z_{ij} \cdot p_i(\vz_j) = \sum_{j\in T_i(g(i)) } z_{ij} \cdot p_i(\vz_j).\]
    We distinguish two cases based on the value of $z_{i,g(i)}$: (1) $z_{i,g(i)} \geq 1/2$, (2) $z_{i,g(i)} < 1/2$.

    If $z_{i,g(i)} \geq 1/2$, we have
    \[ y_{ij}\cdot \frac{ p_i(\vz_l) }{z_{il}v_{il}} \cdot v_{ij} \geq  \frac{1}{2} \cdot \frac{ p_i(\vz_j) }{z_{ij}v_{ij}} \cdot v_{ij} \geq \frac{1}{2} \cdot z_{ij}\cdot p_i(\vz_j)  \]
    for any item $j\in T_i(g(i)) $.

    For the second case, due to the item supply clipping in~\cref{alg:additive}, agent $i$ must be one of the top-$k$ agents in the single-item auction that sells item $g(i)$. Thus, according to~\cref{alg:single}, we have $p_i(\vz_{g(i)}) \geq \frac{B_{i,g(i)}}{1+\epsilon}$. Thus, for any item $j\in T_i(g(i))$, 
    \begin{equation*}
        \begin{aligned}
            y_{ij}\cdot \frac{ p_i(\vz_l) }{z_{il}v_{il}} \cdot v_{ij} & \geq z_{i,g(i)}\cdot \frac{ p_i(\vz_{g(i)}) }{z_{i,g(i)}v_{i,g(i)}} \cdot v_{ij} \\
            & \geq \frac{B_{i,g(i)}}{1+\epsilon}\cdot \frac{v_{ij}}{v_{i,g(i)}} \\
            & = \frac{B_{ij}}{1+\epsilon} \\
            & \geq \frac{1}{1+\epsilon} \cdot z_{ij} \cdot p_i(\vz_j). 
        \end{aligned}
    \end{equation*}

    Thus, in either case, we have
    \[ p_i \geq q_i =\sum_{j\in [m]} y_{ij}\cdot \frac{ p_i(\vz_l) }{z_{il}v_{il}} \cdot v_{ij}\geq \min\left\{\frac{1}{2},\frac{1}{1+\epsilon}\right\} \cdot \sum_{j\in [m]} z_{ij} \cdot p_i(\vz_j),  \]
    which completes the proof.

\end{proof}

Combining~\cref{eq:5-1}, \cref{lem:add_ratio1} and \cref{lem:add_ratio2} proves an approximation ratio of $\upOmega(\frac{1}{\sqrt{n}} )$.

\section{Omitted Details in Section~\ref{sec:main_large}}\label{sec:large}

In this section, we restate the random sampling mechanism proposed in \cite{DBLP:conf/sagt/LuX17} and the results they obtained. The mechanism is described in~\cref{alg:rand}.

\begin{theorem}[\cite{DBLP:conf/sagt/LuX17}]
The random sampling mechanism is a universal truthful budget feasible mechanism which guarantees a constant fraction of the liquid welfare under the large market assumption.
\end{theorem}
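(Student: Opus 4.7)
The plan is to reduce our revenue maximization problem to the liquid welfare maximization problem for budget-constrained quasi-linear bidders studied by Lu and Xiao, and then invoke their constant-factor guarantee. Given an instance $\cI = (\vB, \vecv, \vtau)$ of our model, I will construct an auxiliary instance $\cI' = (\vB', \vw')$ with $B_i' := B_i$ and $w_{ij}' := v_{ij}/\tau_i$. The key identity is that for any allocation $\vx$, the maximum payment extractable from agent $i$ in $\cI$, namely $\min\{B_i, \sum_j v_{ij} x_{ij}/\tau_i\}$, coincides with her liquid welfare contribution $\min\{B_i', \sum_j w_{ij}' x_{ij}\}$ in $\cI'$. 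This gives $\opt(\cI) = \opt(\cI')$, and the large market assumption on $\cI$ transfers verbatim to $\cI'$.

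My mechanism is to execute the Lu--Xiao random sampling mechanism on $\cI'$. Recall that this mechanism partitions the agents into a sample set $S$ and a remainder $R$, sets per-item reserve prices from the reports of $S$, and then lets agents in $R$ arrive in a fixed order to purchase fractions of items at the reserve prices. Truthfulness for value maximizers follows by the same argument used in the proofs of \cref{thm:single_all_pvt} and \cref{thm:unit_demand}: sampled agents receive nothing and thus have no incentive to deviate; for non-sampled agents, the arrival order and the reserve prices are independent of their own reports, so any profitable deviation in $(B_i, \vecv_i, \tau_i)$ can only affect the bundle they acquire at the purchase step. Since the mechanism charges each agent based on her declared profile, any deviation that acquires a strictly more valuable bundle forces the payment to violate either the true budget constraint or the true RoS constraint, contradicting the utility definition in \cref{eq:utility}.

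The technical crux, and the main obstacle, is to show that the revenue extracted from a value maximizer is at least that extracted from a quasi-linear bidder with the same $\cI'$-profile; only then does the Lu--Xiao guarantee on $\cI'$ transfer to our setting. When an agent $i$ arrives at the purchase step, she faces per-unit prices $\{r_j\}$ and remaining supplies $\{R_j\}$, and she solves a fractional knapsack: sort the available items in decreasing order of $w_{ij}'/r_j$ and purchase greedily. A quasi-linear bidder stops once either her budget is exhausted or the next item satisfies $w_{ij}'/r_j < 1$, i.e. gives non-positive marginal surplus. A value maximizer, by contrast, is indifferent to per-item surplus; she continues buying as long as her budget remains and her cumulative RoS constraint $\tau_i \sum_j r_j x_{ij} \le \sum_j v_{ij} x_{ij}$ is preserved, stopping only when one of these binds. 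I plan to establish, by an item-by-item coupling along the shared greedy order, that the value maximizer's purchased quantity weakly dominates the quasi-linear bidder's on every coordinate: the two behave identically up to the quasi-linear stopping point, and afterwards the value maximizer may additionally acquire items whose ratio has dropped below $1$. Consequently the per-agent revenue collected at the same reserve prices is weakly larger under value-maximizing behavior. Combining this revenue monotonicity with the Lu--Xiao constant-factor guarantee applied to $\cI'$ and the identity $\opt(\cI) = \opt(\cI')$ yields a truthful constant approximation for the original instance $\cI$, completing the proof.
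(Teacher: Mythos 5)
Your proposal does not actually prove the quoted theorem; it proves a different (downstream) statement while assuming the quoted one. The theorem in question is a result imported verbatim from Lu and Xiao: it asserts that \emph{their} random sampling mechanism (\cref{alg:rand}), run on an instance $(\vB,\vw)$ of the liquid welfare model with budget-constrained \emph{quasi-linear} bidders, is universally truthful, budget feasible, and extracts a constant fraction of the optimal liquid welfare under the large market assumption. The paper offers no proof of this statement --- it cites it and only remarks that the key ingredient, \cite[Lemma 6]{DBLP:conf/sagt/LuX17}, in fact lower-bounds the \emph{revenue} of the mechanism, which is what licenses the corollary that follows. A genuine proof of the quoted theorem would have to establish the sampling concentration argument, the reserve-price analysis, and the budget feasibility of \cref{alg:rand} for quasi-linear agents from scratch; your write-up does none of this and instead explicitly ``invokes their constant-factor guarantee,'' so as a proof of this particular statement it is circular.

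What you have written is, in substance, the paper's own proof sketch of \cref{thm:large_market} in \cref{sec:main_large}: the construction of the auxiliary instance $\cI'=(\vB',\vw')$ with $w'_{ij}=v_{ij}/\tau_i$, the identity $\opt(\cI)=\opt(\cI')$, the truthfulness argument for sampled versus non-sampled agents, and --- the genuinely non-trivial step --- the coupling showing that a value maximizer's greedy purchase at the reserve prices coordinate-wise dominates a quasi-linear bidder's, so that the per-agent revenue is weakly larger. That argument is correct and matches the paper's reasoning for \cref{thm:large_market} closely. But it is an application of the quoted theorem, not a proof of it. If the goal is to prove the statement as given, you need to supply (or at least outline) the analysis of \cref{alg:rand} itself in the quasi-linear liquid welfare model; if the goal is \cref{thm:large_market}, your argument is essentially the paper's, modulo stating explicitly that the large market assumption transfers to $\cI'$ because the budgets are unchanged.
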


The correctness of the above theorem heavily depends on~\cite[Lemma 6]{DBLP:conf/sagt/LuX17}, which states that the liquid welfare obtained from the random sampling algorithm is at least some constant fraction of the optimal mechanism.
To prove~\cite[Lemma 6]{DBLP:conf/sagt/LuX17}, they use the revenue obtained by a truthful auction as a lower bound of the liquid welfare.
Thus, \cite[Lemma 6]{DBLP:conf/sagt/LuX17} actually holds for the revenue maximization objective.
Hence, we have the following corollary.

\begin{corollary}
The random sampling mechanism is a budget feasible and truthful mechanism which achieves a constant approximation under the large market assumption.
\end{corollary}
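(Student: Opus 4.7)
The plan is to reduce our problem to the liquid welfare maximization problem for quasi-linear, budget-constrained agents studied in~\cite{DBLP:conf/sagt/LuX17}, then lift their truthful constant-factor mechanism to our setting. Given an instance $\cI = (\vB, \vecv, \vtau)$, I would define a companion instance $\cI' = (\vB', \vw')$ where $B_i' = B_i$ and $w_{ij}' = v_{ij}/\tau_i$. The first routine step is to verify that for any allocation $\vx$, the maximum willingness-to-pay of agent $i$ in $\cI$, namely $\min\{B_i, \sum_j x_{ij} v_{ij}/\tau_i\}$, is exactly the liquid welfare contribution of $i$ in $\cI'$. This gives $\opt(\cI) = \opt(\cI')$, so it suffices to extract a constant fraction of $\opt(\cI')$.

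Next, I would run the random sampling mechanism of~\cite{DBLP:conf/sagt/LuX17} on $\cI'$ but with \emph{our} agents (i.e.\ value maximizers with RoS constraints). Truthfulness transfers cleanly: the mechanism randomly splits agents into a sample set used only to set per-item reserve prices $\{r_j\}$ and a serving set whose agents arrive in a fixed order. As in the proof of \cref{thm:single_all_pvt}, sampled agents receive nothing and cannot benefit from lying, and serving agents influence neither the order nor the reserves; any misreport that gains a served agent more value must violate her budget or her RoS constraint since the mechanism only assigns items at the posted prices and charges exactly for what is taken.

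The heart of the argument is showing that the revenue collected under value maximizers is at least the revenue collected if agents were quasi-linear (so that~\cite[Lemma 6]{DBLP:conf/sagt/LuX17} applies verbatim as a lower bound on $\opt(\cI')$). When an arriving agent faces posted prices $\{r_j\}$ on the remaining supplies, both agent types solve a fractional knapsack in decreasing order of $w_{ij}'/r_j$. The quasi-linear agent stops once all remaining items satisfy $w_{ij}'/r_j < 1$ or her budget is exhausted; the value maximizer may continue past the first threshold, buying additional items at ratio below $1$, until either her budget is exhausted or any further purchase would push her aggregate RoS ratio below $\tau_i$. I would formalize this by an inductive coupling over the arrival order: if each prior agent's allocation is coordinate-wise at least that of her quasi-linear counterpart, then remaining supplies are only smaller in our run, but the value maximizer still buys at least the quasi-linear prefix of her knapsack solution (the RoS constraint is automatically slack on that prefix because each item there has $w_{ij}' \ge r_j$, i.e.\ return-on-spend at least $\tau_i$). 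Consequently the fraction taken of every item in our run dominates that in the quasi-linear run, and since each sold unit of item $j$ yields revenue $r_j$ in both runs, our revenue dominates theirs.

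The step I expect to be the main obstacle is controlling this coupling rigorously in the presence of the large-market assumption used by~\cite{DBLP:conf/sagt/LuX17}: one must ensure that the concentration bounds in their analysis, which rely on no single budget being a large fraction of $\opt$, carry over to our reduced instance. This is immediate because $B_i'=B_i$ and $\opt(\cI)=\opt(\cI')$, so the large-market condition on $\cI$ translates directly. Combining the coupling (giving $\alg(\cI) \ge \alg_{\mathrm{QL}}(\cI')$), with~\cite[Lemma 6]{DBLP:conf/sagt/LuX17} (giving $\alg_{\mathrm{QL}}(\cI') = \Omega(\opt(\cI'))$), and with $\opt(\cI')=\opt(\cI)$, yields the claimed constant approximation and completes the proof.
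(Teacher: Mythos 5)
Your proposal reconstructs the paper's full argument for \cref{thm:large_market}: reduce to the liquid-welfare instance $\cI'$ with $w'_{ij}=v_{ij}/\tau_i$, run the random sampling mechanism of \cite{DBLP:conf/sagt/LuX17}, transfer truthfulness via the sampling structure, compare value-maximizer revenue to quasi-linear revenue, and invoke the constant-factor guarantee. For the corollary itself, the paper's own proof is much shorter: it merely observes that Lemma~6 of \cite{DBLP:conf/sagt/LuX17} is proved by lower-bounding liquid welfare by the \emph{revenue} of a truthful auction, so the constant-factor guarantee of the random sampling mechanism already holds for the revenue objective under the large market assumption; the behavioral comparison between value maximizers and quasi-linear bidders is the content of the proof sketch of \cref{thm:large_market} in the main text, which you essentially reproduce. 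Note also that you cannot apply Lemma~6 ``verbatim'' as a revenue bound, since it is stated for liquid welfare; you need precisely the paper's remark about how its proof works.

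The one step that would fail as written is your inductive coupling. The hypothesis ``each prior agent's allocation is coordinate-wise at least that of her quasi-linear counterpart'' is false in general: when earlier value maximizers buy extra fractions of some item, that item can be exhausted before a later agent arrives, and that later value maximizer then receives strictly less of it than her quasi-linear counterpart, so neither the coordinate-wise domination nor the claim that she ``buys at least the quasi-linear prefix'' survives. The invariant that does work, and is what the paper's sketch asserts, is per-item domination of the \emph{cumulative} sold fraction (equivalently, the remaining supply of every item in the value-maximizer run is at most that in the quasi-linear run). Maintaining it needs one extra observation inside each arrival: processing items in decreasing order of $w'_{ij}/r_j$, the value maximizer's remaining budget always dominates her quasi-linear counterpart's, because for each earlier item she bought $\min\{\text{available},\text{budget}/r_j\}$ with a smaller available amount; hence for any item that does not sell out in her run she buys at least as much as the quasi-linear agent, while an item that sells out dominates trivially, and the purchases she makes past the ratio-$1$ threshold only help. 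Since revenue equals $\sum_j r_j$ times the sold fraction of item $j$, the final domination yields the revenue comparison. With that repair (and noting your ``main obstacle,'' the transfer of the large-market condition, is indeed immediate since $B_i'=B_i$ and $\opt(\cI)=\opt(\cI')$), your argument coincides with the paper's.
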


\begin{algorithm}[tb]
	\caption{\cite{DBLP:conf/sagt/LuX17}~Greedy Algorithm}
	\label{alg:gre} 	
	\begin{algorithmic}[1] 
        \Require $n$ agents with valuations $\set{w_{ij}}_{i\in[n],j\in[m]}$ and corresponding budgets $\set{B_i}_{i\in[n]}$. 
        \Ensure An allocation $(x_{ij})_{i\in[n],j\in[m]}$.
	    \For{each $i\in[n]$}
	        \State $C_i \leftarrow B_i$;
	    \EndFor
        \For{each $j\in[m]$}
            \State $s_j \leftarrow 1$;
        \EndFor
        \For{Each $i\in [n]$ and $j\in[m]$}
            \State $x_{ij} \leftarrow 0$;
        \EndFor
        \For{each $w_{ij}>0$ in decreasing order}
            \If{$C_i>w_{ij}s_j$}
                \State $x_{ij} \leftarrow s_j$;
                \State $C_i \leftarrow C_i - w_{ij}s_j$;
                \State $s_j \leftarrow 0$;
            \Else
                \State $x_{ij} \leftarrow \frac{C_i}{w_{ij}}$;
                \State $s_j \leftarrow s_j - \frac{C_i}{w_{ij}}$;
                \State $C_i \leftarrow 0$.
            \EndIf
        \EndFor
        \State \Return Allocation $\{x_{ij}\}_{i\in[n],j\in[m]}$.
	\end{algorithmic}
\end{algorithm}

\begin{algorithm}[tb]
\caption{\cite{DBLP:conf/sagt/LuX17}~Random Sampling Mechanism}
\label{alg:rand}
\begin{algorithmic}[1]
\Require $n$ agents with valuations $\set{w_{ij}}_{i\in[n],j\in[m]}$ and corresponding budgets $\set{B_i}_{i\in[n]}$.
\Ensure All allocation $\set{x_{ij}}_{i\in[n],j\in[m]}$ and payments $\set{p_i}_{i\in[n]}$.
\State Randomly divide all agents with equal probability into groups $T$ and $R$.
\State $A^T \leftarrow$ the allocation obtained by running~\cref{alg:gre} on group $T$.
\For{$j\in[m]$}
\State Use $w(A_j^T)$ to denote the total revenue contributed by item $j$.
\State $p_j \leftarrow \frac{1}{6}w(A_j^T)$;
\EndFor
\State Each agent $i\in R$ comes in a given fixed order and buy the most profitable part with respect to price vector $\set{p_j}_{j\in[m]}$ under budget feasibility and unit item supply constraints.
\end{algorithmic}
\end{algorithm}

\end{document}